\theoremstyle{plain}
\newtheorem{theorem}{Theorem}[section]
\newtheorem{lemma}[theorem]{Lemma}
\theoremstyle{definition}
\newtheorem{definition}[theorem]{Definition}
\theoremstyle{remark}
\DeclareMathOperator*{\argmax}{arg\,max}
\DeclareMathOperator*{\argmin}{arg\,min}
\newcommand{\B}{\mathcal{B}}
\newcommand{\Ocal}{\mathcal{O}}
\newcommand{\D}{\mathcal{D}}
\newcommand{\Asoftmax}{\mathbb{A}_{\mathrm{softmax}}}
\newcommand{\Abb}{\mathbb{A}}
\newcommand{\U}{\mathcal{U}}
\newcommand{\wf}{\widetilde{f}}
\newcommand{\Acal}{\mathcal{A}}
\newcommand{\Fcal}{\mathcal{F}}
\title{Neural Design for Genetic Perturbation Experiments}
\author{Aldo Pacchiano  \\
Microsoft Research NYC
\And
Drausin Wulsin \\
Immunai \\
\And
Robert A. Barton \\
Immunai \\
\And
Luis Voloch\\
Immunai \\
}
\begin{document}

\maketitle

\begin{abstract}
The problem of how to genetically modify cells in order to maximize a certain cellular phenotype has taken center stage in drug development over the last few years (with, for example, genetically edited CAR-T, CAR-NK, and CAR-NKT cells entering cancer clinical trials). Exhausting the search space for all possible genetic edits (perturbations) or combinations thereof is infeasible due to cost and experimental limitations. This work provides a theoretically sound framework for iteratively exploring the space of perturbations in pooled batches in order to maximize a target phenotype under an experimental budget. Inspired by this application domain, we study the problem of batch query bandit optimization and introduce the Optimistic Arm Elimination ($\mathrm{OAE}$) principle designed to find an almost optimal arm under different functional relationships between the queries (arms) and the outputs (rewards). We analyze the convergence properties of $\mathrm{OAE}$ by relating it to the Eluder dimension of the algorithm's function class and validate that $\mathrm{OAE}$ outperforms other strategies in finding optimal actions in experiments on simulated problems, public datasets well-studied in bandit contexts, and in genetic perturbation datasets when the regression model is a deep neural network. OAE also outperforms the benchmark algorithms in 3 of 4 datasets in the GeneDisco experimental planning challenge. 
\end{abstract}

\section{Introduction}
\label{section::introduction}

We are inspired by the problem of finding the genetic perturbations that maximize a given function of a cell (a particular biological pathway or mechanism, for example the proliferation or exhaustion of particular immune cells) while performing the least number of perturbations required. In particular, we are interested in prioritizing the set of genetic knockouts (via shRNA or CRISPR) to perform on cells that would optimize a particular scalar cellular phenotype. Since the space of possible perturbations is very large (with roughly 20K human protein-coding genes) and each knockout is expensive, we would like to order the perturbations strategically so that we find one that optimizes the particular phenotype of interest in fewer total perturbations than, say, just brute-force applying all possible knockouts. In this work we consider only single-gene knockout perturbations since they are the most common, but multi-gene perturbations are also possible (though considerably more technically complex to perform at scale). While a multi-gene perturbation may be trivially represented as a distinct (combined) perturbation in our framework, we leave for future work the more interesting extension of embedding, predicting, and planning these multi-gene perturbations using previously observed single-gene perturbations.

With this objective in mind we propose a simple method for improving a cellular phenotype under a limited budget of genetic perturbation experiments. Although this work is inspired by this concrete biological problem, our results and algorithms are applicable in  much more generality to the setting of experimental design with neural network models. We develop and evaluate a family of algorithms for the zero noise batch query bandit problem based on the Optimistic Arm Elimination principle ($\mathrm{OAE}$). We focus on developing tractable versions of these algorithms compatible with neural network function approximation. 

During each time-step $\mathrm{OAE}$ fits a reward model on the observed responses seen so far while at the same time maximizing the reward on all the arms yet to be pulled. The algorithm then queries the batch of arms whose predicted reward is maximal among the arms that have not been tried out. 

We conduct a series of experiments on synthetic and public data from the UCI~\cite{Dua:2019} database and show that $\mathrm{OAE}$ is able to find the optimal ``arm" using fewer batch queries than other algorithms such as greedy and random sampling. Our experimental evaluation covers both neurally realizable and not neurally realizable function landscapes. The performance of $\mathrm{OAE}$ against benchmarks is comparable in both settings, demonstrating that although our presentation of the $\mathrm{OAE}$ algorithm assumes realizability for the sake of clarity, it is an assumption that is not required in practice. In the setting where the function class is realizable i.e. the function class $\Fcal$ used by $\mathrm{OAE}$ contains the function generating the rewards, and the evaluation is noiseless we show two query lower bounds for the class of linear and $1-$Lipshitz functions.

We validate $\mathrm{OAE}$ on the public CMAP dataset \cite{Subramanian2017-dg}, which contains tens of thousands of genetic shRNA knockout perturbations, and show that it always outperforms a baseline and almost always outperforms a simpler greedy algorithm in both convergence speed to an optimal perturbation and the associated phenotypic rewards. These results illustrate how perturbational embeddings learned from one biological context can still be quite useful in a different biological context, even when the reward functions of these two contexts are different. Finally we also benchmark our methods in the GeneDisco dataset and algorithm suite (see~\cite{mehrjou2021genedisco}) and show $\mathrm{OAE}$ to be competitive against benchmark algorithms in the task of maximizing HitRatios. 

\section{Related Work}

\paragraph{Bayesian Optimization} The field of Bayesian optimization has long studied the problem of optimizing functions severely limited by time or cost \cite{jones1998efficient}.  For example, \citet{srinivas2009gaussian} introduce the GP-UCB algorithm for optimizing unknown functions. Other approaches based on adaptive basis function regression have also been used to model the payoff function as in \cite{snoek2015scalable}. These algorithms have been used in the drug discovery context.  \citet{mueller2017learning} applied Bayesian optimization to the problem of optimizing biological phenotypes.  Very recently, GeneDisco was released as a benchmark suite for evaluating active learning algorithms for experiment design in drug discovery \cite{mehrjou2021genedisco}. Perhaps the most relevant to our setting are the many works that study the batch acquisition setting in Bayesian active learning and optimization such as~\cite{kirsch2019batchbald,kathuria2016batched} and the $\mathrm{GP-BUCB}$ algorithm of~\cite{JMLR:v15:desautels14a}. In this work we move beyond the typical parametric and Bayesian assumptions from these works towards algorithms that work in conjunction with neural network models.  We provide guarantees for the no noise setting we study based on the Eluder dimension~\cite{russo2013eluder}. 

\paragraph{Parallel Bandits} Despite its wide applicability in many scientific applications, batch learning has been studied relatively seldom in the bandit literature. Despite this, recent work~\citep{chan2021parallelizing} show that in the setting of contextual linear bandits~\citep{abbasi2011improved}, the finite sample complexity of parallel learning matches that of sequential learning irrespective of the batch size provided the number of batches is large enough. Unfortunately, this is rarely the regime that matters in many practical applications such as drug development where the size of the experiment batch may be large but each experiment may be very time consuming, thus limiting their number. In this work we specifically address this setting in our experimental evaluation in Section~\ref{section::experiments}.

\paragraph{Structure Learning} Prior work in experiment design tries to identify causal structures with a fixed budget of experiments \cite{ghassami2018budgeted}. Scherrer et al \cite{scherrer2021learning} proposes a mechanism to select intervention targets to enable more efficient causal structure learning.  \citet{sussex2021near} extend the amount of information contained in each experiment by simultaneously intervening on multiple variables.  Causal matching, where an experimenter can perform a set of interventions aimed to transform the system to a desired state, is studied in \citet{zhang2021matching}.

\paragraph{Neural Bandits} Methods such as Neural UCB and Shallow Neural UCB~\cite{zhou2020neural,xu2020neural} are designed to add an optimistic bonus to model predictions of a nature that can be analytically computed as is extremely reminiscent of the one used in linear bandits~\citep{Auer-2002-Using,Dani-2008-Stochastic}, thus their theoretical validity depends on the `linearizing' conditions to hold. More recently~\citep{pacchiano2021neural} have proposed the use of pseudo-label optimisim for the Bank Loan problem where they propose an algorithm that adds optimism to neural network predictions through the addition of fake data and is only analyzed in the classification setting. Our algorithms instead add optimism to their predictions. The later is achieved via two methods, either by explicitly encouraging it to fit a model whose predictions are large in unseen data, or by computing uncertainties. 

\paragraph{Active Learning} Active learning is relatively well studied problem~\cite{settles2009active,dasgupta2011two,hanneke2014theory} particularly in the context of supervised learning. See for example~\cite{balcan2009agnostic},~\cite{dasgupta2007general},~\cite{settles2009active},~\cite{hanneke2014theory}. There is a vast amount of research on active learning for classification (see for example~\cite{agarwal2013selective},~\cite{dekel2010robust} and~\cite{cesa2009robust} where the objective is to learn a linearly parameterized response model $\mathbb{P}( y| x)$. Broadly speaking there are two main sample construction approaches, diversity \cite{sener2017active,geifman2017deep, gissin2019discriminative} and uncertainty sampling~\cite{tong2001support,schohn2000less,balcan2009agnostic,settles2007multiple}, successful in the large~\cite{guo2007discriminative, wang2015querying, chen2013near, wei2015submodularity, kirsch2019batchbald} and small batch sizes regimes respectively. Diversity sampling methods produce spread out samples to better cover the space while uncertainty-based methods estimate model uncertainty to select what points to label. Hybrid approaches are common as well. A common objective in the active learning literature is to collect enough samples to produce a model that minimizes the population loss over the data distribution. This is in contrast with the objective we study in this work, which is to find a point in the dataset with a large response. There is a rich literature dedicated to the development of active learning algorithms for deep learning applications both in the batch and single sample settings~\cite{ settles2007multiple, ducoffe2018adversarial, beluch2018power,ash2021gone}.

\section{Problem Definition}

Let $y_\star : \Acal \rightarrow \mathbb{R}$ be a response function over $\Acal \subset \mathbb{R}^d$. We assume access to a function class $\Fcal \subset \mathrm{Fun}( \Acal, \mathbb{R})$ where $\mathrm{Fun}(\Acal,\mathbb{R})$ denotes the set of functions from $\Acal$ to $\mathbb{R}$. Following the typical online learning terminology we call $\Acal$ the set of arms. In this work we allow $\Acal$ to be infinite, although we only consider finite $\Acal$ in practice.

In our setting the experiment designer (henceforth called the learner) interacts with $y_\star$ and $ \Acal$ in a sequential manner. During the $t-$th round of this interaction, aided by $\Fcal$ and historical query and response information the learner is required to query a batch of $b \in \mathbb{N}$ arms $\{a_{t, i}\}_{i=1}^b \subset \Acal$ and observe \emph{noiseless} responses $\{y_{t,i} = y_\star(a_{t, i}) \}_{i=1}^b$ after which these response values are added to the historical dataset $\D_{t+1} = \D_t \cup \{ (a_{t,i}, y_{t,i}) \}_{i=1}^b$.

In this work we do not assume that $y_\star \in \Fcal$. Instead we allow the learner access to a function class $\Fcal$ to aid her in producing informative queries. This is a common situation in the setting of neural experiment design, where we may want to use a DNN model to fit the historical responses and generate new query points without prior knowledge of whether it accurately captures $y_\star$. Our objective is to develop a procedure that can recover an `almost optimal' arm $a \in \Acal$ in the least number of arm pulls possible. We consider the following objective,

    \begin{center}
    \textbf{ $\tau-$quantile optimality.}  Find an arm $a_\tau \in \Acal$ belonging to the top $\tau-$quantile\footnote{In the case of an infinite set $\Acal$ quantile optimality is defined with respect to a measure over $\Acal$.} of $\{ y_\star(a) \}_{a \in \Acal}$. 
    \end{center}

Although $\epsilon-$optimality (find an arm $a_\epsilon \in \Acal$ such that $y_\star(a_\epsilon) + \epsilon \geq \max_{a \in \Acal } y_\star(a)$ for $\epsilon \geq 0$) is the most common criterion considered in the optimization literature, for it to be meaningful it requires knowledge of the scale of $\max_{a \in \Acal} y_\star(a)$. In some scenarios this may be hard to know in advance. Thus in our experiments we focus on the setting of $\tau-$quantile optimality as a more relevant practical performance measure. This type of objective has been considered by many works in the bandit literature (see for example~\cite{szorenyi2015qualitative,zhang2021quantile}). Moreover, it is a measure of optimality better related to practical objectives used in experiment design evaluation, such as hit ratio  in the GeneDisco benchmark library~\cite{mehrjou2021genedisco}. We show in Section~\ref{section::experiments} that our algorithms are successful at producing almost optimal arms under this criterion after a small number of queries. The main challenge we are required to overcome in this problem is designing a smart choice of batch queries $\{a_{t,i}\}_{i=1}^B$ that balances the competing objectives of exploring new regions of the arm space and zooming into others that have shown promising rewards.  

In this work we focus on the case where the observed response values $y_\star(a)$ of any arm $a \in \Acal$ are noiseless. In the setting of neural perturbation experiments the responses are the average of many expression values across a population of cells, and thus it is safe to assume the observed response is almost noiseless. In contrast with the noisy setting, when the response is noiseless, querying the same arm twice is never necessary. We leave the question on how to design algorithms for noisy responses in the function approximation regime for future work. although note it can be reduced to our setting if we set the exploitation round per data point sufficiently large.



\paragraph{Evaluation.} After the queries $\cup_{\ell=1}^t \{ a_{\ell,i}\}_{i=1}^b$ the learner will output a candidate approximate optimal arm $\hat{a}_t$ among all the arms whose labels she has queried (all arms in $\D_{t+1}$) by considering $(\hat{a}_t, \hat{y}_t) = \argmax_{(a,y) \in \D_{t+1}} y $, the point with the maximal observed reward so far. Given a quantile value $\tau$ we measure the performance of our algorithms by considering the first timestep $t_\mathrm{first}^\tau$ where a $\tau-$quantile optimal point $\hat{a}_t$ was proposed. 

\section{Optimistic Arm Elimination}\label{sec::optimistic_arm_elimination}
  
With these objectives in mind, we introduce a family of algorithms based on the Optimistic Arm Elimination Algorithm ($\mathrm{OAE}$) principle. We call $\U_t$ to the subset of arms yet to be queried by our algorithm. At time $t$ any $\mathrm{OAE}$ algorithm produces a batch of query points of size $b$ from\footnote{The batch equals $\U_t$ when $|\U_t| \leq b$.} $\U_t$. Our algorithms start round $t$ by fitting an appropriate response predictor $\wf_t : \U_t \rightarrow \mathbb{R}$ based on the historical query points $\D_t$ and their observed responses so far. Instead of only fitting the historical responses with a square loss and produce a prediction function $\wf_t$, we encourage the predictions of $\wf_t$ to be optimistic on the yet-to-be-queried points of $\U_t$. %

We propose two \emph{tractable} ways of achieving this. First by fitting a model (or an ensemble of models) $\wf_t^o$ to the data in $\D_t$ and explicitly computing a measure of uncertainty $\tilde{u}_t : \U_t \rightarrow \mathbb{R}$ of its predictions on $\U_t$. We define the optimistic response predictor $\wf_t(a) = \wf_t^o(a) + \tilde{u}_t(a) $. Second, we achieve this by defining $\wf_t$ to be the approximate solution of a constrained objective,
\begin{align}\label{equation::constrained_problem}
    \wf_{t}  &\in \arg\max_{f \in \Fcal }  \Abb(f, \U_{t})\qquad \text{s.t. } \sum_{(a,y) \in \D_{t}} \left(f(a) -y\right)^2 \leq \gamma_t.
\end{align}
where $\gamma_t$ a possibly time-dependent parameter satisfying $\gamma_t \geq 0$ and $\Abb(f, \U)$ is an acquisition objective tailored to produce an informative arm (or batch of arms) from $\U_{t}$. We consider a couple of acquisition objectives $\Abb_{\mathrm{avg}}( f, \U ) = \frac{1}{\left|   \U \right|} \sum_{a \in \U } f(a)$, $\Abb_{\mathrm{hinge}}(f, \U)  =  \frac{1}{\left| \U  \right|}\sum_{a \in \U} \left(\max( 0, f(a) ) \right)^p $ for some $p > 0$ and $\Asoftmax(f, \U)  =  \log\left(\sum_{a \in \U} \exp( f(a) )\right) $ and  $\Abb_{\mathrm{sum}}( f, \U ) = \sum_{a \in \U } f(a) $. An important acquisition functions of theoretical interest, although hard to optimize in practice are $\Abb_{\mathrm{max}}(f, \U) = \max_{a \in \U} f(a)$ and its batch version $\Abb_{\mathrm{max},b}(f, \U) = \max_{\B \subset \U, | \B| = b} \sum_{a \in \B} f(a)$. Regardless of whether $\wf_t$ was computed via Equation~\ref{equation::constrained_problem} or it is an uncertainty aware objective of the form $\wf_t(a) = \wf_t^o(a) + \widetilde u_t(a)$, our algorithm then produces a query batch $B_t$ by solving %
\begin{equation}\label{equation::batch_selection_rule}
    B_t \in \argmax_{\B \subset \U_t, |\B| = b} \Abb_{\mathrm{avg}}(\wf_t, \B). 
\end{equation}

\begin{algorithm}[tb]
\textbf{Input} Action set $\Acal \subset \mathbb{R}^d $, num batches $N$, batch size $b$\\
\textbf{Initialize } Unpulled arms $\U_1 = \Acal $. Observed points and labels dataset $\D_1 = \emptyset$\\
\For{$t = 1, \cdots, N$}{
   
    \textbf{If $t=1$ then:} \\
     $\cdot$ Sample uniformly a size $b$ batch $B_t \sim \U_1$.\\    
   \textbf{Else:} \\
    $\cdot$ Solve for $\wf_t$ and compute $ B_t \in \argmax_{\B \subset \U_t | |\B| = b} \Abb(\wf_t, \B)$. \\
    Observe batch rewards $Y_t = \{ y_*(a) \text{ for } a \in B_t\} $ \\
    Update $\D_{t+1} = \D_t \cup \{(B_t, Y_t) \}$ and $\U_{t+1} = \U_t \backslash B_t$ .
}

\caption{ Optimistic Arm Elimination Principle ($\mathrm{OAE}$) }
\label{algorithm::optimism_elimination}
\end{algorithm} 

The principle of Optimism in the Face of Uncertainty (OFU) allows $\mathrm{OAE}$ algorithms to efficiently explore new regions of the space by acting greedily with respect to a model that fits the rewards of the arms in $\D_t$ as accurately as possible but induces large responses from the arms she has not tried. If $y_\star \in \Fcal$, and $\wf_t$ is computed by solving Equation~\ref{equation::constrained_problem}, it can be shown the optimistic model overestimates the true response values i.e. $\sum_{a \in B_t} \wf_t(a) \geq \sum_{a \in B_{\star, t}} y_\star(a)$ where $B_{\star, t} = \argmax_{\B \subset \U_t, |\B| = b}   \sum_{a \in \B} y_\star(a)$. Consult Appendix~\ref{section::optimism_appendix} for a proof and an explanation of the relevance of this observation. 

Acting greedily based on an optimistic model means the learner tries out the arms that may achieve the highest reward according to the current model plausibility set. After pulling these arms, the learner can successfully update the model plausibility set and repeat this procedure. 

\subsection{Tractable Implementations of $\mathrm{OAE}$}\label{section::tractable_implementation}

In this section we go over the algorithmic details behind the approximations that we have used when implementing the different $\mathrm{OAE}$ methods we have introduced in Section~\ref{sec::optimistic_arm_elimination}. 

\subsubsection{Optimistic Regularization}

In order to produce a tractable implementation of the constrained problem~\ref{equation::constrained_problem} we approximate it with the optimism regularized objective,
\begin{equation}\label{equation::approximate_objective_average}
    \wf_t \in \argmin_{f \in \Fcal} \frac{1}{|\D_t|} \sum_{(a,y) \in \D_t} ( f(a) - y )^2 - \underbrace{\lambda_{\mathrm{reg}} \Abb(f, \U_t)}_{\text{Optimism Regularizer}}.
\end{equation}
And define $B_t$ following Equation~\ref{equation::batch_selection_rule}. Problem~\ref{equation::approximate_objective_average} is compatible with DNN function approximation.  
In our experiments we set the acquisition function to $\Abb_{\mathrm{avg}}$, $\Abb_{\mathrm{hinge}}$ with $p=4$ and $\Abb_{\mathrm{softmax}}$. The resulting methods are $\mathrm{MeanOpt}$, $\mathrm{HingePNorm}$ and $\mathrm{MaxOpt}$. Throughout this work $\mathrm{RandomBatch}$ corresponds to uniform arm selection and $\mathrm{Greedy}$ to setting the optimism regularizer to $0$. 

\subsubsection{Ensemble Methods}\label{section::ensemble_introduction}
We consider two distinct methods ($\mathrm{Ensemble}$ and $\mathrm{EnsembleNoiseY}$)  to produce uncertainty estimations based on ensemble predictions. In both we fit $M$ models $\{\widehat{f}_t^{i}\}_{i=1}^M$ to $\D_t$ and define 
\begin{align*}
    \wf_t^o(a) = \frac{\max_{i=1, \cdots, M} \widehat{f}_t^{i}(a) + \min_{i=1, \cdots, M} \widehat{f}_t^i(a) }{2}, \qquad    \tilde u_t(a)  = \max_{i=1, \cdots, M} \widehat{f}_t^{i}(a) - f_t^o(a).
\end{align*}
So that $\wf_t = \wf_t^o + \tilde u_t$. We explore two distinct methods to produce $M$ different models $\{\widehat{f}_t^{i}\}_{i=1}^M$ fit to $\D_t$ that differ in the origin of the model noise. $\mathrm{Ensemble}$ produces $M$ models resulting of independent random initialization of their model parameters. 

The $\mathrm{EnsembleNoiseY}$ method injects `label noise' into the dataset responses. For all $i \in \{1, \cdots, M\}$ we build a dataset $\D_t^{i} = \{ (a, y+ \xi) \text{ for } (a,y) \in \D_t, \xi \sim \mathcal{N}(0,1) \}$ where $\xi$ is an i.i.d. zero mean Gaussian random sample. The functions $\{ \widehat{f}_t^i\}_{i=1}^M$ are defined as,
\begin{align*}
    \widehat{f}_t^i \in \argmin_{f \in \Fcal} \sum_{(a,y) \in \D_t^i} (f(a) - y)^2.
\end{align*}
In this case the uncertainty of the ensemble predictions is the result of both the random parameter initialization of the $\{ \widehat{f}_t^i \}_{i=1}^M$ and the `label noise'. This noise injection procedure draws its inspiration from methods such as RLSVI and NARL~\cite{russo2019worst} and~\cite{pacchiano2021towards}.

\subsection{Diversity seeking versions of $\mathrm{OAE}$}\label{section::diversity}

In the case $b > 1$, the explore / exploit trade-off is not the sole consideration in selecting the arms that make up $B_t$. In this case, we should also be concerned about selecting sufficiently diverse points within the batch $B_t$ to maximize the information gathering ability of the batch. In Section~\ref{section::diversity} we show how to extend Algorithm~\ref{algorithm::optimism_elimination} (henceforth referred to as vanilla $\mathrm{OAE}$) to effectively induce query diversity. We introduce two versions $\mathrm{OAE-DvD}$ and $\mathrm{OAE-Seq}$ which we discuss in more detail below. A detailed description of tractable implementations of these algorithms can be found in Appendix~\ref{section::appendix_tractable_implementation}. 


\subsubsection{Diversity via Determinants}\label{section::oae_dvd_intro}

Inspired by diversity-seeking methods in the Determinantal Point Processes (DPPs) literature~\cite{kulesza2012determinantal}, we introduce the $\mathrm{OAE-DvD}$ algorithm. Inspired by the DvD algorithm~\cite{parker2020effective} we propose to augment the vanilla $\mathrm{OAE}$ objective with a diversity regularizer.
\begin{equation}\label{equation::oae_batch_selection_diversity_dvd_main}
    B_t \in \argmax_{\B \subseteq \U_t, |\B| = b} \Abb_{\mathrm{sum}}(\wf_t, \B) + \mathrm{Div}( \wf_t, \B). 
\end{equation}
$\mathrm{OAE-DvD}$'s $\mathrm{Div}$ regularizer is inspired by the theory of Determinantal Point Processes and equals a weighted log-determinant objective. $\mathrm{OAE-DvD}$ has access to a kernel function $\mathcal{K} : \mathbb{R}^d \times \mathbb{R}^d \rightarrow \mathbb{R}$ and at the start of every time step $t \in \mathbb{N}$ it builds a kernel matrix $\mathbf{K}_t \in \mathbb{R}^{|\U_t| \times |\U_t|} $,
\begin{equation*}
    \mathbf{K}_t[i,j] = \mathcal{K}(a_i, a_j), \quad \forall i,j \in \U_t.
\end{equation*}
For any subset $\B \subseteq \U_t$ we define the $\mathrm{OAE-DvD}$ diversity-aware score as,
\begin{equation}\label{equation::diversity_score_definition_OAE_DvD_main}
    \mathrm{Div}(f,\B) = \lambda_{\mathrm{div}} \log\left(\mathrm{Det}( \mathbf{K}_t[\B, \B] ) \right)
\end{equation}
Where $\mathbf{K}_t[\B,\B]$ corresponds to the submatrix of $\mathbf{K}_t$ with columns (and rows) indexed by $\B$ and $\lambda_{\mathrm{div}}$ is a diversity regularizer. Since the resulting optimization problem in Equation~\ref{equation::oae_batch_selection_diversity_dvd} may prove to be extremely hard to solve, we design a greedy maximization algorithm to produce a surrogate solution. The details can be found in Appendix~\ref{section::oae_appendix_dvd_intro}. $\mathrm{OAE-DvD}$ induces diversity leveraging the geometry of the action space.

\subsubsection{Sequential batch selection rules}\label{section::seq_batch_selection_intro}

In this section we introduce $\mathrm{OAE-Seq}$ a generalization of the $\mathrm{OAE}$ algorithm designed to produce in batch diversity. $\mathrm{OAE-Seq}$ produces a query batch by solving a sequence of $b$ optimization problems. The first element $a_{t,1}$ of batch $B_t$ is chosen as the arm in $\U_t$ achieving the most optimistic prediction over plausible models (following objective~\ref{equation::constrained_problem} and any of the tractable implementations defined in Section~\ref{section::appendix_tractable_implementation}, either optimistic regularization or ensemble methods). To produce the second point $a_{t,2}$ in the batch (provided $b>1$) we temporarily add the pair $(a_{t,1}, \tilde{y}_{t,1})$ to the data buffer $\mathcal{D}_t$, where $\tilde{y}_{t,1}$ is a virtual reward estimator for $a_{t,1}$. Using this 'fake labels' augmented data-set we select $a_{t,2}$ following the same optimistic selection method used for $a_{t,1}$. Although other choices are possible in practice we set $\tilde{y}_{t,1}$ as a mid-point between an optimistic and pessimistic prediction of the value of $y_*(a_{t,1})$. The name $\mathrm{OAE-Seq}$ derives from the 'sequential' way in which the batch is produced. If $\mathrm{OAE-Seq}$ selected arm $a$ to be in the batch, and this arm has a non-optimistic virtual reward $\tilde{y}(a)$ that is low relative to the optimistic values of other arms, then $\mathrm{OAE-Seq}$ will not select too many arms close to $a$ in the same batch. $\mathrm{OAE-Seq}$ induces diversity not through the geometry of the arm space but in a way that is intimately related to the plausible arm values in the function class. A similar technique of adding hallucinated values to induce diversity has been proposed before, for example in the $\mathrm{GP-BUCB}$ algorithm of~\cite{JMLR:v15:desautels14a}. Ours is the first time this general idea has been tested in conjunction with scalable neural network function approximation algorithms. A detailed discussion of this algorithm can be found in Section~\ref{section::seq_batch_selection_appendix}.

\subsection{The Statistical Complexity of Zero Noise Batch Learning}\label{section::theory}

In this section we present our main theoretical results regarding $\mathrm{OAE}$ with function approximation. In our results we use the Eluder dimension~\cite{russo2013eluder} to characterize the complexity of the function class $\mathcal{F}$. This is appropriate because our algorithms make use of the optimism principle to produce their queries $B_t$. We show two novel results. First, we characterize the sample complexity of zero noise parallel optimistic learning with Eluder classes with dimension $d$. Perhaps surprisingly the regret of Vanilla $\mathrm{OAE}$ with batch size $b$ has the same regret profile the case $b=1$ up to a constant burn in factor of order $bd$. Second, our results holds under model misspecification, that is when $y_\star \not\in \mathcal{F}$ at the cost of a linear dependence in the misspecification error. Although our results are for the noiseless setting (the subject of this work), we have laid the most important part of the groundwork to extend them to the case of noisy evaluation. We explain why in Appendix~\ref{section::optimism_appendix}. We relegate the formal definition of the Eluder to Appendix~\ref{section::optimism_appendix}.

In this section we measure the misspecification of $y_\star$ via the $\| \cdot \|_\infty$ norm. We assume $y_\star$ satisfies $\min_{f \in \mathcal{F}} \| f-y_\star\|_\infty \leq \omega$ where $\| f-y_\star\|_\infty = \max_{a \in \mathcal{A}} | f(a) - y_\star(a)|$. Let $\widetilde y_\star = \argmin_{f \in \Fcal} \| y_\star - f\|_\infty$ be the $\| \cdot \|_\infty$ projection of $y_\star$ onto $\Fcal$. We analyze $\mathrm{OAE}$ where $\wf_t$ is computed by solving~\ref{equation::constrained_problem} with $\gamma_t \geq(t-1)b \omega^2  $  with acquisition objective $\Abb_{\mathrm{max},b}$. We will measure the performance of $\mathrm{OAE}$ via its regret defined as,
\begin{equation*}
\mathrm{Reg}_{\mathcal{F}, b}(T) = \sum_{\ell=1}^T \left( \sum_{a \in B_{\star, t}}y_\star(a) - \sum_{a \in B_t} y_\star(a) \right) .
\end{equation*}
Where $B_{\star, t} =  \max_{\B \subset \U_t, |\B| =b} \sum_{a \in \B} y_\star(a)$. The main result in this section is,

\begin{restatable}{theorem}{mainresulttheory}\label{theorem::main_regret_result}    
The regret of $\mathrm{OAE}$ with $\Abb_{\mathrm{max},b}$ acquisition function satisfies,
\begin{equation*}
\mathrm{Reg}_{\mathcal{F}, b}(T)  = \mathcal{O}\left( \mathrm{dim}_E(\Fcal, \alpha_T)b + \omega \sqrt{\mathrm{dim}_E(\Fcal, \alpha_T)} Tb \right).
\end{equation*}
With \begin{small} $\alpha_t = \max\left( \frac{1}{t^2}, \inf\{ \| f_1 - f_2\|_\infty : f_1, f_2 \in \Fcal, f_1 \neq f_2\}\right)$ and $C = \max_{f \in \Fcal,a,a' \in \Acal} |f(a) - f(a') | $\end{small}. 
\end{restatable}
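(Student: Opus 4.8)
The plan is to adapt the standard optimism-based regret argument for Eluder-dimension classes (as in Russo--Van Roy) to the batch, zero-noise, misspecified setting, and to show that batching only costs an additive burn-in term. First I would set up the confidence sets: at round $t$, since $\gamma_t \ge (t-1)b\omega^2$ and $\widetilde y_\star$ is the $\|\cdot\|_\infty$-projection of $y_\star$ onto $\Fcal$, the square-loss incurred by $\widetilde y_\star$ on $\D_t$ is at most $(t-1)b\omega^2 \le \gamma_t$, so $\widetilde y_\star$ is feasible for the constrained problem~\ref{equation::constrained_problem}. Consequently, because $\wf_t$ maximizes $\Abb_{\mathrm{max},b}$ over the feasible set, we get the key optimism inequality $\sum_{a\in B_t}\wf_t(a) \ge \sum_{a\in B_{\star,t}}\widetilde y_\star(a) \ge \sum_{a\in B_{\star,t}} y_\star(a) - b\omega$. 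This reduces per-round regret to $\sum_{a\in B_t}\bigl(\wf_t(a) - y_\star(a)\bigr) \le \sum_{a\in B_t}\bigl(\wf_t(a) - \widetilde y_\star(a)\bigr) + b\omega$, i.e.\ to controlling how much $\wf_t$ overestimates the true (projected) value on the arms it actually pulls, plus an $O(b\omega)$ misspecification leak per round that accumulates to $O(\omega Tb\sqrt{\mathrm{dim}_E})$ — which is where the second term of the bound comes from (the $\sqrt{\mathrm{dim}_E}$ factor I expect to enter through the width-to-regret conversion below rather than directly, so I would be careful about exactly where it appears).

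Next I would define the pointwise width $w_t(a) = \sup\{ f(a) - f'(a) : f, f' \text{ feasible at round } t\}$ and argue that, in the noiseless setting, the square-loss constraint forces feasible functions to nearly interpolate the already-queried points: any feasible $f$ has $\sum_{(a,y)\in\D_t}(f(a)-y)^2 \le \gamma_t$, and since $\D_t$ has $(t-1)b$ points with $\gamma_t = (t-1)b\omega^2$ (taking the tightest choice), each squared residual is controlled on average by $\omega^2$. The standard Eluder pigeonhole argument then says: across all $T$ rounds and all $b$ arms per round, the number of $(t,a)$ pairs for which $w_t(a)$ exceeds a threshold $\epsilon$ is at most roughly $\mathrm{dim}_E(\Fcal,\epsilon)\cdot(\text{something})/\epsilon^2$ times the effective ``noise level''. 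In the zero-noise case the accumulated width telescopes much more favorably than in the noisy case, which is precisely why the leading sample-complexity term is $\mathrm{dim}_E(\Fcal,\alpha_T)\,b$ rather than $\sqrt{\mathrm{dim}_E}\,T$ — the batch of $b$ arms pulled at the first round where a region is ``wide'' collapses the width there, so each of the $\mathrm{dim}_E$ independent directions can be charged at most $b$ wide pulls before it is resolved. I would make the $\alpha_t = \max(1/t^2, \inf\|f_1-f_2\|_\infty)$ choice do its usual job: it lower-bounds the resolution at which we need to count Eluder dimension, handles the case of a continuum of functions, and the $1/t^2$ piece ensures the residual sum of sub-$\alpha_t$ widths is $O(1)$ after summing over $t$.

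The main obstacle will be the batch bookkeeping in the Eluder counting argument: in the $b=1$ case one orders the queried arms and shows each ``wide'' arm either has few wide predecessors or triggers an increment of a potential tied to $\mathrm{dim}_E$, but with a batch of $b$ arms the confidence set is only updated once per round, so up to $b$ arms in the same batch can all be wide simultaneously without any intervening model update. I would handle this by the ``burn-in'' device hinted at in the theorem statement: group the pulls so that within any maximal run during which a given Eluder-independent direction stays unresolved, at most $b$ rounds (hence $O(b\cdot\mathrm{dim}_E)$... actually $O(b)$ per direction, $O(b\,\mathrm{dim}_E)$ total) contribute before the direction is pinned down by the noiseless observations. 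Concretely I would show the total count of wide $(t,a)$ pairs is bounded by $O(b\cdot\mathrm{dim}_E(\Fcal,\alpha_T))$ plus a term that is $O(\omega \sqrt{\mathrm{dim}_E(\Fcal,\alpha_T)}\,Tb)$ coming from the misspecification floor on the widths (widths cannot be driven below $O(\omega)$), and then sum per-round regret $\le \sum_{a\in B_t} w_t(a) + b\omega$ over $t=1,\dots,T$, using that non-wide rounds contribute a negligible $O(1)$ via the $\alpha_t$ threshold and the diameter bound $C$. Assembling these pieces yields exactly $\mathrm{Reg}_{\Fcal,b}(T) = \mathcal{O}\bigl(\mathrm{dim}_E(\Fcal,\alpha_T)b + \omega\sqrt{\mathrm{dim}_E(\Fcal,\alpha_T)}\,Tb\bigr)$; the delicate points to get right are the precise dependence of $\gamma_t$ on the misspecification in the feasibility step and ensuring the $\sqrt{\mathrm{dim}_E}$ (rather than $\mathrm{dim}_E$) shows up in the misspecification term, which I expect follows from a Cauchy--Schwarz step over the $T$ rounds against the $\mathrm{dim}_E$-bounded count of informative rounds.
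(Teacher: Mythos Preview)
Your outline matches the paper's proof closely: optimism reduces per-round regret to the sum of widths $r_t(a)$ over the pulled arms plus a $2bT\omega$ misspecification leak, an Eluder-dimension pigeonhole bounds how many widths can exceed a threshold, and a summation step converts this to the stated bound. The optimism step and the reduction to widths are handled exactly as in the paper.

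The genuine gap is the batch counting argument. Your ``each of the $\mathrm{dim}_E$ independent directions can be charged at most $b$ wide pulls before it is resolved'' is only a heuristic; Eluder dimension does not decompose into orthogonal directions one can charge independently. The paper's actual argument (Proposition~\ref{proposition::adapted_parallel}) is the technical heart of the result. It upper-bounds the number of disjoint subsequences of $\D_t$ on which a wide arm can be $\epsilon$-dependent by $4\gamma_t/\epsilon^2$ (via the triangle inequality on $\|\bar f - \underline f\|_{\D_t}$), and for the lower bound it explicitly builds $\widetilde K$ disjoint subsequences and uses a bipartite matching argument (Hall's marriage theorem, applied between each incoming batch and the current subsequences) to show that unless some arm is already $\epsilon$-dependent on $\widetilde K/2$ of them, the $b$ wide arms in the batch can be matched to $\min(b,\widetilde K/8)$ distinct subsequences and appended. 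This matching step is precisely what absorbs the staleness of the confidence set within a batch and produces the additive $bd$ burn-in; your plan does not yet contain an argument that does this job.

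A smaller correction: you conflate the \emph{count} of wide pairs with the \emph{sum} of widths. The count at level $\epsilon$ is $O(\gamma_T d/\epsilon^2 + bd)$; the quantity $O(bd + \omega\sqrt{d}\,Tb)$ is the bound on $\sum_{t,a} r_t(a)$. The conversion in the paper is not Cauchy--Schwarz but the Russo--Van Roy reordering/integration: sort the widths decreasingly, invert the count bound to get $r_{i_j} = O\bigl(\sqrt{\gamma_T d/(j - c\,bd)_+}\bigr)$ for $j > c\,bd$, bound the first $c\,bd$ terms by $C$ each, and integrate the tail to $O(\sqrt{\gamma_T d\, Tb})$. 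Substituting $\gamma_T = \omega^2 Tb$ then gives $\omega\sqrt{d}\,Tb$, which is where the $\sqrt{\mathrm{dim}_E}$ on the misspecification term actually originates---confirming your instinct that it enters through the width-to-regret conversion, but via integration against the count bound rather than a single Cauchy--Schwarz step.
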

The proof of theorem~\ref{theorem::main_regret_result} can be found in Appendix~\ref{section::optimism_appendix}. This result implies the regret is bounded by a quantity that grows linearly with $\omega$, the amount of misspecification but otherwise only with the scale of $\mathrm{dim}_{E}(\Fcal, \alpha_T) b$. The misspecification part of the regret scales as the same rate as a sequential algorithm running $Tb$ batches of size $1$. When $\omega = 0$, the regret is upper bounded by $\mathcal{O}\left( \mathrm{dim}_{E}(\Fcal, \alpha_T)  d\right)$. For example, in the case of linear models, the authors of~\cite{russo2013eluder} show $\mathrm{dim}_{E}(\Fcal, \epsilon) = \mathcal{O}\left( d \log(1/\epsilon)\right)$. This shows that for example sequential $\mathrm{OAE}$ when $b=1$ achieves the lower bound of Lemma~\ref{lemma::complexity_results} up to logarithmic factors. In the setting of linear models, the $b$ dependence in the rate above is unimprovable by vanilla $\mathrm{OAE}$ without diversity aware sample selection. This is because an optimistic algorithm may choose to use all samples in each batch to explore a single unexplored one dimensional direction. Theoretical analysis for $\mathrm{OAE-DvD}$ and $\mathrm{OAE-Seq}$ is left for future work. In Appendix~\ref{section::query_complexity} we also show lower bounds for the query complexity for linear and Lipshitz classes.

\section{Transfer Learning Across Genetic Perturbation Datasets}\label{sec:bio_stuff}

\begin{wrapfigure}{l}{0.35\textwidth}
\vspace{-.6cm}
    \centering
    \includegraphics[width=0.3\textwidth]{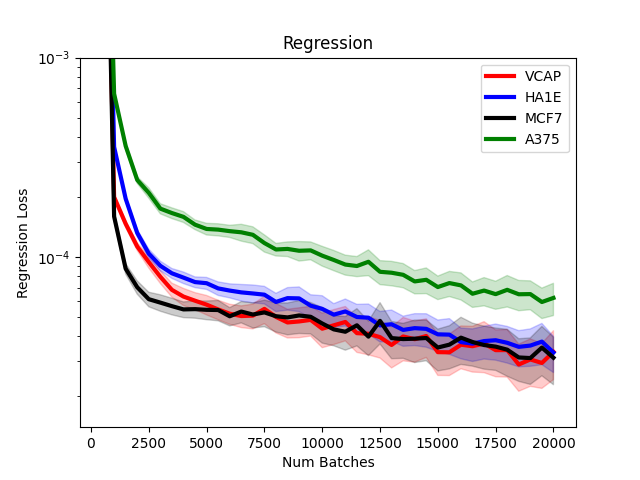}
    \caption{ Regression mean squared error loss for models that predict cell-line specific phenotypic rewards from $\mathrm{VCAP}$-derived perturbational features. }
    \label{fig:autoencoder_features_regression_loss}
    \vspace{-.3cm}
\end{wrapfigure}

In order to show the effectiveness of $\mathrm{OAE}$ in the large batch - small number of iterations regime we consider genetic perturbations from the CMAP dataset \cite{Subramanian2017-dg}, which contains a 978-gene bulk expression readout from thousands of single-gene shRNA knockout perturbations\footnote{The shRNA perturbations are just a subset of the 1M+ total perturbations across different perturbation classes.} across a number of cell lines.

We consider the setting in which we have observed the effect of knockouts in one biological context (i.e., cell line) and would like to use it to plan a series of knockout experiments in another. Related applications may have different biological contexts, from different cell types or experimental conditions. We use the level 5 CMAP observations, each of which contains of 978-gene transcriptional readout from an shRNA knockout of a particular gene in a particular cell line. In our experiments, we choose to optimize a cellular proliferation phenotype, defined as a function on the 978-gene expression space. See Appendix \ref{section::proliferation_phenotype_appendix} for details.

We use the 4 cells lines with the most number genetic perturbations in common: $\mathrm{VCAP}$ (prostate cancer, $n = 14602$ ), $\mathrm{HA1E}$ (kidney epithelium, $n = 10487$), $\mathrm{MCF7}$ (breast cancer, $n = 6638$), and $\mathrm{A375}$ (melanoma, $n = 10033$). We first learn a $100$-dimensional action (perturbation) embedding $a_i$ for each perturbation in $\mathrm{VCAP}$ with an autoencoder. The autoencoder has a $100$-dimension bottleneck layer and two intermediate layers of $1500$ and $300$ ReLU units with dropout and batch normalization and is trained using the Adam optimizer on mean squared reconstruction loss. We use these $100$-dimensional perturbations embeddings as the features to train the $\wf_{t}$ functions for each of the other cell types.  According to our $\mathrm{OAE}$ algorithm, we train a fresh feed-forward neural network with two intermediate layers (of 100 and 10 units) for after observing the phenotypic rewards for each batch of $50$ gene (knockout) perturbations. Figure~\ref{fig:genetic_perturbs_oae_diagram} summarizes this approach.

\begin{wrapfigure}{r}{.6\linewidth}
    \centering
    \vspace{-0.05in}
     \includegraphics[width=0.28\textwidth]{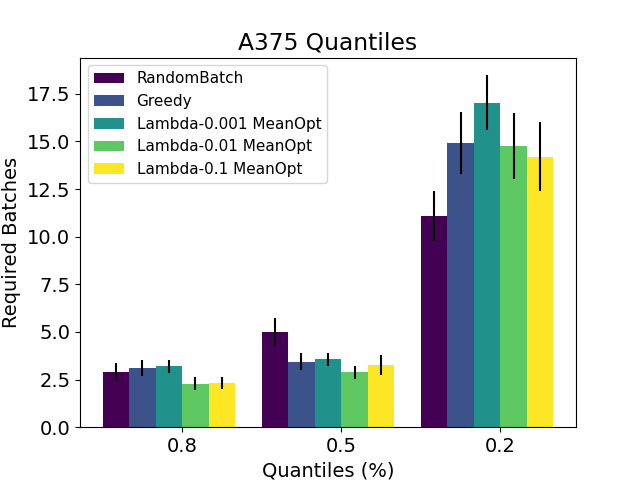}
    \includegraphics[width=0.28\textwidth]{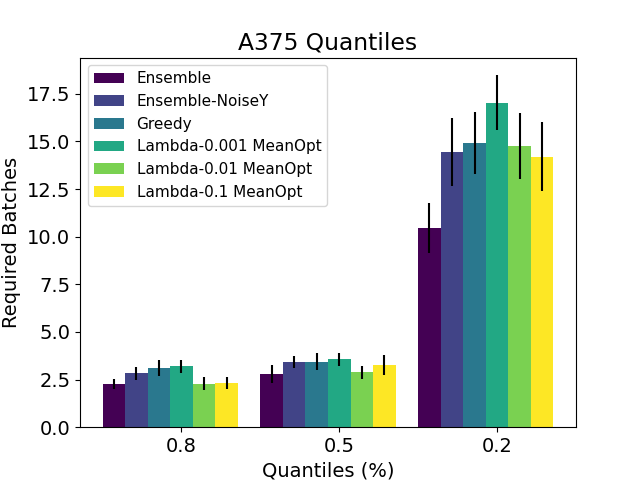}\\
     \includegraphics[width=0.28\textwidth]{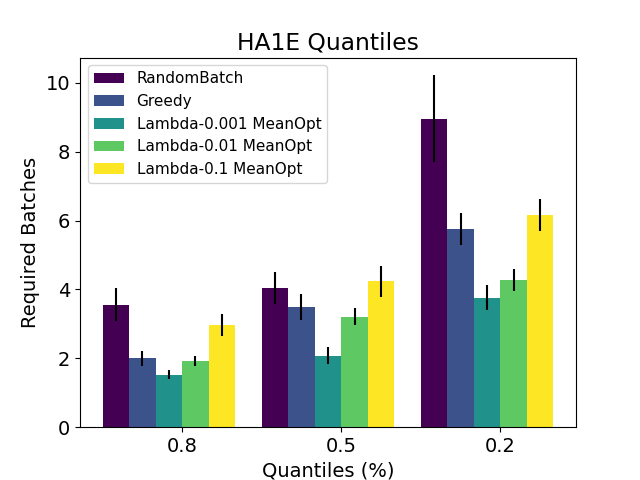}
    \includegraphics[width=0.28\textwidth]{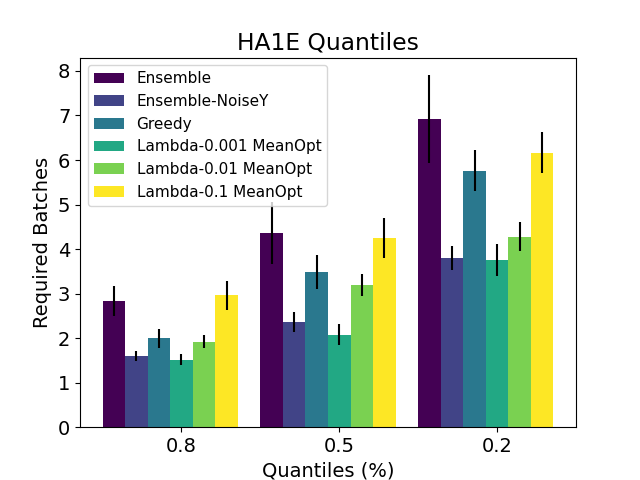}\\
     \includegraphics[width=0.28\textwidth]{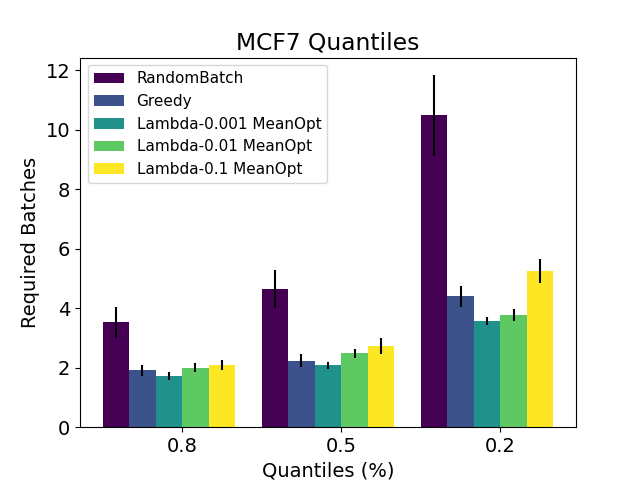}
    \includegraphics[width=0.28\textwidth]{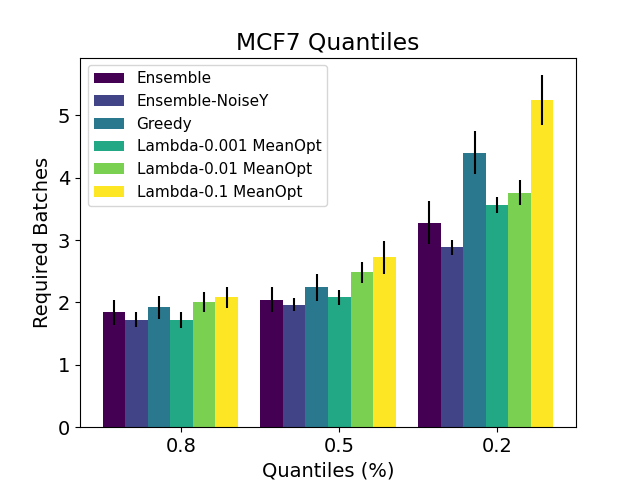} \\
    \includegraphics[width=0.28\textwidth]{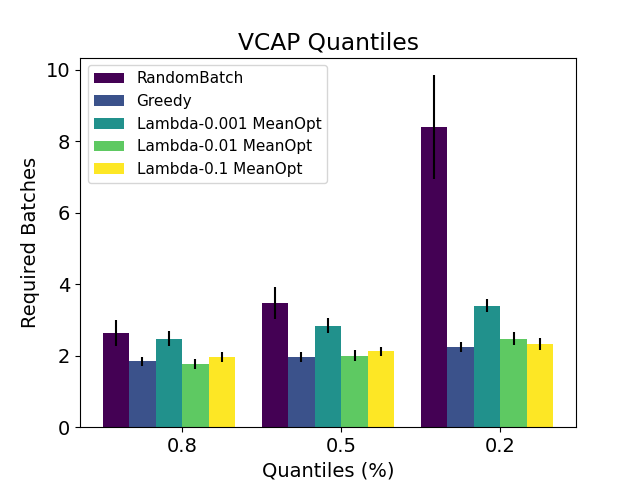}
    \includegraphics[width=0.28\textwidth]{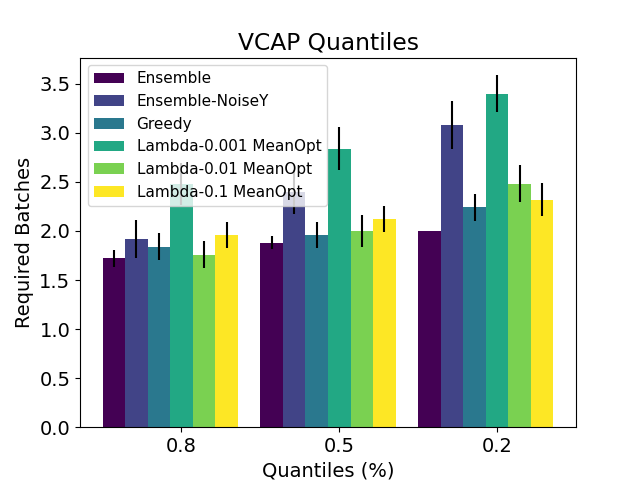} 
    \vspace{-0.1in}
    \caption{\textbf{NN 100-10} $\tau$-quantile batch convergence on genetic perturbation datasets of $\mathrm{MeanOpt}$ (\textbf{left}) and $\mathrm{Ensemble}$, $\mathrm{EnsembleNoiseY}$ (\textbf{right}). }
    \label{fig:bio_oae_100_10}
    \vspace{-0.1in}
\end{wrapfigure}

Figure~\ref{fig:autoencoder_features_regression_loss} shows the mean squared error loss of models trained to predict the cell-line specific phenotypic reward from the $100$-dimensional $\mathrm{VCAP}$-derived perturbational features. These models are trained on successive batches of perturbations sampled via $\mathrm{RandomBatch}$ and using the same \textbf{NN 1500-300} hidden layer architecture of the decoder. Not surprisingly, the loss for the $\mathrm{VCAP}$ reward is one of the lowest, but that of two other cell lines ($\mathrm{HA1E}$ and $\mathrm{MCF7}$) are also quite similar, showing the \textbf{NN 1500-300} neural net function class is flexible to learn the reward function in one context from the perturbational embedding in another.

\begin{figure}[tb]
    \centering
    \vspace{-0.05in}
        \includegraphics[width=0.23\textwidth]{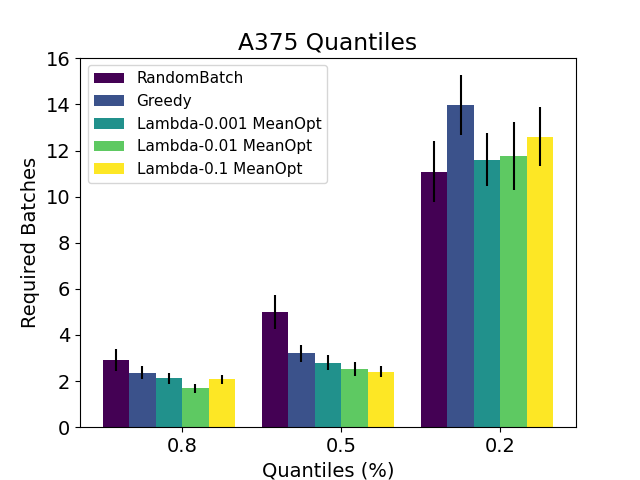}
    \includegraphics[width=0.23\textwidth]{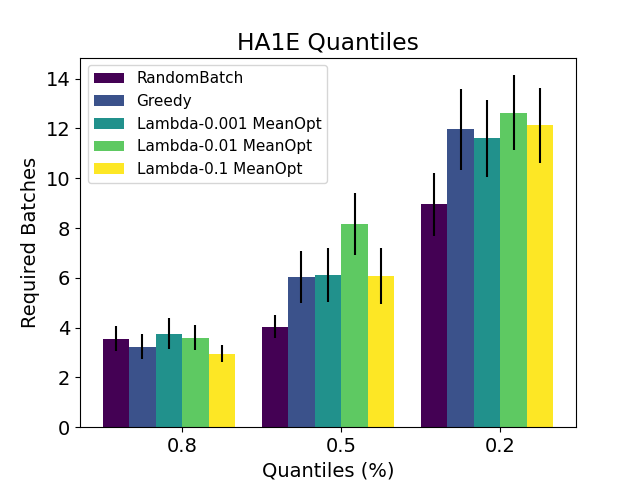}
    \includegraphics[width=0.23\textwidth]{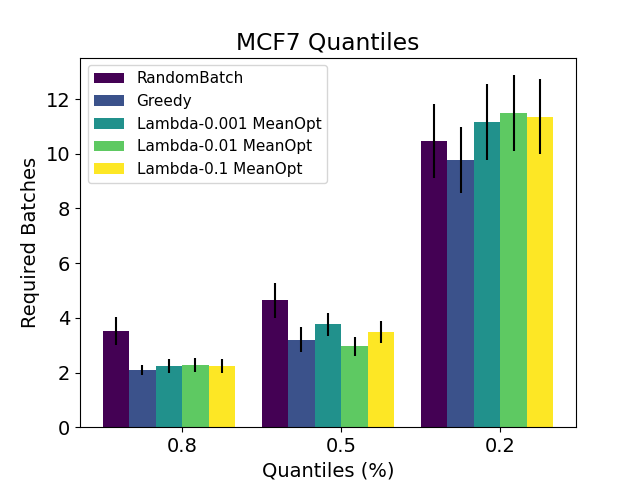}
    \includegraphics[width=0.23\textwidth]{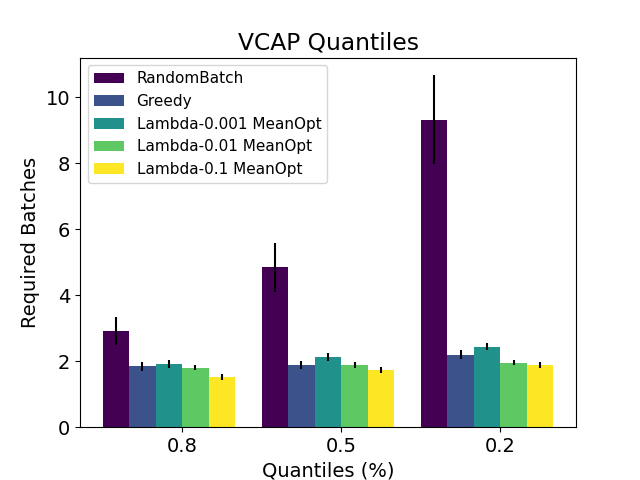}
    \caption{ \textbf{Linear}. $\tau$-quantile batch convergence of $\mathrm{MeanOpt}$ on genetic perturbation datasets.}
    \label{fig:bio_linear_fit}
\vspace{-.6cm}
\end{figure}

In all of our experiments we consider either a linear or a neural network function class with ReLU activations. In all of them we consider a batch size $b = 50$ and a number of batches $N = 20$. Figure~\ref{fig:bio_oae_100_10} shows the convergence and reward results for the 4 cell lines when the neural network architecture equals \textbf{NN 100-10}. Since the perturbation action features were learned on the $\mathrm{VCAP}$ dataset (though agnostic to any phenotypic reward), the optimal $\mathrm{VCAP}$ perturbations are found quite quickly by all versions of $\mathrm{MeanOpt}$ including $\mathrm{Greedy}$. Interestingly, $\mathrm{MeanOpt}$ still outperforms $\mathrm{RandomBatch}$ in the $\mathrm{HA1E}$ and $\mathrm{MCF7}$ cell lines but not on $\mathrm{A375}$.  When the neural network architecture equals \textbf{NN 10-5}, $\mathrm{MeanOpt}$ is only competitive with $\mathrm{RandomBatch}$ on the $\mathrm{VCAP}$ and $\mathrm{MCF7}$ datasets (see figure~\ref{fig:bio_oae_10_5} in Appendix~\ref{section::transfer_learning_bio_appendix}). Moreover, when $\Fcal$ is a class of linear functions, $\mathrm{MeanOpt}$ can beat $\mathrm{RandomBatch}$ only in $\mathrm{VCAP}$. This can be explained by looking at the regression fit plots of figure~\ref{fig:regression_fit_bio_datasets}. The baseline loss value is the highest for $\mathrm{A375}$ even for \textbf{NN 100-10}, thus indicating this function class is too far from the true responses values for $\mathrm{A375}$. The loss curves for \textbf{NN 10-5} lie well above those for \textbf{NN 100-10} for all datasets thus explaining the degradation in performance when switching from a \textbf{NN 100-10} to a smaller capacity of \textbf{NN 10-5}.  Finally, the linear fit achieves a very small loss for $\mathrm{VCAP}$ explaining why $\mathrm{MeanOpt}$ still outperforms $\mathrm{RandomBatch}$ in $\mathrm{VCAP}$ with linear models. In all other datasets the linear fit is subpar to the \textbf{NN 100-10}, explaining why $\mathrm{MeanOpt}$ in \textbf{NN 100-10} works better than in linear ones. We note that $\mathrm{EnsembleOptimism}$ is competitive with $\mathrm{RandomBatch}$ in both \textbf{NN 10-5} and \textbf{NN 100-10} architectures in $7$ our of the $8$ experiments we conducted. In Appendix~\ref{section::transfer_learning_bio_appendix} the reader can find results for $\mathrm{MaxOpt}$ and $\mathrm{HingeOpt}$. Both methods underperform in comparison with $\mathrm{MeanOpt}$. In Appendix~\ref{section::oae_dvd_expeiments_main} and~\ref{section::oae_seq_experiments} the reader will find experiments using tractable versions of $\mathrm{OAE-DvD}$ and $\mathrm{OAE-Seq}$. In Appendix~\ref{section::experiments} and~\ref{appendix::further_experiments}, we present extensive additional experiments and discussion of our findings over different network architectures (including linear), and over a variety of synthetic and public datasets from the UCI database~\cite{Dua:2019}. 
\begin{figure}[H]
\vspace{-.3cm}
    \centering
    \includegraphics[width=0.23\textwidth]{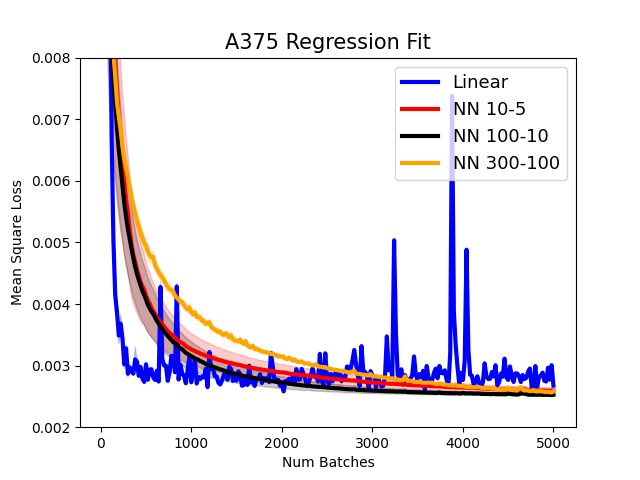}
    \includegraphics[width=0.23\textwidth]{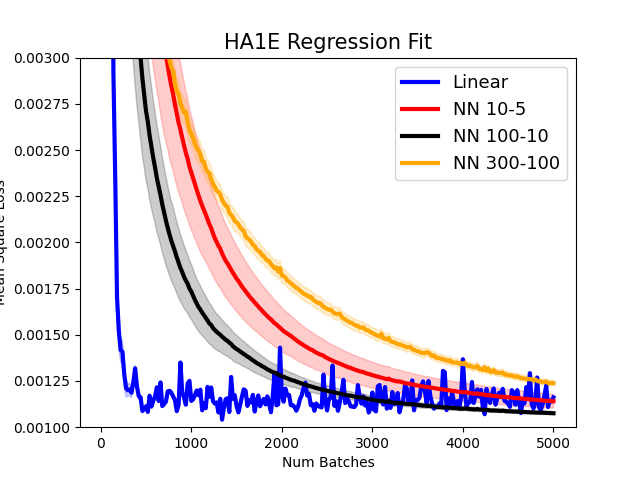}
    \includegraphics[width=0.23\textwidth]{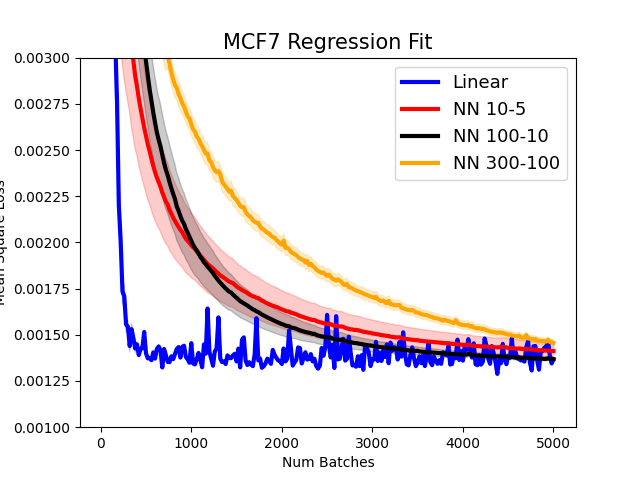}
    \includegraphics[width=0.23\textwidth]{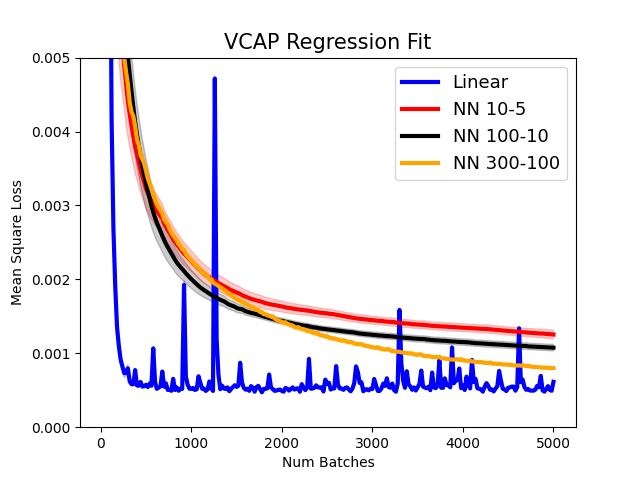}
    \vspace{-0.1in}
    \caption{\textbf{Training set} regression fit curves evolution over training for the suite of gene perturbation datasets. Each training batch contains $10$ datapoints.  }
    \label{fig:regression_fit_bio_datasets}
\end{figure}

\subsubsection{GeneDisco Experimental Planning Benchmark}
We test a Bayesian \textrm{OAE} algorithm against the GeneDisco benchmark~\cite{mehrjou2021genedisco}, which assess the "Hit Rate" of different experimental planning algorithms over a number of pooled CRISPR experiments. We assess our performance against the other acquisition functions provided in the public implementation\footnote{\url{https://github.com/genedisco/genedisco-starter}} that select batches based solely on uncertainty considerations. We use the public implementation of GeneDisco and do not change the neural network architecture provided corresponding to a Bayesian Neural Network with a hidden layer of size $32$. We use the 808 dimensional Achilles treatment descriptors and built $\wf_t$ by adding the BNN's uncertainty to the model base predictions. We test our algorithm in the Schmidt et al. 2021 (IFNg), Schmidt et al. 2021 (IL2), Zhuang et al. 2019 (NK), and Zhu et al. 2021 (SarsCov2) datasets and tested for performance using the Hit Ratio metric after collecting $50$ batches of size $16$. This is defined as the ratio of arm pulls lying in the top .05 quantile of genes with the largest absolute value.  Our results are in Table~\ref{fig:Genedisco_table}. \textrm{OAE} outperforms the other algorithms by a substantial margin in $3$ out of the $4$ datasets that we tested.    %

\begin{figure}[H]
    \centering
    \begin{tabular}{|c|c|c|c|}
    \hline
     Dataset & TopUncertain & SoftUncertain &  OAE-Bayesian   \\
     \hline
     Schmidt et al. 2021 (IFNg) & 0.057 & 0.046 & \textbf{0.062} \\
     \hline 
     Schmidt et al. 2021 (IL2)& 0.083 & 0.081 &\textbf{0.107} \\
     \hline
     Zhuang et al. 2019 (NK)& 0.035 & 0.047 &\textbf{0.085}\\
     \hline
     Zhu et al. 2021 (SarsCov2)& 0.035 & \textbf{0.049} & 0.0411\\
     \hline
    \end{tabular}

    \caption{Hit Ratio Results after 50 batches of size $16$. BNN model trained with Achilles treatment descriptors. Final Hit Ratio average of $5$ independent runs. TopUncertain selects the 16 points with the largest uncertainty values and SoftUncertain samples them using a softmax distribution.}
    \label{fig:Genedisco_table}
\end{figure}

\section{Conclusion}

In this work we introduce a variety of algorithms inspired in the $\mathrm{OAE}$ principle for noiseless batch bandit optimization. We also show lower bounds for the query complexity for linear and Lipshitz classes as well as a  novel regret upper bound in terms of the Eluder dimension of the query class. Our theoretical results hold under misspecification. Through a variety of experiments in synthetic, public and biological data we show that the different incarnations of $\mathrm{OAE}$ we propose in this work can quickly search through a space of actions for the almost optimal ones. This work focused in the case where the responses are noiseless. Extending our methods and experimental evaluation to the case where the responses are noisy is an exciting avenue for future work.

\bibliography{refs}
\bibliographystyle{iclr2023_conference}

\newpage 
\appendix
\tableofcontents

\section{Transfer Learning Across Genetic Perturbation Datasets}\label{sec:bio_stuff_appendix}

In this section we have placed a diagrammatic version of the data pipeline described in section~\ref{sec:bio_stuff}.

\begin{figure*}[tbh]
    \centering
    \includegraphics[width=0.8\textwidth]{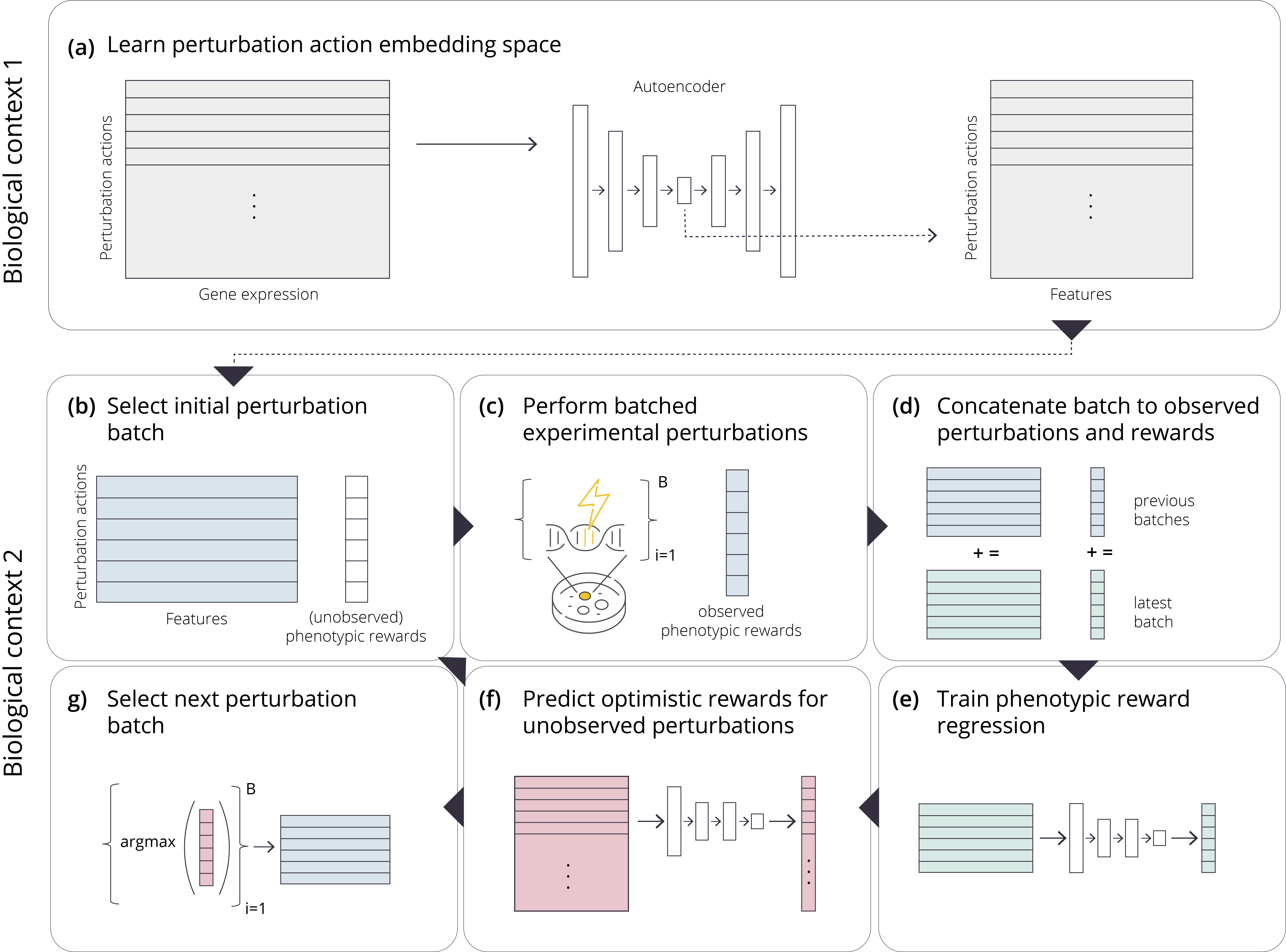}
    \vspace{-0.1in}
    \caption{ Neural design for genetic perturbation experiments. (a) Learn a perturbation action embedding space by training an autoencoder on the gene expression resulting from a large set of observed genetic perturbations in a particular biological context (e.g., shRNA gene knockouts for a particular cell line in CMAP). (b) Select an initial batch of $B$ perturbation actions to perform in parallel within a related (but different) biological context. Selection can be random (uniform) or influenced by prior information about the relationship between genes and the phenotype to be optimized. (c) Perform the current batch of experimental perturbations \textit{in vitro} and observed their corresponding phenotypic rewards. (d) Concatenate the latest batch's features and observed rewards to those of previous batches to update the perturbation reward training set. (e) Train a new perturbation reward regression (with some degree of pre-defined optimism) network on the observed perturbation rewards. (f) Use this regressor to predict the optimistic rewards for the currently unobserved perturbations. (g) Select the next batch from these unobserved perturbations with the highest optimistic reward.}
    \label{fig:genetic_perturbs_oae_diagram}
\end{figure*}

\section{Complementary description of the methods}

\subsection{Appendix Diversity via Determinants}\label{section::oae_appendix_dvd_intro}

Inspired by diversity-seeking methods in the Determinantal Point Processes (DPPs) literature~\cite{kulesza2012determinantal}, we introduce the $\mathrm{OAE-DvD}$ algorithm. Inspired by the DvD algorithm~\cite{parker2020effective} we propose the use  

DPPs can be used to produces diverse subsets by sampling proportionally to the determinant of the kernel matrix of points within the subset. From a geometric perspective, the determinant of the kernel matrix represents the volume of a parallelepiped spanned by feature maps corresponding to the kernel choice. We seek to maximize this volume, effectively ``filling'' the feature space. Using a determinant score to induce diversity has been proposed as a strategy in other domains, most notably in the form of the Diversity via Determinants (DvD) algorithm from~\cite{parker2020effective} for Population Based Reinforcement Learning. It is from this work that we take inspiration to name our algorithm $\mathrm{OAE-DvD}$. The idea of using DPPs for diversity guided active learning has been explored by~\cite{biyik2019batch}. In the active learning setting the objective function used to build the batch is purely driven by the diversity objective. The method works as follows. At time $t$, $\mathrm{OAE-DvD}$ constructs a regression estimator $\wf_t$ using the arms and responses in $\D_t$ (for example by solving problem~\ref{equation::constrained_problem}). Instead of using Equation~\ref{equation::batch_selection_rule}, our algorithm selects batch $\B_t$ by optimizing a diversity aware objective of the form,
\begin{equation}\label{equation::oae_batch_selection_diversity_dvd}
    B_t \in \argmax_{\B \subseteq \U_t, |\B| = b} \Abb_{\mathrm{sum}}(\wf_t, \B) + \mathrm{Div}( \wf_t, \B). 
\end{equation}
$\mathrm{OAE-DvD}$'s $\mathrm{Div}$ regularizer is inspired by the theory of Determinantal Point Processes and equals a weighted log-determinant objective. $\mathrm{OAE-DvD}$ has access to a kernel function $\mathcal{K} : \mathbb{R}^d \times \mathbb{R}^d \rightarrow \mathbb{R}$ and at the start of every time step $t \in \mathbb{N}$ it builds a kernel matrix $\mathbf{K}_t \in \mathbb{R}^{|\U_t| \times |\U_t|} $,
\begin{equation*}
    \mathbf{K}_t[i,j] = \mathcal{K}(a_i, a_j), \quad \forall i,j \in \U_t.
\end{equation*}
For any subset $\B \subseteq \U_t$ we define the $\mathrm{OAE-DvD}$ diversity-aware score as,
\begin{equation}\label{equation::diversity_score_definition_OAE_DvD}
    \mathrm{Div}(f,\B) = \lambda_{\mathrm{div}} \log\left(\mathrm{Det}( \mathbf{K}_t[\B, \B] ) \right)
\end{equation}
Where $\mathbf{K}_t[\B,\B]$ corresponds to the submatrix of $\mathbf{K}_t$ with columns (and rows) indexed by $\B$ and $\lambda_{\mathrm{div}}$ is a diversity regularizer. Since the resulting optimization problem in Equation~\ref{equation::oae_batch_selection_diversity_dvd} may prove to be extremely hard to solve, we design a greedy maximization algorithm to produce a surrogate solution. We build the batch $B_t$ greedily. The first point $a_{t,1}$ in the batch is selected to be the point in $\U_t$ that maximizes the response $\wf_t$. For all $i \geq 2$ the point $a_{t,i}$ in $\U_t$ is selected from $\U_t \backslash \{ a_{t,j}\}_{j=1}^{i-1}$ such that,
\begin{equation*}
    a_{t,i} = \max_{a \in \U_t \backslash \{ a_{t,j}\}_{j=1}^{i-1} } \mathbb{A}_{\mathrm{sum}}(\wf_t, \{ a \} \cup \{ a_{t,j}\}_{j=1}^{i-1} ) +  \mathrm{Div}( \wf_t, \{ a \} \cup \{ a_{t,j}\}_{j=1}^{i-1})  
\end{equation*}

\begin{algorithm}[H]
\textbf{Input} Action set $\Acal \subset \mathbb{R}^d $, num batches $N$, batch size $b$, $\lambda_{\mathrm{div}}$\\
\textbf{Initialize } Unpulled arms $\U_1 = \Acal $. Observed points and labels dataset $\D_1 = \emptyset$\\
\For{$t = 1, \cdots, N$}{
    \If{$t=1$}{
        Sample uniformly a size $b$ batch $B_t \sim \U_1$.\\
    }
    \Else{
    
    Compute $B_t $ using the greedy procedure described above.\\
    }
    Observe batch rewards $Y_t = \{y_*(a) \text{ for } a \in B_t\} $ \\
    Update $\D_{t+1} = \D_t \cup \{(B_t, Y_t) \}$.\\
    Update $\U_{t+1} = \U_t \backslash B_t$ .\\

}

\caption{Optimistic Arm Elimination - DvD ($\mathrm{OAE-DvD}$) }
\label{algorithm::dvd_optimism_elimination}
\end{algorithm} 

We define a reward augmented kernel matrix $\mathbf{K}_t^{\mathrm{aug}}(i,j) = \mathbf{K}_t(i,j) \sqrt{\exp\left(\frac{\wf_t(i) + \wf_t(j)}{\lambda_{\mathrm{det}}} \right)}$. This matrix satisfies $\mathbf{K}_t^{\mathrm{aug}}[\B,\B] = \mathrm{diag}\left(\sqrt{\exp(\wf_t/\lambda_{\mathrm{det}})}\right)[\B,\B] \cdot  \mathbf{K}_t[\B,\B] \cdot \mathrm{diag}\left(\sqrt{\exp(\wf_t/\lambda_{\mathrm{det}})}\right)[\B,\B]$ for all $\B \subseteq \U_t$. Since the determinant of a product of matrices is the product of their determinants it follows that $\mathrm{Det}(\mathbf{K}_t^{\mathrm{aug}}[\B,\B]  ) = \exp\left( \sum_{a \in \B} \wf_t(a)/\lambda_{\mathrm{det}}\right) \cdot \mathrm{Det}\left( \mathbf{K}_t[\B, \B]\right) $. Thus for all $\B \subseteq \U_t$, equation~\ref{equation::oae_batch_selection_diversity_dvd} with diversity score~\ref{equation::diversity_score_definition_OAE_DvD} can be rewritten as
\begin{equation*}
    B_t \in \argmax_{\B \subseteq \U_t, |\B| = b} \log\left(\mathrm{Det}(\mathbf{K}_t^{\mathrm{aug}}[\B,\B] ) \right)
\end{equation*}
This is because $\log\left(\mathrm{Det}(\mathbf{K}_t^{\mathrm{aug}}[\B,\B] ) \right)= \frac{\sum_{a \in \B} \wf_t(a) }{\lambda_{\mathrm{div}}} + \log\left( \mathrm{Det}( \mathbf{K}_t[\B, \B] ) \right)$ for all $\B \subset \U_t$. It is well known the log-determinant set function for a positive semidefinite matrix is submodular (see for example section 2.2 of~\cite{han2017faster}). It has long been established that the greedy algorithm achieves an approximation ratio of $(1-\frac{1}{e})$ for the constrained submodular optimization problem (see~\cite{nemhauser1978analysis}). This justifies the choices we have made behind the greedy algorithm we use to select $B_t$.

\subsubsection{Sequential batch selection rules}\label{section::seq_batch_selection_appendix}

In this section we introduce $\mathrm{OAE-Seq}$ a generalization of the $\mathrm{OAE}$ algorithm designed to produce in batch diversity. $\mathrm{OAE-Seq}$ produces a query batch by solving a sequence of $b$ optimization problems. $\mathrm{OAE-Seq}$ uses first $\widetilde{\D}_{t,0} = \D_t$ the set of arms pulled so far as well as $\widetilde{\U}_{t,0} = \U_t$ (the set of arms yet to be pulled) to produce a function $\wf_{t, 1}$ that determines the initial arm in the batch via the greedy choice $a_{t,1} = \argmax_{a \in \U_t} \wf_{t, 1}(a) $. The function $\wf_{t,1}$ is computed using the same method as any vanilla $\mathrm{OAE}$ procedure. Having chosen this arm, in the case when $b > 1$, a virtual reward $\widetilde{y}_{t,1}$ (possibly different from $\wf_t(a_{t,1})$) is assigned to the query arm $a_{t,1}$, and datasets $\widetilde{\D}_{t,1} = \D_t \cup \{ ( a_{t,1}, \widetilde{y}_{t,1}  ) \}$ and $\widetilde{\U}_{t,1}= \U_{t} \backslash \{ a_{t,1}\}$ are defined. The same optimization procedure that produced $\wf_{t,1}$ is used to output $\wf_{t,2}$ now with $\widetilde{\D}_{t,1}$ and $\widetilde{\U}_{t,1}$ as inputs. Arm $a_{t,2}$ is defined as the greedy choice $a_{t,2} = \argmax_{a \in \widetilde{\U}_{t,1}} \wf_{t,2}(a)$. The remaining batch elements (if any) are determined by successive repetition of this process so that $\widetilde{D}_{t,i} = \D_t\cup\{ (a_{t,j}, \widetilde y_{t,j} ) \}_{j=1}^{i}$ and $\widetilde{\U}_{t,i} = \U_t\backslash \{ a_{t,j} \}_{j=1}^{i} $. The trace of this procedure leaves behind a sequence of functions and datasets $\{ (\wf_{t,i}, \widetilde{\U}_{t,i}) \}_{i=1}^b$ such that $a_{t,i} \in \argmax_{\widetilde{\U}_{t,i-1}} \wf_{t,i}( a) $. %

\begin{algorithm}[H]
\textbf{Input} Action set $\Acal \subset \mathbb{R}^d $, number of batches $N$, batch size $b$, pessimism-optimism balancing parameter $\alpha$.\\
\textbf{Initialize } Unpulled arms $\U_1 = \Acal $. Observed points and labels dataset $\D_1 = \emptyset$\\
\For{$t = 1, \cdots, N$}{
    \If{$t=1$}{
        Sample uniformly a size $b$ batch $B_t \sim \U_1$.\\
    }
    \Else{
    
    Solve for  $\{ (\wf_{t,i}, \widetilde{\U}_{t,i} ) \}_{i=1}^b$ via the procedure described in the text.\\
    Compute 
    \begin{equation}\label{equation::seq_batch_selection_rule}
       B_t = \{ a_{t,i} \in \argmax_{ a \in \widetilde{U}_{t,i-1}} \wf_{t,i}(a) \}_{i=1}^b. 
    \end{equation}
    \\

    }
    Observe batch rewards $Y_t = \{ f_*(a) \text{ for } a \in B_t\} $ \\
    Update $\D_{t+1} = \D_t \cup \{(B_t, Y_t) \} \in \D$.\\
    Update $\U_{t+1} = \U_t \backslash B_t$ .\\

}

\caption{Optimistic Arm Elimination - Batch Sequential ($\mathrm{OAE-Seq}$) }
\label{algorithm::seqdiv_optimism_elimination}
\end{algorithm}

To determine the value of the virtual rewards $\widetilde{y}_{t, i}$, we consider a variety of options.  We start by discussing the case when the fake reward $\widetilde{y}_{t,i} = \wf_{t}(a_{t,i})$ and the acquisition function equals $\Abb_{\mathrm{sum}}(f,\U)$. 

When $\gamma_t = 0$, the fake reward satisfies $\widetilde{y}_{t,i} = \wf_{t}(a_{t,i})$ and $y_\star \in \Fcal$ it follows that $\wf_{t,i} = \wf_t$ independent of $i \in [B]$ is a valid choice for the function ensemble $\{ \wf_{t,i} \}_{i=1}^b$. In this case the query batch $B_t$ can be computed by solving for $\wf_t$,
\begin{align}
    \wf_{t}  &\in \arg\max_{f \in \Fcal } \sum_{a \in \U_t} f(a) \qquad \text{s.t. } \sum_{(a,y) \in \widetilde{\D}_{t, i}} \left(f(a) -y\right)^2 = 0. \label{equation::equivalence_sum_objective_sequential_batch_definition_appendix}
\end{align}
And defining the batch $B_t$ as $$B_t = \argmax_{\B\subseteq \U_t \text{ s.t. } |\B | = b} \sum_{a \in \B} \wf_t(a).$$ The equivalence between this definition of $B_t$ and the sequential batch selection rule follows by noting the equality constraint from~\ref{equation::equivalence_sum_objective_sequential_batch_definition_appendix} ensures that $\wf_{t,i} = \wf_t$ is a valid solution for each of the intermediate problems defining the sequence $\{(\wf_{t,i}, \widetilde{U}_{t,i} )\}_{i=1}^B $.  

$\mathrm{OAE-Seq}$ is designed with a more general batch selection rule that may yield distinct intermediate arm selection functions $\{ \wf_{t,i}\}$. In our experiments we compute the virtual reward $\widetilde{y}_{t,i}$ as an average of $\wf_{t,i}^{\mathrm{optimistic}}$ and $\wf_{t,i}^{\mathrm{pessimistic}}$, optimistic and a pessimistic estimators of the responses in $\mathcal{U}_{t, i-1}$. 

We consider two mechanisms for computing $\wf_{t,i}^{\mathrm{optimistic}}$ and $\wf_{t,i}^{\mathrm{pessimistic}}$. First when the computation of $\wf_{t,i}$ is based on producing an uncertainty function $\tilde u_{t,i}$ and a base model $\wf_{t,i}^o$, we define $\wf_{t,i}^{\mathrm{optimistic}} = \wf_{t,i}^o + u_{t,i}$ and $\wf_{t,i}^{\mathrm{pessimistic}}  = \wf_{t,i}^o - \tilde u_{t,i}$. Second, we define $\wf_{t,i}^{\mathrm{optimistic}}$ and $\wf_{t,i}^{\mathrm{pessimistic}}$ as the solutions of the constrained objectives
\begin{align*}
    \wf^{\mathrm{optimistic}}_{t,i}  = \arg\max_{f \in \Fcal_{t,i} }  \Abb(f, \U_{t})\qquad \text{and} \qquad  \wf^{\mathrm{pessimistic}}_{t,i}  = \arg\min_{f \in \Fcal_{t,i} }  \Abb(f, \U_{t}) 
\end{align*}

Where $\Fcal_{t,i} = \{ f \in \Fcal \text{ s.t. } \sum_{(a,y) \in \D_{t,i-1}} \left(f(a) -y\right)^2 \leq \gamma_t\} $. In both cases we define the fictitious rewards as a weighted average of the pessimistic and optimistic predictors $\widetilde{y}_{t,i} = \alpha \wf_{t,i}^{\mathrm{optimistic}} + (1-\alpha)\wf_{t,i}^{\mathrm{pessimistic}}$ where $\alpha \in [0,1]$ is an optimism weighting parameter while we keep the functions used to define what points are part of the batch as $\wf_{t,i} = \wf_{t,i}^{\mathrm{optimistic}}$. The principle of adding hallucinated values to induce diversity has been proposed before for example in the $\mathrm{GP-BUCB}$ algorithm of~\cite{JMLR:v15:desautels14a}.

\section{ Tractable Implementations of $\mathrm{OAE-DvD}$ and $\mathrm{OAE-Seq}$}\label{section::appendix_tractable_implementation}

In this section we go over the algorithmic details behind the approximations that we have used when implementing the different $\mathrm{OAE}$ methods we have introduced in Section~\ref{sec::optimistic_arm_elimination}.

\subsubsection{Diversity via Determinants}\label{section::div_det_tractable}

The $\mathrm{OAE-DvD}$ algorithm differs from $\mathrm{OAE}$ only in the way in which the query batch $B_t$ is computed. $\mathrm{OAE-DvD}$ uses equation~\ref{equation::oae_batch_selection_diversity_dvd} instead of equation~\ref{equation::batch_selection_rule}. Solving for $B_t$ is done via the greedy algorithm described in Section~\ref{section::oae_dvd_intro}.  The function $\wf_t$ can be computed via a regularized optimization objective or using an ensemble. In our experimental evaluation we opt for defining $\wf_t$ via the regularization route as the result of solving problem~\ref{equation::approximate_objective_average}. More experimental details including the type of kernel used are explained in Section~\ref{section::experiments_diversity_seeking}.  In our experimental evaluation we use the $\mathrm{MeanOpt}$ objective to produce $\wf_t$ and we refer to the resulting method as $\mathrm{DetD}$.

\subsubsection{Sequential Batch Selection Rules}\label{section::div_seq_tractable}

We explore different ways of defining the functions $\{\wf_{t,i}\}_{i=1}^b$. Depending on what procedure we use to optimize and produce these functions we will obtain different versions of $\mathrm{OAE-Seq}$. We use the name $\mathrm{SeqB}$ to denote the $\mathrm{OAE-Seq}$ method that fits the functions $\{\wf^{\mathrm{optimistic}}_{t,i}\}_{i=1}^b$ and $\{\wf^{\mathrm{pessimistic}}_{t,i}\}_{i=1}^b$ by solving the regularized objectives,
\begin{align*}
     \wf^{\textcolor{blue}{\mathrm{optimistic }}}_{t,i} &\in \argmin_{f \in \Fcal} \frac{1}{|\D_{t,i-1}|} \sum_{(a,y) \in \D_{t, i-1}} ( f(a) - y )^2 \textcolor{blue}{\mathbf{-}} \lambda_{\mathrm{reg}} \Abb(f, \U_{t, i-1}),\\
          \wf^{\textcolor{red}{\mathrm{pessimistic }}}_{t,i} &\in \argmin_{f \in \Fcal} \frac{1}{|\D_{t,i-1}|} \sum_{(a,y) \in \D_{t, i-1}} ( f(a) - y )^2 \textcolor{red}{\mathbf{+}} \lambda_{\mathrm{reg}} \Abb(f, \U_{t, i-1}).
\end{align*}
In our experiments we set the acquisition function to $\Abb_{\mathrm{avg}}$. For some value\footnote{As we have pointed out in Section~\ref{section::seq_batch_selection_intro} setting $\lambda_{\mathrm{reg}} = 0$ reduces $\mathrm{OAE-Seq}$ to vanilla $\mathrm{OAE}$.} of $\lambda_\mathrm{reg} > 0$. The functions $\{\wf_{t,i}\}_{i=1}^b$ are defined as $\wf_{t,i} = \wf_{t,i}^{\mathrm{optimistic}}$ and the virtual rewards as $\widetilde{y}_{t,i}=  \alpha \wf_{t,i}^{\mathrm{optimistic}} + (1-\alpha)\wf_{t,i}^{\mathrm{pessimistic}}$ where $\alpha \in [0,1]$ is an optimism-pessimism weighting parameter. In our experimental evaluation we use $\alpha = \frac{1}{2}$. More experimental details are presented in Section~\ref{section::experiments_diversity_seeking}. 

The functions $\{\wf^{\mathrm{optimistic}}_{t,i}\}_{i=1}^b$ and $\{\wf^{\mathrm{pessimistic}}_{t,i}\}_{i=1}^b$ can be defined with the use of an ensemble. Borrowing the definitions of Section~\ref{section::ensemble_introduction} 
\begin{align*}
     \wf^{\textcolor{blue}{\mathrm{optimistic }}}_{t,i} = \wf_{t,i}^o\textcolor{blue}{\mathbf{+}}\tilde u_{t,i}, \qquad  \wf^{\textcolor{red}{\mathrm{pessimistic }}}_{t,i} =\wf_{t,i}^o\textcolor{red}{\mathbf{-}}\tilde u_{t,i}.
\end{align*}
Where $\wf_{t,i}^o$ and $\tilde u_{t,i}$ are computed by first fitting an ensemble of models $\{ \widehat{f}_{t,i}^j\}_{j=1}^M$ using dataset $\mathcal{D}_{t,i-1}$. In our experimental evaluation we explore the use of $\mathrm{Ensemble}$ and $\mathrm{EnsembleNoiseY}$ optimization styles to fit the models $\{ \widehat{f}_{t,i}^j\}_{j=1}^M$. In our experiments we use the names $\mathrm{Ensemble-SeqB}$ and $\mathrm{Ensemble-SeqB-NoiseY}$ to denote the resulting sequential batch selection methods. More details of our implementation can be found in Section~\ref{section::experiments_diversity_seeking}.

\section{Theoretical Results}

\subsection{Quantifying the Query Complexity of $\Fcal$}\label{section::query_complexity}

Let $\epsilon \geq 0$ and define $t_{\mathrm{opt}}^\epsilon( \Acal, f) $ to be the first time-step when an $\epsilon-$optimal point $\hat{a}_t$ is proposed by a learner (possibly randomized) when interacting with arm set $\Acal \subset \mathbb{R}^d$ and the pseudo rewards are noiseless evaluations $\{ f(a) \}_{a \in \Acal}$ with $f \in \Fcal$. We define the query complexity of the $\Acal, \Fcal$ pair as, 
\begin{equation*}
    \mathbb{T}^\epsilon(\Acal, \Fcal) = \min_{\mathrm{Alg}}\max_{f \in \Fcal} \mathbb{E}\left[ t_{\mathrm{opt}}^\epsilon( \Acal, f ) \right] 
\end{equation*}

Where the minimum iterates over all possible learning algorithms. We can lower bound of the problem complexity for several simple problem classes,

\begin{restatable}{lemma}{lowerbound}\label{lemma::complexity_results}
When $\Acal = \{ \| x \| \leq 1 \text{ for }  x \in \mathbb{R}^d\} $ and 
\begin{enumerate}
    \item If $\Fcal$ is the class of linear functions defined by vectors in the unit ball $\Fcal = \{ x \rightarrow \theta^\top x :  \| \theta \| \leq 1 \text{ for }  \theta \in \mathbb{R}^d \}$ then $   \mathbb{T}^\epsilon(\Acal, \Fcal)  \geq d$ when $\epsilon < \frac{1}{d}$ and $  \mathbb{T}^\epsilon(\Acal, \Fcal) \geq \left\lceil d - \epsilon d \right\rceil$ otherwise. 
\item If $\Fcal$ is the class of $1$-Lipschitz functions functions then $   \mathbb{T}^\epsilon(\Acal, \Fcal)  \geq \left(\frac{1}{4\epsilon}\right)^d$. 
\end{enumerate}
\end{restatable}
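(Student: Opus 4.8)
Both bounds are lower bounds on a min–max query cost, so the strategy is an adversary/indistinguishability argument; for the randomized part I will pass to a distribution over instances and invoke Yao's minimax principle, i.e.\ bound $\min_{\mathrm{Alg}\ \mathrm{deterministic}} \mathbb{E}_{f\sim\mu}\!\left[t^\epsilon_{\mathrm{opt}}(\Acal,f)\right]$ for a well chosen prior $\mu$ on $\Fcal$. In both cases the point is that noiseless responses reveal only a low-dimensional ``shadow'' of $f$, and an $\epsilon$-optimal proposal requires that shadow to be informative enough about the location of $\max f$.

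\emph{Linear class.} For the deterministic bound the adversary answers $0$ to every query; after $k\le d-1$ queries $a_1,\dots,a_k$ their span $W_k$ has dimension at most $d-1$, so any unit $\theta^\star\in W_k^\perp$ is consistent with all responses, the instance $f_{\theta^\star}(x)=(\theta^\star)^\top x$ has $\max_{\Acal}f_{\theta^\star}=1$ attained at $\theta^\star$, and every arm seen so far has value $0<1-\epsilon$ (for $\epsilon<1$); hence $t^\epsilon_{\mathrm{opt}}\ge d$, which in particular gives the stated bound when $\epsilon<1/d$. For general (randomized) learners take $\mu$ uniform on the unit sphere. The key structural fact is that after $k$ adaptive queries the responses pin down exactly $\Pi_{W_k}\theta$ (and $W_k$ itself is a deterministic function of $\Pi_{W_k}\theta$ by induction on the query rule), while conditionally the component of $\theta$ orthogonal to $W_k$ is uniform on the sphere of radius $\sqrt{1-\|\Pi_{W_k}\theta\|^2}$ in $W_k^\perp$. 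A short computation (maximizing $\langle\theta,\hat a\rangle$ over the learner's admissible proposals $\hat a$) shows a proposal is $\epsilon$-optimal only when $\|\Pi_{W_k}\theta\|\ge 1-\epsilon$ up to a term controlled by the inner product of a fixed unit vector with a uniform point on a sphere of dimension $\ge d-k$, which concentrates to $0$. Finally the quantity $X_k:=\|\Pi_{W_k}\theta\|^2$ obeys $\mathbb{E}[X_{k+1}-X_k\mid\mathcal F_k]=(1-X_k)/\dim W_k^\perp\le (1-X_k)/(d-k)$, whence $\mathbb{E}[X_k]\le k/d$, and a tail bound for this projection (a Beta-distribution tail) shows the first round at which $X_k\ge(1-\epsilon)^2$ is at least $\lceil d-\epsilon d\rceil$ in expectation.

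\emph{$1$-Lipschitz class.} Use a packing. Fix a maximal subset $\{c_1,\dots,c_N\}\subseteq\Acal$ with pairwise distances $\ge 4\epsilon$; maximality makes it a $4\epsilon$-net of $\Acal$, so comparing volumes gives $N\ge(1/(4\epsilon))^d$. Define the $1$-Lipschitz ``tents'' $f_i(x)=\max\!\big(0,\, 2\epsilon-\delta-\|x-c_i\|\big)$ for a tiny $\delta>0$, together with $f_0\equiv 0$; then $\max_{\Acal}f_i$ is attained at $c_i$, the set of $\epsilon$-optimal arms for $f_i$ is exactly the ball $B(c_i,\epsilon)$, these balls are pairwise disjoint, $f_i$ coincides with $f_0$ outside $B(c_i,2\epsilon-\delta)$, and by the $4\epsilon$-separation a single point lies in at most one of the sets $B(c_i,2\epsilon-\delta)$. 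Hence running against $f_0$ (all responses $0$) a learner probes at most $k$ centers in $k$ rounds and proposes arms hitting at most $k$ of the balls $B(c_i,\epsilon)$, so for a deterministic learner there is an index $i^\star$ neither probed nor hit whenever $k<N$; taking $f=f_{i^\star}$, which is indistinguishable from $f_0$ on the whole transcript, forces $t^\epsilon_{\mathrm{opt}}\ge N\ge (1/(4\epsilon))^d$. For randomized learners one averages over $i^\star$ uniform (Yao), at worst losing an absolute constant that can be reabsorbed by enlarging the packing separation.

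\emph{Main obstacle.} For the linear class the delicate step is the quantitative control of the ``lucky perpendicular guess'': one must show that the concentration term is small enough, uniformly over the learner's proposal, to extract the clean threshold $\lceil d-\epsilon d\rceil$ rather than a lossy constant, and to make the martingale/Beta-tail estimate interact correctly with the adaptivity of $W_k$. For the Lipschitz class the only real work is the combinatorial bookkeeping guaranteeing that each query and each proposal ``kills'' at most one center, which is why the packing separation is taken to be a fixed multiple of $\epsilon$ and the bump radius correspondingly slightly smaller.
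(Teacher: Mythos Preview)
Your plan is essentially the same as the paper's proof: Yao's principle with $\theta$ uniform on the sphere for the linear class, tracking the martingale $X_k=\|\Pi_{W_k}\theta\|^2$ with $\mathbb{E}[X_k]=k/d$; and a packing of the ball with $1$-Lipschitz ``tent'' functions (the paper writes $f_{\mathbf{x}}^\epsilon(z)=\mathrm{dist}(z,S(\mathbf{x},2\epsilon))$, you write $\max(0,2\epsilon-\delta-\|x-c_i\|)$) together with Yao over a uniform choice of center. The Lipschitz bookkeeping and constants match up to inessential choices (you use a $4\epsilon$-packing, the paper a $2\epsilon$-packing; both land on $(1/(4\epsilon))^d$ up to an absolute factor that the paper also incurs).

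One small remark on presentation: your separate ``adversary answers $0$'' paragraph only handles \emph{deterministic} learners and so does not by itself deliver $\mathbb{T}^\epsilon(\Acal,\Fcal)\ge d$ for the randomized minimax quantity; the Yao argument you give next is what actually does the work for both cases $\epsilon<1/d$ and $\epsilon\ge 1/d$ (indeed $\lceil d-\epsilon d\rceil=d$ when $\epsilon<1/d$). You flag the right delicate point---passing from $\mathbb{E}[X_k]\le k/d$ to a stopping-time lower bound while controlling the ``lucky perpendicular guess''---and in fact the paper's own write-up is terse at exactly this step, so your plan is if anything more candid about what remains to be checked.
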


\begin{proof}
As a consequence of Yao's principle, we can restrict ourselves to deterministic algorithms. Indeed, 
\begin{equation*}
    \min_{\mathrm{Alg}} \max_{f\in \Fcal} \mathbb{E}\left[t^\epsilon_{\mathrm{opt}}(\Acal, f) \right] = \max_{\D_{\Fcal}} \min_{\mathrm{DetAlg}} \mathbb{E}_{f \sim \D_{\Fcal}}\left[t^\epsilon_{\mathrm{opt}}(\Acal, f) \right]
\end{equation*}
Thus, to prove the lower bound we are after it is enough to exhibit a distribution $\D_{\Fcal}$ over instances $f \in\Fcal$ and show a lower bound for the expected $t_{\mathrm{opt}}^\epsilon(\Acal, f)$ where the expectation is taken using $\D_\Fcal$.

With the objective of proving item $1$ let $\D_{\Fcal}$ be the uniform distribution over the sphere $\mathbf{Unif}_d(1)$. By symmetry it is easy to see that 
\begin{equation*}
    \mathbb{E}_{\theta \sim \mathbf{Unif}_d(r) }\left[ \theta_i \right] = 0, \quad \forall i \in d \text{ and } \forall r \geq 0. 
\end{equation*}
Thus,
\begin{equation}\label{equation::second_moment_upper_bound_random_sphere}
    \mathrm{Var}_{\theta \sim \mathbf{Unif}_d(r) }\left( \theta_i \right) = \mathbb{E}_{\theta \sim \mathbf{Unif}_d(r)}[\theta_i^2] \stackrel{(i    )}{=} \frac{r^2}{d}.
\end{equation}
Equality $(i)$ follows because 
\begin{equation*}
\sum_{i=1}^d    \mathbb{E}_{\theta \sim \mathbf{Unif}_d(r)}[\theta_i^2]  =  \mathbb{E}_{\theta \sim \mathbf{Unif}_d(r)}[\|\theta\|^2]  = r^2
\end{equation*}
and because by symmetry for all $i, j \in [d]$ the second moments agree, 
\begin{equation*}
    \mathbb{E}_{\theta \sim \mathbf{Unif}_d(r)}[\theta_i^2] = \mathbb{E}_{\theta \sim \mathbf{Unif}_d(r)}[\theta_j^2]\\
\end{equation*}
Finally, 

\begin{equation}\label{equation::scale_coordinate_uniform_sphere}
     \mathbb{E}_{\theta \sim \mathbf{Unif}_d(r)}[\left|\theta_i\right|] \leq \sqrt{ \mathbb{E}_{\theta \sim \mathbf{Unif}_d(r)}[\theta_i^2] } = \frac{r}{\sqrt{d}}.
\end{equation}
Let $\mathrm{DetAlg}$ be the optimal deterministic algorithm for $\D_\Fcal$ and $a_1$ be its first action. Since $\D_{\Fcal}$ is the unofrm distribution over the sphere, inequality~\ref{equation::scale_coordinate_uniform_sphere} expected scale of the reward reward experienced is upper bounded by $\frac{1}{\sqrt{d}}$, and furthermore, since $\| a_1 \| =1$, the expected second moment of the reward experienced (where expectations are taken over $\D_\Fcal$) equals $\frac{1}{d}$.

We now employ a conditional argument, if $\mathrm{DetAlg}$ has played $a_1$ and observed a reward $r_1$, 

We assume that up to time $m$ algorithm $\mathrm{DetAlg}$  has played actions $a_1, \cdots, a_m$ and received rewards $r_1, \cdots, r_m$. 

Given these outcomes, $\mathrm{DetAlg}$ can recover the component of $\theta$ lying in $\mathbf{span}(a_1, \cdots, a_m)$. Let $a_{m+1}$ be $\mathrm{DetAlg}$'s action at time $m+1$. By assumption this is a deterministic function of $a_1, \cdots, a_m$ and $r_1, \cdots, r_m$. Since $\theta$ is drawn from $\mathbf{Unif}_d(1)$, the expected squared dot product between the component of $a_{m+1}^\perp = \mathbf{Proj}(a_{m+1}, \mathbf{span}(a_1, \cdots, a_m)^\perp )$ satisfies, 

\begin{align}
    \mathbb{E}_{\theta \sim \D_{\Fcal} | \{a_1^\top \theta = r_i\}_{i=1}^m}\left[ \left(\theta^\top a_{m+1}^{\perp } \right)^2 \right] &= \frac{1-\| \theta_{m}^0\|^2}{d-m} \left( 1- \| a_{m+1}^0 \|^2\right) \notag \\
    &= \frac{1-\| \theta_{m}^0\|^2}{d-m}. \label{equation::upper_bounding_squared_dot}
\end{align}
where $\theta_m^0= \mathbf{Proj}(\theta, \mathbf{span}(a_1, \cdots, a_m) )$. The last inequality follows because the conditional distribution of $\mathbf{Proj}(\theta, \mathbf{span}( a_1, \cdots, a_m)^\perp)$ given $a_1, \cdots, a_m$ and $r_1, \cdots, r_m$ is a uniform distribution over the $d-m$ dimensional sphere of radius $\sqrt{1-\| \theta_{m}^0\|^2}$, the scale of $a_{m+1}^\perp$ is $\sqrt{ 1-\| a_{m+1}^0\|^2 }$ and we have assumed the $\| a_{m+1}^0 \| = 0$. Thus, the agreement of $a_{m+1}^\perp$ with $\mathbf{Proj}(\theta, \mathbf{span}( a_1, \cdots, a_m)^\perp)$ satisfies Equation~\ref{equation::second_moment_upper_bound_random_sphere}. 

We consider the expected square norm of the recovered $\theta$ up to time $m$. This is the random variable $\|\theta_m^0\|^2 = \sum_{t=1}^m \left(\theta^\top a_{t}^\perp\right)^2 $ where $a_t^\perp = \mathbf{Proj}( a_t, \mathbf{span}(a_1, \cdots, a_{t-1})^\perp)$. Thus,

\begin{align*}
    \mathbb{E}_{\theta \sim \D_\Fcal}\left[ \|\theta_m^0\|^2 \right] &=  
    \mathbb{E}_{\theta \sim \D_\Fcal}\left[  \sum_{t=1}^m \left(\theta^\top a_{t}^\perp\right)^2 \right]  \\
    &= \mathbb{E}_{\theta \sim \D_\Fcal}\left[  \sum_{t=1}^m  \mathbb{E}_{\theta \sim \D_{\Fcal} | \{a_1^\top \theta = r_i\}_{i=1}^{t-1}}\left[ \left(\theta^\top a_{t}^{\perp } \right)^2 \right] \right]  \\
    &\stackrel{(i)}{=} \mathbb{E}_{\theta \sim \D_\Fcal}\left[  \sum_{t=1}^m  \frac{1-\|\theta_{t-1}^0\|^2}{d-m} \right]
\end{align*}
Equality $(i)$ holds because of~\ref{equation::upper_bounding_squared_dot}. Recall that by Equation~\ref{equation::second_moment_upper_bound_random_sphere}, 
\begin{align*}
     \mathbb{E}_{\theta \sim \D_\Fcal}\left[ \|\theta_1^0\|^2 \right] = \frac{1}{d}.
\end{align*}
Thus by the above equalities,
\begin{align*}
     \mathbb{E}_{\theta \sim \D_\Fcal}\left[ \|\theta_2^0\|^2 \right] =\frac{1}{d} +  \frac{1-\frac{1}{d}}{d-1} = \frac{2}{d}.
\end{align*}
Unrolling these equalities further we conclude that 
\begin{align*}
    \mathbb{E}_{\theta \sim \D_\Fcal}\left[ \|\theta_m^0\|^2 \right] = \frac{m}{d}.
\end{align*}
This implies the expected square agreement between the learner's virtual guess $\hat{a}_t$ is upper bounded by $\frac{m}{d}$. Thus, when $\epsilon < \frac{1}{d}$, the expected number of queries required is at least $d$. When $\epsilon > d$, the expected number of queries instead satisfies a lower bound of $\lceil d - \epsilon d \rceil$.

We now show shift our attention to Lipschitz functions. First we introduce the following simple construction of a $1-$Lipschitz function over a small ball of radius $\epsilon$. We use this construction throughout our proof. Let $\mathbf{x} \in \mathbb{R}^d$ be an arbitrary vector, define $B(\mathbf{x}, \epsilon)$ as the ball centered around $\mathbf{x}$ of radius $2\epsilon$ under the $\| \cdot \|_2$ norm and $S(\mathbf{x}, 2\epsilon)$ as the sphere (the surface of $B(\mathbf{x}, 2\epsilon)$) centered around $\mathbf{x}$ 

Define the function $f_\mathbf{x}^\epsilon : \mathbb{R}^d\rightarrow \mathbb{R}$ as, 

\begin{equation*}
    f_\mathbf{x}^\epsilon(\mathbf{z}) = \begin{cases} \min_{\mathbf{z}' \in S(\mathbf{x}, 2\epsilon) } \|\mathbf{z}- \mathbf{z}' \|_2 &\text{if } \mathbf{z} \in B(\mathbf{x}, 2\epsilon)\\
    0 &\text{o.w. }
    \end{cases}
\end{equation*}

It is easy to see that $f_{\mathbf{x}}^\epsilon$ is $1-$Lipschitz. We consider three different cases, 

\begin{enumerate}
    \item If $\mathbf{z}_1, \mathbf{z}_2 \in B(\mathbf{x}, 2\epsilon)^c $ then $|f_\mathbf{x}^\epsilon(\mathbf{z}_1) - f_\mathbf{x}^\epsilon(\mathbf{z}_2) | =0 \leq \| \mathbf{z}_1 - \mathbf{z}_2 \|$. The result follows.
    \item If $\mathbf{z}_1\in B(\mathbf{x}, 2\epsilon)$ but $ \mathbf{z}_2 \in B(\mathbf{x},2 \epsilon)^c $. Let $z_3$ be the intersection point in the line going from $\mathbf{z}_1$ to $\mathbf{z}_2$ lying on $S(\mathbf{x}, 2\epsilon)$. Then $|f_\mathbf{x}^\epsilon(\mathbf{z}_1) - f_\mathbf{x}^\epsilon(\mathbf{z}_2) | = \min_{\mathbf{z}' \in S(\mathbf{x}, 2\epsilon) } \|\mathbf{z}_1- \mathbf{z}' \|_2 \leq \| \mathbf{z}_1 - \mathbf{z}_3\|_2 \leq \| \mathbf{z}_1 - \mathbf{z}_2\|_2$. 
    \item If $\mathbf{z}_1, \mathbf{z}_2\in B(\mathbf{x}, 2\epsilon)$. It is easy to see that $| f_\mathbf{x}^\epsilon(\mathbf{z}_1) - f_\mathbf{x}^\epsilon(\mathbf{z}_2) | = | \| \mathbf{z}_1 - \mathbf{x} \|_2 - \| \mathbf{z}_2 - \mathbf{x} \|_2 | $. And therefore by the triangle inequality applied to $\mathbf{x}, \mathbf{z}_1, \mathbf{z}_2$, that  $| \| \mathbf{z}_1 - \mathbf{x} \|_2 - \| \mathbf{z}_2 - \mathbf{x} \|_2 | \leq \| \mathbf{z}_1 - \mathbf{z}_2\|_2$. The result follows.
\end{enumerate}

Let $\mathcal{N}(B(\mathbf{0}, 1), 2\epsilon)$ be a $2\epsilon-$packing of the unit ball. For simplicity we'll use the notation $N_{2\epsilon} = \left|\mathcal{N}(B(\mathbf{0}, 1), 2\epsilon)\right|$. Define the set of functions $\Fcal^\epsilon = \{f_{\mathbf{x}}^\epsilon \text{ for all } \mathbf{x} \in  \mathcal{N}(B(\mathbf{0}, 1), 2\epsilon) \}$ and define $\D_{\Fcal}$ as the uniform distribution over $\Fcal^\epsilon \subset \Fcal$. Similar to the case when $\Fcal$ is the set of linear functions, we make use of Yao's principle. Let $\mathrm{DetAlg}$ be an optimal deterministic algorithm for $\D_{\Fcal}$.  

Let $a_i$ be $\mathrm{DetAlg}$'s $i-$th query point and $r_i$ be the $i-$th reward it receives. If the ground truth was $f_{\mathbf{x}}^\epsilon$ and the algorithm does not sample a query point from inside $B(\mathbf{x}, 2\epsilon)$, it will receive a reward of $0$ and thus would not have found an $\epsilon-$optimal point. Thus $t_{\mathrm{opt}}(f_{\mathbf{x}}^\epsilon) \geq \text{first time to pull an arm in } B(\mathbf{x}, 1) $. As a consequence of this fact, 
\begin{equation*}
    \mathbb{E}_{f_\mathbf{x}^\epsilon \sim \D_\Fcal}\left[  \mathbf{1}( a_1 \in B(x, 2\epsilon)) \right] \leq \frac{1}{N_{2\epsilon}}.
\end{equation*}
Hence $\mathbb{E}_{f_{\mathbf{x}}^\epsilon \sim \D_\Fcal} \left[ \mathbf{1}( a_1 \not\in B(x, 2\epsilon)) \right] \geq \frac{N_{2\epsilon}-1}{N_{2\epsilon}}$. Therefore, 
\begin{align*}
    \mathbb{E}\left[ t_{\mathrm{opt}}^\epsilon \right] &\geq \sum_{\ell=1}^{N_{2\epsilon}} \frac{N_{2\epsilon}-\ell}{N_{2\epsilon} -\ell+1}\\
    &\geq \sum_{\ell=1}^{N_{2\epsilon}/2} \frac{N_{2\epsilon}-\ell}{N_{2\epsilon} -\ell+1}\\
    &\geq \frac{1}{4}N_{2\epsilon}.
\end{align*}
Since $N_{2\epsilon} \stackrel{(i)}{\geq} \mathrm{Covering}( B(\mathbf{0}, 1) , 4\epsilon ) \stackrel{(ii)}{\geq} \left(\frac{1}{4\epsilon}\right)^d$ where inequality $(i)$ is a consequence of Lemma 5.5 and inequality $(ii)$ from Lema 5.7 in~\cite{wainwright2019high}.

\end{proof}

\paragraph{Translating to Quantile Optimality}. The results of Lemma~\ref{lemma::complexity_results} can be interpreted in the langauge of quantile optimality by imposing a uniform measure over the sphere. In this case $\epsilon-$optimality is equivalent (approximately) to a $1-2^{-d(1-\epsilon)^2}$ quantile. 

 The results of Lemma~\ref{lemma::complexity_results} hold regardless of the batch size $B$. It is thus impossible to design an algorithm that can single out an $\epsilon$-optimal arm in less than $\mathbb{T}^\epsilon( \Acal, \Fcal )$ queries for all problems defined by the pair $\Acal, \Fcal$ simultaneously.

\subsection{Optimism and its properties }\label{section::optimism_appendix}

The objective of this section is to prove Theorem~\ref{theorem::main_regret_result} which we restart for the reader's convenience. 

\mainresulttheory*

Let's start by defining the $\epsilon-$Eluder dimension, a complexity measure introduced by~\cite{russo2013eluder} to analyze optimistic algorithms. Throughout this section we'll use the notation $\| g \|_{A} = \sqrt{ \sum_{a \in A} g^2(a) }$ to denote the data norm of function $g : \Acal \rightarrow \mathbb{R}$.

\begin{definition}
Let $\epsilon \geq 0$ and $\mathcal{Z} = \{ a_i \}_{i=1}^n \subset \Acal $ be a sequence of arms. 
\begin{enumerate}
    \item An action $a$ is $\epsilon-$dependent on $\mathcal{Z}$ with respect to $\Fcal$ if any $f, f' \in \Fcal$ satisfying $\sqrt{\sum_{i=1}^n ( f(a_i) - f'(a_i) )^2 } \leq \epsilon $ also satisfies $\left| f(a) - f'(a) \right| \leq \epsilon$. 
    \item An action $a$ is $\epsilon-$independent of $\mathcal{Z}$ with respect to $\Fcal$ if $a$ is not $\epsilon-$dependent on $\mathcal{Z}$.
    \item The $\epsilon-$eluder dimension $\mathrm{dim}_{E}(\Fcal,\epsilon)$ of a function class $\Fcal$ is the length of the longest sequence of elements in $\Acal$ such that for some $\epsilon' \geq \epsilon$, every element is $\epsilon'-$independent of its predecessors. 
\end{enumerate}

\end{definition}

\begin{restatable}{lemma}{lemmaoptimism}\label{lemma::optimism}
Let's assume $y_\star$ satisfies $\min_{f \in \Fcal} \| y_\star - f\|_\infty \leq \omega$ where $\| g \|_\infty = \max_{a \in \mathcal{Z}} |g(a)|$. Let $\widetilde y_\star = \argmin_{f \in \Fcal} \| y_\star - f\|_\infty$ be the $\| \cdot \|_\infty$ projection of $y_\star$ onto $\Fcal$. If $\wf_t$ is computed by solving~\ref{equation::constrained_problem} with $\gamma_t \geq(t-1) \omega^2 b  $  with acquisition objective $\Abb_{\mathrm{max},b}$  the response predictions of $\wf_t$ over values $B_t$ satisfy, 
\begin{equation*}
 \sum_{a \in B_{*,t} } y_\star(a) \leq b\omega + \sum_{a \in B_{*,t} } \widetilde y_\star(a)  \leq  b\omega  + \sum_{a \in B_t} \wf_t(a)  .
\end{equation*}
where $B_{\star,t} = \max_{\B \subset \U_t, |\B| =b} \sum_{a \in \B}  y_\star(a)$.
\end{restatable}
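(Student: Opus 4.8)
The plan is to identify the $\|\cdot\|_\infty$-projection $\widetilde y_\star$ as a feasible point of the constrained fitting problem~\ref{equation::constrained_problem} at every round $t$, and then use the optimality of $\wf_t$ for the $\Abb_{\mathrm{max},b}$ acquisition objective to transfer the (near-optimal) value of $\widetilde y_\star$ on $B_{\star,t}$ onto the value of $\wf_t$ on the selected batch $B_t$.

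First I would check \emph{feasibility} of $\widetilde y_\star$. Since $\D_t$ is built up by appending $b$ pairs per round, it contains exactly $(t-1)b$ pairs at the start of round $t$, and noiselessness means every pair has the form $(a, y_\star(a))$. Combined with $\|y_\star - \widetilde y_\star\|_\infty \le \omega$ this gives
\[
\sum_{(a,y)\in\D_t}\left(\widetilde y_\star(a)-y\right)^2 \;=\; \sum_{(a,y)\in\D_t}\left(\widetilde y_\star(a)-y_\star(a)\right)^2 \;\le\; (t-1)b\,\omega^2 \;\le\; \gamma_t,
\]
so $\widetilde y_\star$ satisfies the constraint. I would flag that this is the one place where the noiseless assumption is essential: with observation noise the empirical squared error of $\widetilde y_\star$ need not be $\le \gamma_t$, which is exactly why the noisy case is left open.

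Next, \emph{optimism}: because the feasible set is nonempty (it contains $\widetilde y_\star$) and $\wf_t$ maximizes $f\mapsto\Abb_{\mathrm{max},b}(f,\U_t)$ over it,
\[
\Abb_{\mathrm{max},b}(\wf_t,\U_t)\;\ge\;\Abb_{\mathrm{max},b}(\widetilde y_\star,\U_t)\;=\;\max_{\B\subset\U_t,\,|\B|=b}\sum_{a\in\B}\widetilde y_\star(a)\;\ge\;\sum_{a\in B_{\star,t}}\widetilde y_\star(a),
\]
where the final step just uses that $B_{\star,t}$ is one admissible size-$b$ subset of $\U_t$. The batch-selection rule~\ref{equation::batch_selection_rule} picks $B_t$ to maximize $\sum_{a\in\B}\wf_t(a)$ over size-$b$ subsets of $\U_t$, so $\sum_{a\in B_t}\wf_t(a) = \Abb_{\mathrm{max},b}(\wf_t,\U_t)$; chaining the two displays gives $\sum_{a\in B_{\star,t}}\widetilde y_\star(a)\le\sum_{a\in B_t}\wf_t(a)$, and adding the nonnegative slack $b\omega$ yields the second inequality of the claim. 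The first inequality is then just the triangle bound $\sum_{a\in B_{\star,t}} y_\star(a) - \sum_{a\in B_{\star,t}}\widetilde y_\star(a) \le \sum_{a\in B_{\star,t}}|y_\star(a)-\widetilde y_\star(a)| \le b\omega$, using $|B_{\star,t}| = b$.

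There is no real analytic obstacle here: the argument is a feasibility check followed by a one-line optimality comparison. The only points I would be careful about are the bookkeeping that $|\D_t| = (t-1)b$ so that it matches the threshold $\gamma_t \ge (t-1)b\omega^2$, and the (trivial but worth stating) observation that maximizing $\Abb_{\mathrm{avg}}$ over fixed-size-$b$ batches is equivalent to maximizing the unnormalized sum that defines $\Abb_{\mathrm{max},b}$, so that $B_t$ indeed realizes $\Abb_{\mathrm{max},b}(\wf_t,\U_t)$.
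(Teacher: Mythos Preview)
Your proposal is correct and follows essentially the same approach as the paper's proof: verify that $\widetilde y_\star$ is feasible for the constraint in~\eqref{equation::constrained_problem} using the $(t-1)b$ noiseless pairs in $\D_t$, then invoke optimality of $\wf_t$ for $\Abb_{\mathrm{max},b}$ to chain $\sum_{a\in B_t}\wf_t(a)\ge \sum_{a\in B_{\star,t}}\widetilde y_\star(a)$, and finish with the pointwise bound $|y_\star(a)-\widetilde y_\star(a)|\le\omega$. Your write-up is in fact slightly more careful than the paper's in making the $|\D_t|=(t-1)b$ bookkeeping explicit and in noting where noiselessness enters.
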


\begin{proof}
Let $\Fcal_t$ be the subset of $\Fcal$ satisfying $f \in \Fcal_t$ if $\sum_{ (x,y) \in \D_t} (f(x) - y)^2 \leq (t-1)\omega^2 \leq \gamma_t$. By definition $\widetilde y_\star \in \mathcal{F}_t$. Substituting the definition of $ \Abb_{\mathrm{max},b}$ into
Equation~\ref{equation::constrained_problem},
\begin{equation*}
\wf_t \in \arg\max_{f \in \Fcal_t} \max_{\B \subset \U_t, | \B| = b} \sum_{a \in \B} f(a) .
\end{equation*}
Since $\widetilde y_\star \in \Fcal_t$,
\begin{equation*}
 \max_{f \in \Fcal_t,\B \subset \U_t, | \B| = b} \sum_{a \in \B} f(b)   =   \sum_{a \in B_t} \wf_t(a) \geq \sum_{a \in \widetilde B_{*,t}} \widetilde y_\star(a) \geq \sum_{a \in  B_{*,t}} \widetilde y_\star(a). 
\end{equation*}
Where $\widetilde B_{\star,t} = \max_{\B \subset \U_t, |\B| =b} \sum_{a \in \B} \widetilde y_\star(a)$ and  $B_{\star,t} = \max_{\B \subset \U_t, |\B| =b} \sum_{a \in \B}  y_\star(a)$. Finally, for all $a \in \Acal$, we have $\widetilde y_\star(a)  + \omega \geq y_\star(a)$. This finalizes the proof.
m\end{proof}

 Since $\widetilde y_\star(a)  + \omega \geq y_\star(a)$. for all $a \in \Acal$, Lemma~\ref{lemma::optimism} implies,
\begin{equation}\label{equation::optimism_consequences}
    \sum_{\ell=1}^t \left(\sum_{a \in B_{\star, t}}y_\star(a) - \sum_{a \in B_t} y_\star(a) \right) \leq 2bt\omega + \sum_{\ell=1}^t \sum_{a \in B_t} \wf_t(a) - \widetilde y_\star(a) 
\end{equation}
We define the width of a subset $\widetilde{\Fcal} \subseteq \Fcal$ at an action $a \in \Acal$ by,
\begin{equation*}
    r_{\widetilde \Fcal}(a) = \sup_{f, f' \in \widetilde{\Fcal}} | f(a) - f'(a)      | .
\end{equation*}
And use the shorthand notation $r_t(a)$ to denote $r_{\Fcal_t}(a)$ where $\Fcal_t = \{ f \in \Fcal \text{ s.t. } \sum_{(a,y) \in \D_t} (f(a) - y)^2 \leq \gamma_t \}$. Equation~\ref{equation::optimism_consequences} implies,
\begin{equation}
    \sum_{\ell=1}^t \left( \sum_{a \in B_{\star, t}}y_\star(a) - \sum_{a \in B_t} y_\star(a) \right) \leq 2bt\omega + \sum_{\ell=1}^t  \sum_{a \in B_t} r_t(a) \label{equation::optimism_final}
\end{equation}
In order to bound the contribution of the sum $\sum_{\ell=1}^t  \sum_{a \in B_t} r_t(a)$ we use a similar technique as in~\cite{russo2013eluder}. First we prove a generalization of Proposition~3 of~\cite{russo2013eluder} to the case of parallel feedback. 

\begin{restatable}{proposition}{propositionadaptedparallel}\label{proposition::adapted_parallel}
If $\{\gamma_t \geq 0 | t \in \mathbb{N}\}$ is a nondecreasing sequence and $\Fcal_t = \{ f \in \Fcal : \sum_{(a,y) \in \D_t} (f(a) - y)^2 \leq \gamma_t \}$ then,
\begin{equation*}
    \sum_{t=1}^T \sum_{a \in B_t} \mathbf{1}( r_t(a) > \epsilon ) \leq  \mathcal{O}\left(\frac{\gamma_t  d}{\epsilon^2} + bd\right).
\end{equation*}
Where $d = \mathrm{dim}_E(\Fcal, \epsilon)$.
\end{restatable}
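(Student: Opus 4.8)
The plan is to follow the proof of Proposition~3 of~\cite{russo2013eluder}, with two adjustments. First, the radius $\gamma_t$ of the least-squares ball $\Fcal_t$ plays the role of the ``confidence radius''; since $(\gamma_t)$ is nondecreasing I would replace every $\gamma_t$ by $\gamma_T$. Second, the parallel feedback is handled by running the greedy Eluder decomposition with $b$ \emph{extra} subsequences together with Hall's marriage theorem applied inside each batch. Note that this proposition itself makes no reference to misspecification — that only enters when the proposition is invoked in the proof of Theorem~\ref{theorem::main_regret_result}, where one takes $\gamma_t \ge (t-1)b\omega^2$ so that $\widetilde y_\star \in \Fcal_t$ (Lemma~\ref{lemma::optimism}). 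The proof splits into a ``few-dependent-subsequences'' estimate and a greedy counting construction.

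For the first estimate, fix a round $t$ and an arm $a \in B_t$ with $r_t(a) > \epsilon$. By definition of $r_t = r_{\Fcal_t}$ there are $f, f' \in \Fcal_t$ with $|f(a) - f'(a)| > \epsilon$; writing $g = f - f'$ and using $g(a')^2 \le 2(f(a')-y)^2 + 2(f'(a')-y)^2$ summed over $\D_t$, the two memberships give $\sum_{(a',y)\in\D_t} g(a')^2 \le 4\gamma_t \le 4\gamma_T$. If $a$ were $\epsilon$-dependent (with respect to $\Fcal$) on $K'$ pairwise disjoint subsequences of the arms of $\D_t$, then, since $|g(a)| > \epsilon$, the contrapositive of the dependence definition forces the squared difference of $g$ over each of those subsequences to exceed $\epsilon^2$, so $K'\epsilon^2 < 4\gamma_T$. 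Hence $a$ is $\epsilon$-dependent on fewer than $4\gamma_T/\epsilon^2$ disjoint subsequences of $\D_t$. The point I would stress is that this budget is global over all $b(t-1)$ arms in $\D_t$, not one budget per batch.

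Then comes the greedy construction. I would set $K = \lceil 4\gamma_T/\epsilon^2\rceil + b$, keep $K$ initially empty ordered subsequences $B_1,\dots,B_K$ of arms, process rounds $t = 1,\dots,T$ in order, and ``freeze'' the subsequences at the start of each round so that the dependence tests within round $t$ see only $B_j \cap \D_t$ (the part of $B_j$ from earlier rounds). Each wide arm $a \in B_t$ is, by the previous step, $\epsilon$-dependent on at most $\lceil 4\gamma_T/\epsilon^2\rceil$ of the $K$ frozen subsequences, hence $\epsilon$-independent of at least $b$ of them; since round $t$ contains at most $b$ wide arms, Hall's marriage theorem provides a matching of the wide arms of round $t$ to \emph{distinct} subsequences in which each arm is matched to one of which it is $\epsilon$-independent, and I append each wide arm to its match. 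Thus every $B_j$ receives at most one arm per round, and whenever an arm from round $t$ is appended to $B_j$ it is $\epsilon$-independent of all prior elements of $B_j$ (which come from rounds $< t$); so every element of $B_j$ is $\epsilon$-independent of its predecessors and $|B_j| \le \mathrm{dim}_E(\Fcal,\epsilon) = d$. Since every wide arm lands in some $B_j$, $\sum_{t=1}^T\sum_{a\in B_t}\mathbf{1}(r_t(a)>\epsilon) = \sum_{j=1}^K |B_j| \le Kd = \mathcal{O}\!\left(\tfrac{\gamma_T d}{\epsilon^2} + bd\right)$.

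The step I expect to be the main obstacle is this within-batch bookkeeping: because $\Fcal_t$ (hence $r_t$) does not shrink between the $b$ queries of a round, one cannot refine the plausible set one query at a time for up to $b-1$ arms, and one must ensure the ``one arm per round per subsequence'' invariant survives — which is exactly what the $+b$ padding of the number of subsequences plus the matching buys, and what produces the additive $bd$. I would also sanity-check the degenerate first round, where $\D_1 = \emptyset$, $\Fcal_1 = \Fcal$, and every arm is vacuously $\epsilon$-independent of the empty subsequences, and note that the $bd$ term is genuine, not an artifact — as the remark after Theorem~\ref{theorem::main_regret_result} observes, an optimistic batch rule may spend a whole batch probing a single under-explored direction.
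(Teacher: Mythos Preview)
Your argument is correct and, in fact, tidier than the paper's own proof. Both proofs share the same skeleton: the width bound $\|f-f'\|_{\D_t}^2 \le 4\gamma_T$ caps the number of disjoint subsequences on which a wide arm can be $\epsilon$-dependent, one then packs the wide arms into a controlled number of Eluder-increasing subsequences, and a matching handles the fact that up to $b$ arms in a round share the same history. The difference is in how the packing and matching are organized. The paper proceeds by contradiction: it fixes a putative long run of wide arms, builds roughly $\widetilde K \approx (\tau-1)b/d$ subsequences batch by batch, and invokes a bespoke bipartite-degree lemma (Lemma~\ref{lemma::graph_theoretic_matching}) to show that eventually some arm must be $\epsilon$-dependent on more than $4\gamma_t/\epsilon^2$ of them, yielding the bound $8\gamma_t d/\epsilon^2 + 12bd$. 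You instead fix the number of bins as $K=\lceil 4\gamma_T/\epsilon^2\rceil + b$ from the outset and run a direct constructive argument, applying Hall's theorem once per round (the Hall condition holds because each of the at most $b$ wide arms has at least $b$ admissible bins); this avoids the auxiliary lemma and the contradiction, and gives the slightly sharper constant $Kd \le (\lceil 4\gamma_T/\epsilon^2\rceil + b)d$. Your version also sidesteps a small wrinkle in the paper's triangle-inequality step (which passes through $\widetilde y_\star$ rather than the observed $y$) by bounding $\|f-f'\|_{\D_t}^2$ directly via $(f-f')^2 \le 2(f-y)^2 + 2(f'-y)^2$.
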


\begin{proof}

We will start by upper bounding the number of disjoint sequences in $\D_{t-1}$ that an action $a \in B_t$ can be $\epsilon-$dependent on when $r_t(a)  > \epsilon$. 

If $r_t(a) > \epsilon$ there exist $\bar{f},\underline{f} \in \Fcal_t$ such that $\bar{f}(a) - \underline{f}(a) > \epsilon$. By definition if $a \in B_t$ is $\epsilon-$dependent on a sequence $S \subseteq \cup_{\ell =1}^{t-1} B_\ell = \D_t$ then $\|  \bar{f} - \underline{f}\|^2_{S}  > \epsilon^2 $ (since otherwise $\|  \bar{f} - \underline{f}\|^2_{S}  \leq \epsilon^2$ would imply $a$ to be $\epsilon-$independent of $\D_t$). Thus if $a$ is $\epsilon-$dependent on $K$ disjoint sequences $S_1, \cdots, S_K \subseteq \D_t$, then $\|\bar{f} - \underline{f}\|^2_{\D_t} \geq K\epsilon^2$.  By the triangle inequality,

\begin{equation*}
    \| \bar{f} - \underline{f} \|_{\D_t} \leq \| \bar{f} - \widetilde y_\star \|_{\D_t} + \| \underline{f} - \widetilde y_\star \|_{\D_t} \leq 2\sqrt{\gamma_t}.
\end{equation*}

Thus it follows that $\epsilon \sqrt{K} <2 \sqrt{\gamma_t}$ and therefore 
\begin{equation}\label{equation::K_upper_bound}
 K \leq \frac{4 \gamma_t}{\epsilon^2}   
\end{equation}

Next we prove a lower bound for $K$. In order to do so we prove a slightly more general statement. Consider a batched sequence of arms $\{\bar{a}_{\ell, i}\}_{i\in[b], \ell \in [\tau] } $ where for the sake of the argument $\bar{a}_{\ell, i}$ is not necessarily meant to be $a_{\ell, i}$. We use the notation $\bar{B}_\ell = \{ \bar{a}_{\ell, i}\}_{i\in [b]}$ to denote the $\ell-$th batch in $\{\bar{a}_{\ell, i}\}_{i\in[b], \ell \in [\tau] } $ and $\bar{\D}_t = \cup_{\ell=1}^{t-1} \bar{B}_\ell$. 

Let $\tau \in \mathbb{N}$ and define $\widetilde K$ as the largest integer such that $\widetilde K d +b \leq \tau b$. We show there is a batch number $\ell' \leq \tau$ and in-batch index $i'$ such that $a_{\ell', i'}$ is $\epsilon-$dependent on a subset of disjoint sequences of size at least $\frac{\tilde{K}}{2}$ out of a set of $\widetilde K$ disjoint sequences $\bar S_1, \cdots, \bar S_{\tilde{K}} \subseteq \bar\D_{\ell-1} $.

First let's start building the $\bar S_i$ sequences by setting $\bar S_i = $the $i-$th element in $\{\bar{a}_{\ell, i}\}_{i\in[b], \ell \in [\tau] }$ ordered lexicographically. This will involve elements of up to batch $\bar B_{\lceil\tilde{K}/b\rceil}$.

Since we are going to apply the same argument recursively in our proof, let's denote the `current' batch index in the construction of $\bar S_1, \cdots, \bar S_{\widetilde{K}}$ as $\bar{\ell}$, this is, we set $\bar S_1, \cdots, \bar S_{\widetilde{K}} \subseteq \bar{\D}_{\bar\ell}$. At the start of the argument $\bar \ell = \lceil\tilde{K}/b\rceil$. 

If there is an arm $a \in \bar{B}_{\bar{\ell} + 1}$ such that $a$ is $\epsilon-$dependent on at least $\widetilde{K}/2$ of the $\{ \bar S_i\}_{i=1}^{\widetilde{K}}$ sequences, the result would follow. Otherwise, it must be the case that for all $a \in \bar B_{\bar\ell+1}$ there are at least $\frac{\widetilde{K}}{2}$ sequences in $\{ \bar S_i\}_{i=1}^{\widetilde{K}}$ on which $a$ is $\epsilon-$independent.

Let's consider a bipartite graph with edge sets $\bar{B}_{\bar \ell +1}$ and $\{ \bar S_i\}_{i=1}^{\widetilde{K}}$. We draw an edge between $a \in \bar{B}_{\bar \ell +1} $ and $\bar S_j \in \{ \bar S_i\}_{i=1}^{\widetilde{K}}$ if $a$ is $\epsilon-$independent on $\bar S_j$. If for all $a \in \bar B_{\bar\ell+1} $ there are at least $\frac{\widetilde{K}}{2}$ sequences in $\{ \bar S_i\}_{i=1}^{\widetilde{K}}$ on which $a$ is $\epsilon-$independent, Lemma~\ref{lemma::graph_theoretic_matching} implies the existence of a matching of size at least $\min(\frac{\widetilde K}{8}, b)$ between elements in $\bar{B}_{\bar \ell + 1}$ and the sequences in $\{ \bar S_i\}_{i=1}^{\widetilde{K}}$. 

The case $\min(\frac{\widetilde K}{8}, b) = \frac{\widetilde{K}}{8}$ can only occur when $\frac{(\tau-1)b}{8d} \leq b$ and therefore when $\tau b \leq 8bd+b$.

In case $\min(\frac{\widetilde K}{8}, b) = b$. If we reach $\bar \ell = \tau$ it must be the case that at least $(\tau-1)b$ points could be accommodated into the $\widetilde K $ sequences. By definition of $\widetilde K$ it must be the case that each sub-sequence $S_i$ satisfies $|S_i| \geq d$. Since each element of subsequence $S_i$ is $\epsilon-$independent of its predecesors, $|S_i| = d$. In this case we would conclude there is an element in $\bar B_\tau$ that is $\epsilon-$dependent on $\widetilde{K}$ and therefore at least $\frac{\widetilde K}{2} = \frac{(\tau-1)b}{2d}$ subsequences. If $\tau b \geq \frac{8\gamma_{\tau} d}{\epsilon^2} + 2d + b$ then $\frac{\widetilde K}{2} \geq \frac{4\gamma_{\tau}}{\epsilon^2} + 1$. 

Combining the results of the two last paragraphs we conclude that if $\tau b \geq \frac{8\gamma_{\tau} d}{\epsilon^2} + 2d + 2b+ 8bd $ then there is a batch index $\ell' \in [\tau]$ such that there is an arm $\bar{a}_{\ell', i} \in \bar{B}_{\ell'}$ that is  $\epsilon-$dependent of at least $\frac{\widetilde K}{2} \geq \frac{4\gamma_{\tau}}{\epsilon^2} + 1$ disjoint sequences contained in $\bar{\D}_{\ell'}$.

Let's apply the previous result to the sequence of $a_{t,i}$ for $i \in [b]$ and $t \in \mathbb{R}$ such that $r_t(a_{t,i}) > \epsilon$. An immediate consequence of the previous results is that if $ \sum_{\ell=1}^{t} \sum_{a \in B_\ell} \mathbf{1}( r_\ell(a) > \epsilon )  \geq \frac{8\gamma_t d}{\epsilon^2} + 2d + 2b+ 8bd $ there must exist an arm in $B_t$ such that it is $\epsilon-$dependent on at least $\frac{4\gamma_t}{\epsilon^2} + 1$ disjoint sequences of $\D_{t}$. Nonetheless, Equation~\ref{equation::K_upper_bound} implies this is impossible. Thus, $ \sum_{\ell=1}^{t} \sum_{a \in B_\ell} \mathbf{1}( r_\ell(a) > \epsilon )  \leq \frac{8\gamma_t d}{\epsilon^2} + 2d + 2b+ 8bd \leq \frac{8\gamma_t d}{\epsilon^2} + 12bd$. The result follows.
\end{proof}

Finally, the RHS of equation~\ref{equation::optimism_final} can be upper bounded using a modified version of Lemma 2 of~\cite{russo2013eluder} (which can be found in Appendix~\ref{section::optimism_appendix} ) yielding,
\begin{equation*}
      \sum_{\ell=1}^T \left( \sum_{a \in B_{\star, t}}y_\star(a) - \sum_{a \in B_t} y_\star(a) \right) \leq  \frac{1}{T} + \Ocal\left(\min\left( \mathrm{dim}_E(\Fcal, \alpha_T)b, T   \right) C+ \sqrt{ \mathrm{dim}_E(\Fcal, \alpha_T)} \omega T  \right).
\end{equation*}
Where $\alpha_t = \max\left( \frac{1}{t^2}, \inf\{ \| f_1 - f_2\|_\infty : f_1, f_2 \in \Fcal, f_1 \neq f_2\}\right)$ and $C = \max_{f \in \Fcal,a,a' \in \Acal} |f(a) - f(a') | $. The quantity in the left is known as regret. This result implies the regret is bounded by a quantity that grows linearly with $\omega$, the amount of misspecification but otherwise only with the scale of $\mathrm{dim}_{E}(\Fcal, \alpha_T) b$. Our result is not equivalent to splitting the datapoints in $b$ parts and adding $b$ independent upper bounds. The resulting upper bound in the later case will have in a term of the form $\Ocal\left(b\sqrt{ \mathrm{dim}_E(\Fcal, \alpha_T)} \omega T\right)$ whereas in our analysis this term does not depend on $b$. When $\omega = 0$, the regret is upper bounded by $\mathcal{O}\left( \mathrm{dim}_{E}(\Fcal, \alpha_T)  d\right)$. For example, in the case of linear models, the authors of~\cite{russo2013eluder} show $\mathrm{dim}_{E}(\Fcal, \alpha_T) = \mathcal{O}\left( d \log(1/\epsilon)\right)$. This shows that for example sequential $\mathrm{OAE}$ when $b=1$ achieves the lower bound of Lemma~\ref{lemma::complexity_results} up to logarithmic factors. In the setting of linear models, the $b$ dependence in the rate above is unimprovable by vanilla $\mathrm{OAE}$ without diversity aware sample selection. This is because an optimistic algorithm may choose to use all samples in each batch to explore explore a single unexplored one dimensional direction. Theoretical analysis for $\mathrm{OAE-DvD}$ and $\mathrm{OAE-Seq}$ is left for future work.

\begin{lemma}\label{lemma::graph_theoretic_matching}
Let $\mathcal{G}$ be a bipartite graph with node set $A \cup B$ such that $|A| = \tilde{K}$ and $|B| = b$. If for all nodes $v \in B$ it follows that $N(v) \geq \frac{\widetilde K}{ 2}$ then, 
\begin{enumerate}
    \item If $\frac{\widetilde{K}}{2} \geq b$ there exists a perfect matching between the nodes in $B$ and the ones in $A$.
    \item If instead $\frac{\widetilde{K}}{2} < b$ there exists a subset of $\frac{\widetilde K}{8}$ nodes in $A$ with a perfect matching to $B$. 
\end{enumerate}

\end{lemma}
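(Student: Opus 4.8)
The plan is to derive both parts from Hall's marriage theorem, supplemented by a simple greedy matching argument; throughout I read the hypothesis ``$N(v) \ge \widetilde{K}/2$'' as $|N(v)| \ge \widetilde{K}/2$ for every $v \in B$, where $N(v) \subseteq A$ is the neighborhood of $v$.

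For part 1, note first that $\widetilde{K}/2 \ge b$ gives $|A| = \widetilde{K} \ge 2b \ge |B|$, so a matching saturating $B$ is not ruled out by cardinality. I would then check Hall's condition on the $B$-side: for any nonempty $S \subseteq B$, fix some $v \in S$; since $N(S) \supseteq N(v)$ we get $|N(S)| \ge |N(v)| \ge \widetilde{K}/2 \ge b \ge |S|$, and the case $S = \emptyset$ is trivial. Hence $|N(S)| \ge |S|$ for all $S \subseteq B$, so Hall's theorem produces a matching covering every node of $B$, i.e., the claimed perfect matching from $B$ into $A$.

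For part 2 I would use a greedy matching pass over $B$: process the vertices of $B$ in arbitrary order, maintaining a partial matching $M$; when processing $v$, if $N(v)$ has a vertex of $A$ not yet covered by $M$, add such an edge, otherwise skip $v$. Let $m$ be the final size of $M$. If some $v$ was skipped, then at that step all $\ge \widetilde{K}/2$ neighbors of $v$ were already covered, so $M$ already had size $\ge \widetilde{K}/2$, hence $m \ge \widetilde{K}/2$. If no vertex was skipped, every vertex of $B$ is matched, so $m = b > \widetilde{K}/2$ by the part-2 hypothesis. Either way $m \ge \widetilde{K}/2 \ge \widetilde{K}/8$, and since $m$ is an integer, $m \ge \lceil \widetilde{K}/8 \rceil$; taking $\lceil \widetilde{K}/8 \rceil$ of the matched edges (each using a distinct vertex of $B$) exhibits a subset of $A$ of that size with a perfect matching into $B$.

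This is essentially a textbook application of Hall's theorem, so I do not anticipate a real obstacle. The only points to watch are (i) handling $S = \emptyset$ separately when verifying Hall's condition, and (ii) that the greedy bound in part 2 actually yields $m \ge \widetilde{K}/2$, which is stronger than the stated $\widetilde{K}/8$, so one could also dispense with Hall entirely and prove both parts from the single greedy procedure via the uniform bound $m \ge \min(b, \widetilde{K}/2)$ combined with the case split $\widetilde{K}/2 \ge b$ versus $\widetilde{K}/2 < b$.
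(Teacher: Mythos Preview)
Your proof is correct. Part 1 coincides with the paper's argument: both verify Hall's condition on the $B$-side using the uniform minimum-degree hypothesis and invoke Hall's marriage theorem.

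For Part 2, your route is genuinely different from the paper's and in fact simpler. The paper argues indirectly on the $A$-side: it first uses an edge-counting bound (there are at least $b\widetilde K/2$ edges in total) to show that at least roughly $\widetilde K/3$ vertices of $A$ have degree $\ge b/4$ in $B$, then restricts to $\widetilde K/8$ of those high-degree $A$-vertices and applies Hall's theorem from the $A$-side, using $\widetilde K/8 < b/4$ (which comes from the hypothesis $\widetilde K/2 < b$) to verify Hall's condition. Your greedy pass over $B$ bypasses the degree-counting step entirely and directly produces a matching of size $\min(b,\widetilde K/2)$; under the Part-2 hypothesis this is at least $\widetilde K/2$, strictly stronger than the $\widetilde K/8$ claimed in the lemma. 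So your argument is shorter, avoids the auxiliary counting lemma, and yields a better constant; the paper's approach, on the other hand, extracts the additional structural information that many $A$-vertices have high degree, which is not needed here but could be useful in other contexts. Your closing remark that the greedy bound $m \ge \min(b,\widetilde K/2)$ already handles both cases uniformly is also correct.
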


\begin{proof}

The first item follows immediately from Hall's marriage theorem. Notice that in this case the neighboring set of any subset of nodes in $B$ has a cardinality of at least $\frac{\widetilde{K}}{2}$ and therefore it is at least the size of $B$. The conditions of Hall's theorem are satisfied thus implying the existence of a perfect matching between the nodes in $B$ to the nodes in $A$. 

In the second scenario, let's first prove there exists a subset $A'$ of $A$ of size at least $\frac{\widetilde{K}}{4}$ such that every element in $A'$ has at least $\frac{b}{4}$ neighbors in $B$. We prove this condition by the way of contradiction. There are at least $\frac{b\widetilde K}{2}$ edges in the graph. Suppose there were at most $L$ vertices in $A$ with degree greater or equal to $\frac{b}{4}$. This value of $L$ must satisfy the inequality,
\begin{equation*}
    Lb + (\widetilde{K} - L) \left(\frac{b}{4} - 1\right) \geq \frac{b\widetilde{K}}{2}.
\end{equation*}
This is because the maximum number of edges a vertex in $A$ can have equals $b$. Thus, $L \geq \frac{\widetilde K}{3}-1$. 

All nodes in $L$ have degree at least $\frac{b}{4}$. If we restrict ourselves to a subset $\widetilde L$ of $L$ of size $\frac{\widetilde K}{8}$ and since in this scenario $\frac{\widetilde K}{8} < \frac{b}{4}$, Hall's stable marriage theorem implies there is a perfect matching between $\widetilde L$ to $B$. The result follows.

\end{proof}

Lemma 2 of~\cite{russo2013eluder} adapted to notation of $\mathrm{OAE}$.

\begin{lemma}\label{lemma::borrowed_lemma_eluder}
If $\{\gamma_t \geq 0 | t \in \mathbb{N}\}$ is a nondecreasing sequence and $\Fcal_t = \{ f \in \Fcal \text{ s.t. } \sum_{(a,y) \in \D_t} (f(a) - y)^2 \leq \gamma_t \}$ then with probability $1$ for all $T \in \mathbb{N}$,
\begin{equation*}
    \sum_{t=1}^T \sum_{a \in B_t} r_t(a) \leq \mathcal{O}\left( \mathrm{dim}_E(\Fcal, \alpha_T)b + \omega \sqrt{\mathrm{dim}_E(\Fcal, \alpha_T)} Tb \right).
\end{equation*}
Where $\alpha_t = \max\left( \frac{1}{tb}, \frac{1}{2}\inf\{ \| f_1 - f_2\|_\infty : f_1, f_2 \in \Fcal, f_1 \neq f_2\}\right)$ and $C = \max_{f \in \Fcal,a,a' \in \Acal} |f(a) - f(a') | $. 
\end{lemma}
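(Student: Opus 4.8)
The plan is to adapt the proof of Lemma~2 of~\cite{russo2013eluder} to the batched setting, using Proposition~\ref{proposition::adapted_parallel} in place of its Proposition~3. Write $d=\mathrm{dim}_E(\Fcal,\alpha_T)$. Three elementary facts will be used throughout. First, widths are uniformly bounded by the range of the class, $r_t(a)\le C$ for all $t$ and $a$, since $\Fcal_t\subseteq\Fcal$. Second, $\epsilon\mapsto\mathrm{dim}_E(\Fcal,\epsilon)$ is nonincreasing, so $\mathrm{dim}_E(\Fcal,\epsilon)\le d$ whenever $\epsilon\ge\alpha_T$. Third, since $\{\gamma_t\}$ is nondecreasing and we instantiate $\gamma_t=(t-1)b\omega^2$ as in Theorem~\ref{theorem::main_regret_result} — the smallest value for which $\widetilde y_\star\in\Fcal_t$ — we have $\gamma_T=\mathcal{O}(Tb\,\omega^2)$. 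The only randomness in the run is the uniform draw of $B_1$; conditioned on the realized query sequence everything below is deterministic, so the bound holds almost surely and for every $T$.

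First I would relabel the at most $Tb$ widths $\{r_t(a):t\in[T],\,a\in B_t\}$ in nonincreasing order as $r_{(1)}\ge r_{(2)}\ge\cdots$. Fix $k$ with $r_{(k)}>\alpha_T$. For every $\epsilon\in[\alpha_T,r_{(k)})$ the $k$ largest widths exceed $\epsilon$, so Proposition~\ref{proposition::adapted_parallel} (together with $\mathrm{dim}_E(\Fcal,\epsilon)\le d$) gives $k\le \tfrac{8\gamma_T d}{\epsilon^2}+12bd$; letting $\epsilon\uparrow r_{(k)}$ and rearranging yields, for $k>12bd$,
\begin{equation*}
r_{(k)}\ \le\ \min\!\left\{\,C,\ \sqrt{\tfrac{8\gamma_T d}{\,k-12bd\,}}\,\right\}.
\end{equation*}
I would then split $\sum_{t,a}r_t(a)=\sum_{k}r_{(k)}$ into three ranges: (i) $k\le 12bd$, bounded by $12bd\cdot C=\mathcal{O}(db)$; (ii) $k>12bd$ with $r_{(k)}>\alpha_T$, of which there are at most $Tb$, bounded using the display by $\sqrt{8\gamma_T d}\sum_{m=1}^{Tb}m^{-1/2}\le 2\sqrt{8\gamma_T d\,Tb}=\mathcal{O}(\sqrt{\gamma_T d\,Tb})$; and (iii) $r_{(k)}\le\alpha_T$, of which there are at most $Tb$, bounded by $Tb\,\alpha_T$.

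To finish, substitute $\gamma_T=\mathcal{O}(Tb\,\omega^2)$: range (ii) becomes $\mathcal{O}\!\big(\sqrt{Tb\,\omega^2\,d\,Tb}\big)=\mathcal{O}\!\big(\omega\sqrt{d}\,Tb\big)$, the second term of the claim, while range (i) already gave $\mathcal{O}(db)$, the first term. For range (iii): on the branch $\alpha_T=1/(Tb)$ the bound $Tb\,\alpha_T=\mathcal{O}(1)$ is lower order; the other branch $\alpha_T=\tfrac{1}{2}\inf\{\|f_1-f_2\|_\infty:f_1\ne f_2\}$ is present only so that $d=\mathrm{dim}_E(\Fcal,\alpha_T)$ is finite, and a routine argument (below that resolution the constraint forcing proximity to $\widetilde y_\star$ leaves no room for a second function in $\Fcal_t$) shows it does not affect the rate. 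The step I expect to require the most care is (i)--(ii): Proposition~\ref{proposition::adapted_parallel} carries an additive burn-in of order $bd$ — this is precisely why a genuinely parallel counting argument was needed rather than $b$ coordinatewise applications of Russo--Van Roy — and the reordering must keep this burn-in \emph{additive}, producing the term $\mathrm{dim}_E(\Fcal,\alpha_T)\,b$; likewise the square-root sum in (ii) must run over exactly $Tb$ indices so that, combined with $\gamma_T=\Theta(Tb\omega^2)$, the misspecification term emerges as $\omega\sqrt{\mathrm{dim}_E(\Fcal,\alpha_T)}\,Tb$ and not something larger.
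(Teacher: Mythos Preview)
Your proposal is correct and follows essentially the same route as the paper: reorder the $Tb$ widths, split at the threshold $\alpha_T$, invoke Proposition~\ref{proposition::adapted_parallel} to bound the rank of each large width, sum the resulting $\sqrt{\gamma_T d/(k-\Theta(bd))}$ tail, and substitute $\gamma_T=\Theta(Tb\,\omega^2)$. The only place you are vaguer than the paper is the discrete-class branch of range~(iii): the paper simply asserts that when $\alpha_T=\tfrac12\inf\{\|f_1-f_2\|_\infty:f_1\ne f_2\}$ the small-width contribution is exactly zero, whereas your ``no room for a second function in $\Fcal_t$'' heuristic is not quite the right mechanism; but this is a minor technicality and does not affect the argument's validity.
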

\begin{proof}

The proof of lemma~\ref{lemma::borrowed_lemma_eluder} follows the proof template as that of Lemma 2 in~\cite{russo2013eluder}. We reproduce it here for completeness. 

For ease of notation we use $d = \mathrm{dim}_E(\mathcal{F}, \alpha_T)$. We first re-order the sequence $\{ r_{1}(a_{1,1}), r_1(a_{1,2}), \cdots, r_1(a_{1,b}),\cdots, r_{T}(a_{T,b}))$ as $r_{i_{1}} \leq  \cdots \leq r_{i_{Tb}}$. We have,

\begin{equation*}
\sum_{t=1}^T \sum_{a \in B_t} r_t(a) = \sum_{j=1}^{Tb} r_{i_{j}} = \sum_{j=1}^{Tb} \mathbf{1}(r_{i_j} \leq \alpha_T ) r_{i_{j}} + \sum_{j=1}^{Tb} \mathbf{1}(r_{i_j} > \alpha_T ) r_{i_{j}}\stackrel{(i)}{\leq} 1 +  \sum_{j=1}^{Tb} \mathbf{1}(r_{i_j} > \alpha_T ) r_{i_{j}}
\end{equation*}
Where inequality $(i)$ holds by definition of $\alpha_T$ noting that 
\begin{equation*}
    \sum_{j=1}^{Tb} \mathbf{1}(r_{i_j} \leq \alpha_T ) r_{i_{j}} = 0, \quad \text{if } \alpha_T = \frac{1}{2}\inf\{ \| f_1 - f_2\|_\infty : f_1, f_2 \in \Fcal, f_1 \neq f_2\}. 
\end{equation*}
and 
\begin{equation*}
    \sum_{j=1}^{Tb} \mathbf{1}(r_{i_j} \leq \alpha_T ) r_{i_{j}} \leq 1, \quad \text{if } \alpha_T = \frac{1}{Tb}. 
\end{equation*}
Proposition~\ref{proposition::adapted_parallel} (since $d \geq \mathrm{dim}_E(\mathcal{F}, \epsilon)$ for all $\epsilon \geq \alpha_T$) implies that for all $i_{j}$ with $r_{i_j} > \alpha_T$, we can bound $j$ as
\begin{equation*}
j \leq  \sum_{t=1}^T \sum_{a \in B_t} \mathbf{1}( r_t(a) > r_{i_j} ) \leq \mathcal{O}\left(\frac{\gamma_T  d}{r_{i_j}^2} + bd\right)
\end{equation*}
So there is a constant $c > 0$ such that $r_{i_j} \leq \mathcal{O}\left( \sqrt{\frac{\gamma_T d}{(j - cbd)_+}} \right)   $. For all $j \leq cbd$ notice that $\sum_{j=1}^{cbd} r_{i_j} = \mathcal{O}\left( bd\right)$ since the radii $r_{i_j}$ are all of constant size. We conclude that,
\begin{align*}
\sum_{j=1}^{Tb} r_{i_j} &\leq \mathcal{O}\left( bd + \sum_{j=cbd+1}^{Tb} \sqrt{\frac{\gamma_T d}{(j - cbd)_+}}\right) \\
&\leq \mathcal{O}\left( bd + \sqrt{\gamma_T d} \int_{j=1}^{Tb-cbd} \sqrt{k} dk \right) \\
&= \mathcal{O}\left(  bd + \sqrt{\gamma_T d Tb}  \right).
\end{align*}
Substituting $\gamma_T = \omega^2 Tb$ we conclude,
\begin{equation*}
\sum_{j=1}^{Tb} r_{i_j} = \mathcal{O}\left( bd + \omega \sqrt{d} Tb \right).
\end{equation*}
Thus finalizing the result.
\end{proof}

Combining the result of Lemma~\ref{lemma::borrowed_lemma_eluder} and Equation~\ref{equation::optimism_final} finalizes the proof of Theorem~\ref{theorem::main_regret_result}. 




\begin{algorithm}[tb]
\textbf{Input} Action set $\Acal \subset \mathbb{R}^d $, num batches $N$, batch size $b$\\
\textbf{Initialize }  Observed points and labels dataset $\D_1 = \emptyset$\\
\For{$t = 1, \cdots, N$}{
   
    \textbf{If $t=1$ then:} \\
     $\cdot$ Sample uniformly a size $b$ batch $B_t \sim \U_1$.\\    
   \textbf{Else:} \\
    $\cdot$ Solve for $\wf_t$  and compute $ B_t \in \argmax_{\B \subset \U_t | |\B| = b} \Abb(\wf_t, \B)$. \\
    Observe batch rewards $Y_t = \{ y_*(a) \text{ for } a \in B_t\} $ \\
    Update $\D_{t+1} = \D_t \cup \{(B_t, Y_t) \}$ and $\U_{t+1} = \U_t \backslash B_t$ .
}

\caption{ Noisy Batch Selection Principle ($\mathrm{OAE}$) }
\label{algorithm::optimism_noisy_leastsquares}
\end{algorithm}

 \section{Experiments}\label{section::experiments}

We demonstrate the effectiveness of our $\mathrm{OAE}$ algorithm in several problem settings across public and synthetic datasets. We evaluate the algorithmic implementations described in Section~\ref{section::tractable_implementation} by setting the acquisition function to $\Abb_{\mathrm{avg}}(f, \U)$ and the batch selection rule as in Equation~\ref{equation::batch_selection_rule} for the vanilla $\mathrm{OAE}$ methods and as Equations~\ref{equation::oae_batch_selection_diversity_dvd} and~\ref{equation::seq_batch_selection_rule} for $\mathrm{OAE}$'s diversity inducing versions $\mathrm{OAE-DvD}$ and $\mathrm{OAE-Seq}$ respectively.  All neural network architectures use ReLU activations. All ensemble methods use an ensemble size of $M = 10$ and xavier parameter initialization. 

\begin{figure*}[tbh]
    \centering
    \vspace{-0.05in}
     \includegraphics[width=0.23\textwidth]{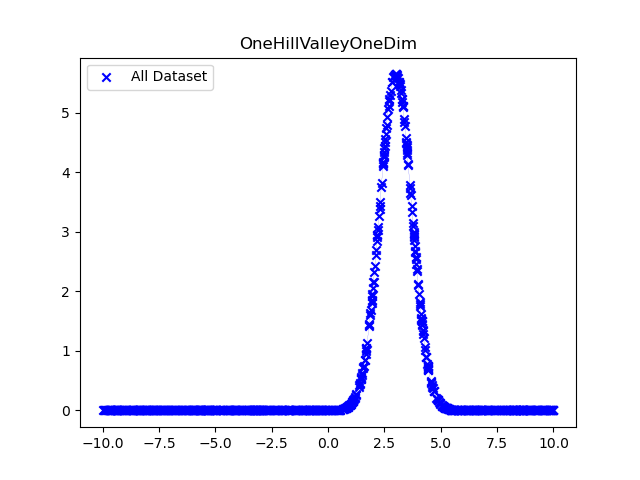}
    \includegraphics[width=0.23\textwidth]{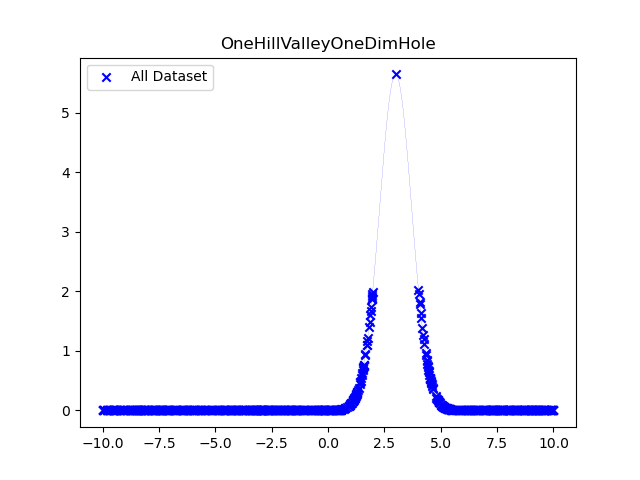}
    \includegraphics[width=0.23\textwidth]{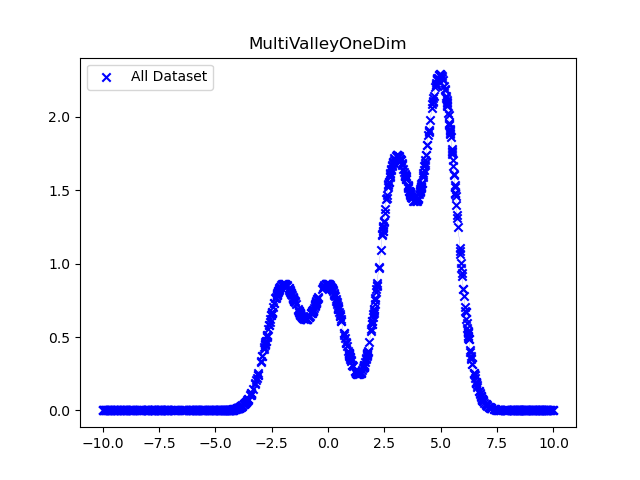} 
         \includegraphics[width=0.23\textwidth]{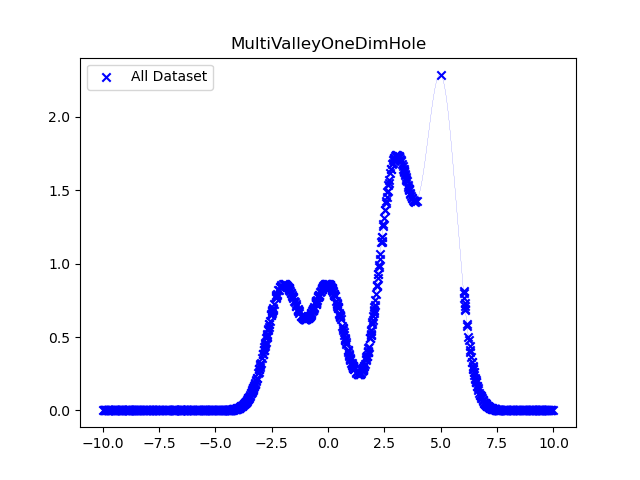}
    \vspace{-0.1in}
    \includegraphics[width=0.23\textwidth]{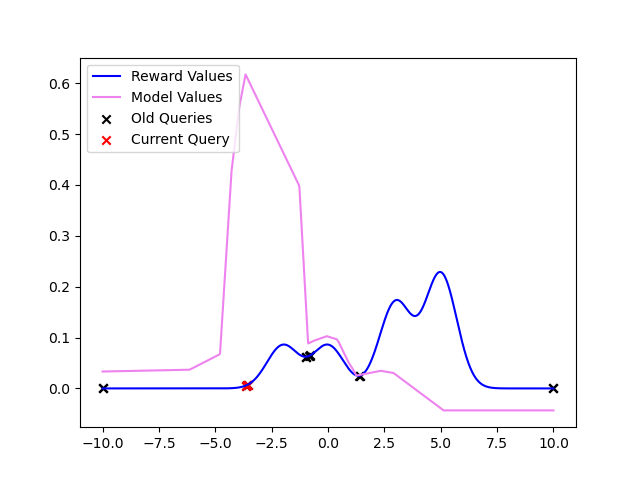}
    \includegraphics[width=0.23\textwidth]{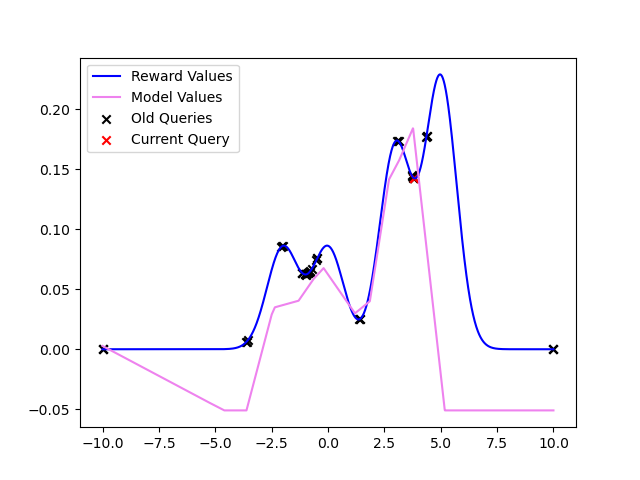}
    \includegraphics[width=0.23\textwidth]{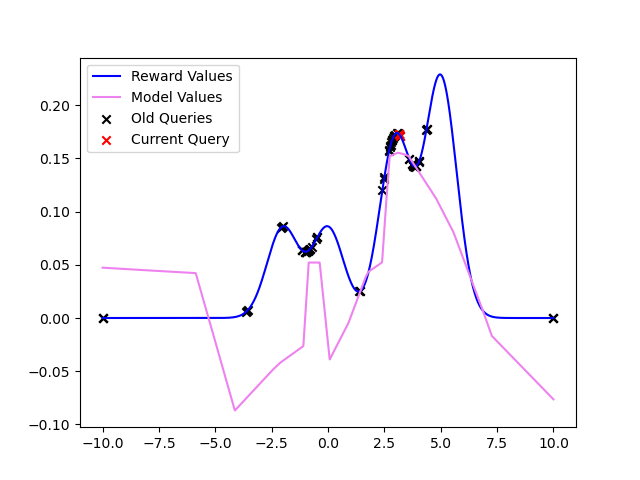} 
    \includegraphics[width=0.23\textwidth]{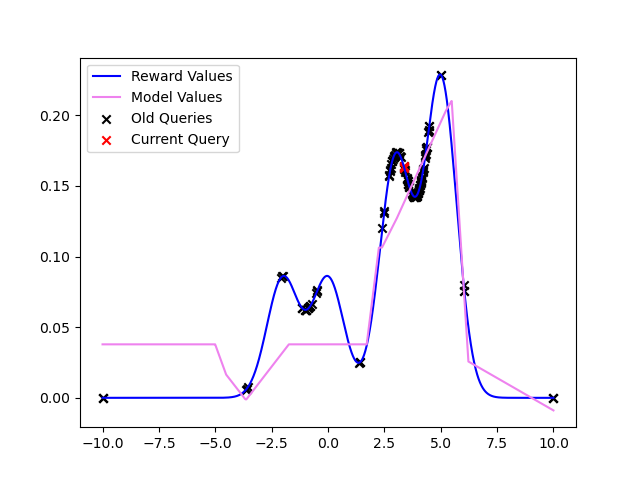}
    \vspace{-0.1in}
    \caption{(top row) Synthetic one dimensional datasets (bottom) Evolution of $\mathrm{MeanOpt}$ in the MultiValleyOneDimHole dataset using $\lambda_{\mathrm{reg}} = 0.001$. \textit{From left to right:} Iterations $5, 15, 25, 45$. }
    \label{fig:synthetic_datasets_OAE_evolution}
    \vspace{-.5cm}
\end{figure*}

\paragraph{Small vs Large Batch Regimes.} Oftentimes the large batch - small number of iterations regime is the most interesting scenario from a practical perspective \cite{Hanna2020-kc}. In scientific settings like pooled genetic perturbations, each experiment may take a long time (many weeks or months) to conclude, but it is possible to conduct a batch of experiments in parallel together. We study this regime in Section~\ref{sec:bio_stuff}.

\subsection{Vanilla $\mathrm{OAE}$}\label{section::vanilla_oae_experiments}

We test $\mathrm{MeanOpt}$, $\mathrm{HingePNorm}$, $\mathrm{MaxOpt}$, $\mathrm{Ensemble}$ and $\mathrm{EnsembleNoiseY}$'s performance (see Section~\ref{section::tractable_implementation} for a detailed description of each of these algorithms) over different values of the regularization parameter $\lambda_{\mathrm{reg}}$, including the `greedy' choice of $\lambda_{\mathrm{reg}} = 0$, henceforth referred to as $\mathrm{Greedy}$.  We compare these algorithms to the baseline method that selects points $B_t$ by selecting a uniformly random batch of size $B$ from the set $\U_t$, henceforth referred to as $\mathrm{RandomBatch}$ and against each other. 

We conduct experiments on three kind of datasets. First in Section~\ref{section::vanilla_oae_synthetic} we capture the behavior of $\mathrm{OAE}$ in a set of synthetic one dimensional datasets specifically designed to showcase different landscapes for $y_\star$ ranging from uni-modal to multi-modal with missing values.  In Section~\ref{section::public_supervised_learning_datasets} we conduct similar experiments on public datasets from the UCI database~\cite{Dua:2019}. In both sections~\ref{section::vanilla_oae_synthetic} and~\ref{section::public_supervised_learning_datasets} all of our experiments have a batch size of $3$, a time horizon of $N = 150$ and over two types of network architectures. In Section~\ref{sec:bio_stuff} we consider the setting in which we have observed the effect of knockouts in one biological context (i.e., cell line) and would like to use it to plan a series of knockout experiments in another. We test $\mathrm{OAE}$ in this context by showing the effectiveness of the $\mathrm{MeanOpt}$, $\mathrm{HingePNorm}$, $\mathrm{MaxOpt}$, $\mathrm{Ensemble}$ and $\mathrm{EnsembleNoiseY}$ methods in successfully leveraging the learned features from a source cell line in the optimization of a particular cellular proliferation phenotype for several target cell lines.

All of our vanilla $\mathrm{OAE}$ methods show that better expressivity of the underlying model class $\Fcal$ allows for better performance (as measured by the number of trials it requires to find a response within a particular response quantile from the optimum). Low capacity (in our experiments ReLU neural networks with two layers of sizes of $10$ and $5$) models have a harder time competing against $\mathrm{RandomBatch}$ than larger ones (ReLU neural networks with two layers of sizes $100$ and $10$). We also present results for linear and `very high' capacity models (two layer of sizes $300$ and $100$).

\subsubsection{Synthetic One Dimensional Datasets}\label{section::vanilla_oae_synthetic}

\begin{figure}[thb]
    \centering
    \vspace{-0.3cm}
    \includegraphics[width=0.23\textwidth]{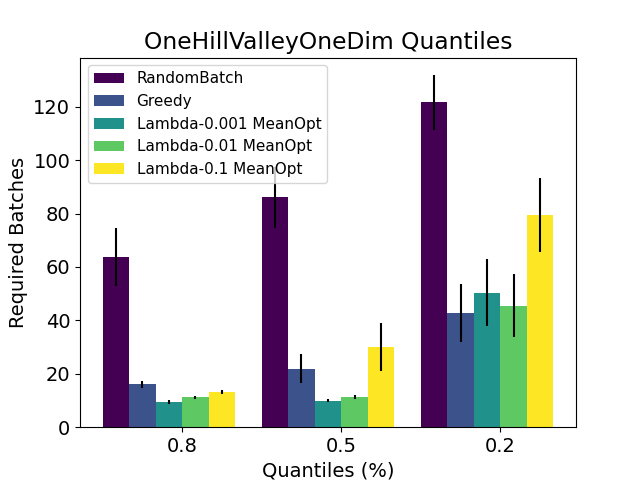}
    \includegraphics[width=0.23\textwidth]{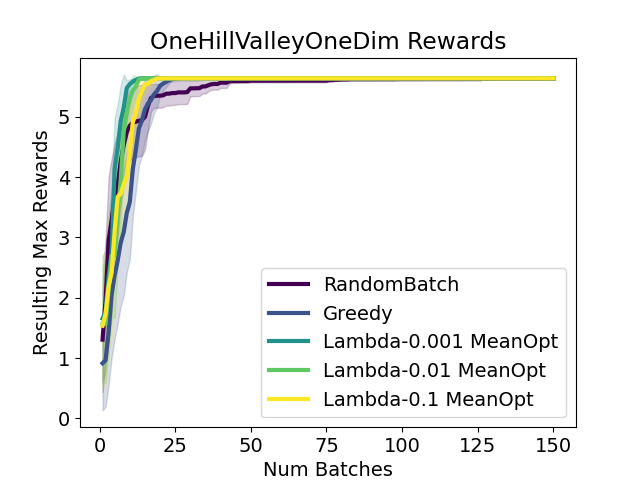}
    \includegraphics[width=0.23\textwidth]{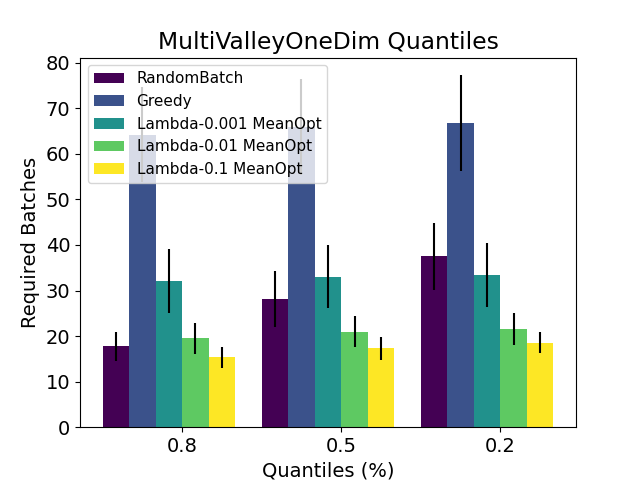}
    \includegraphics[width=0.23\textwidth]{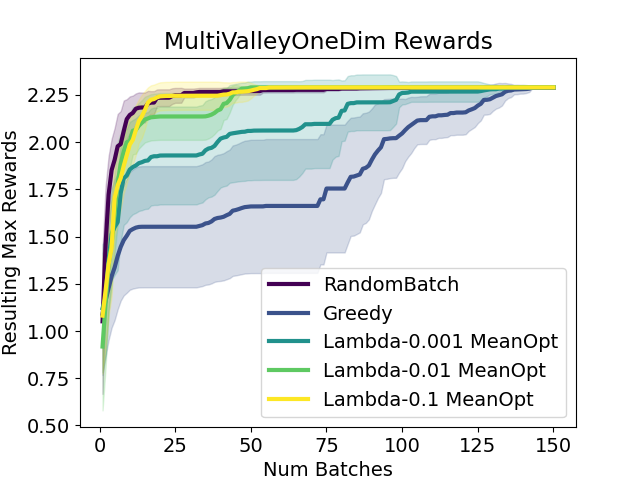} 
    \includegraphics[width=0.23\textwidth]{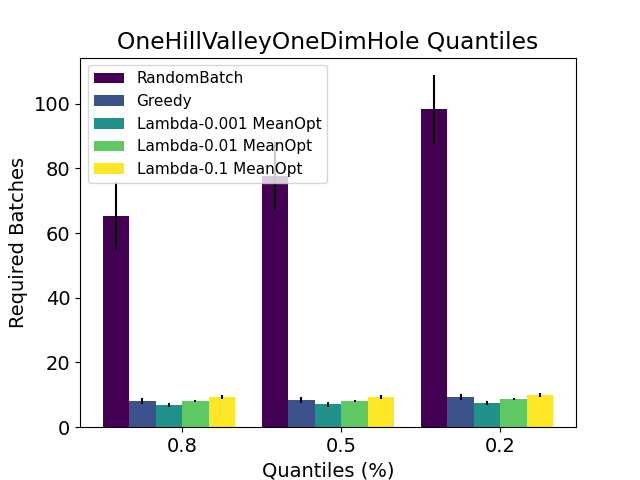}
    \includegraphics[width=0.23\textwidth]{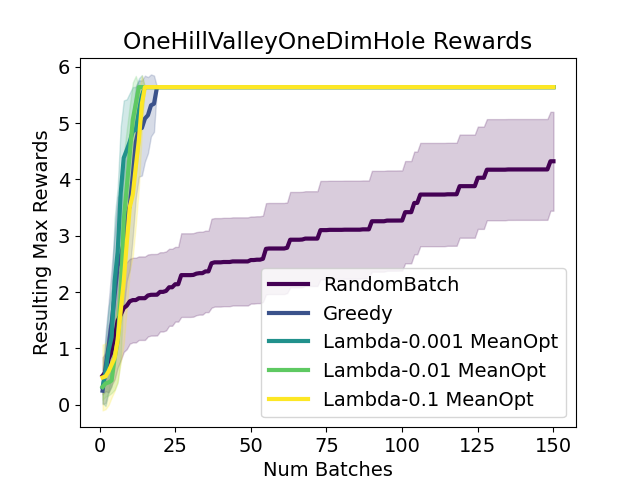}
    \includegraphics[width=0.23\textwidth]{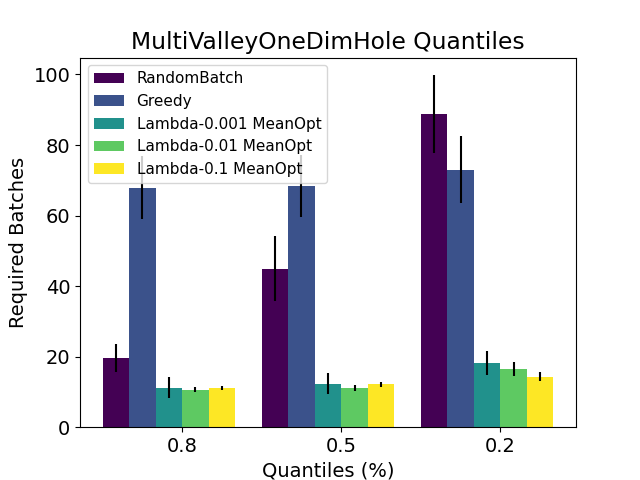}
    \includegraphics[width=0.23\textwidth]{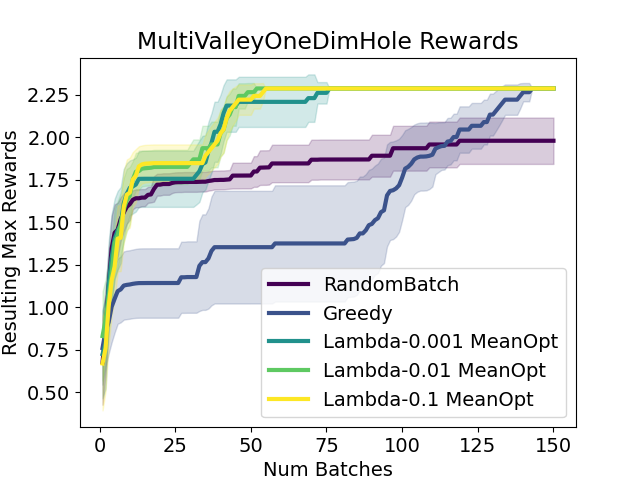} 
    \vspace{-0.1in}
    \caption{\textbf{NN 10-5}. $\tau$-quantile batch convergence (\textbf{left}) and corresponding rewards over batches (\textbf{right}) on the OneHillValleyOneDim (easier) and MultiValleyOneDim (harder), OneHillValleyOneDimHole and MultiValleyOneDimHole synthetic datasets show how the $\mathrm{OAE}$ algorithm can achieve higher reward faster than $\mathrm{RandomBatch}$ or $\mathrm{Greedy}$. The neural network architecture has two hidden layers of sizes $10$ and $5$.  }
    \label{fig:synthetic_datasets_results_10_5}
    \vspace{-.2cm}
\end{figure}

\begin{figure}[thb]
    \centering
    \vspace{-0.3cm}
    \includegraphics[width=0.23\textwidth]{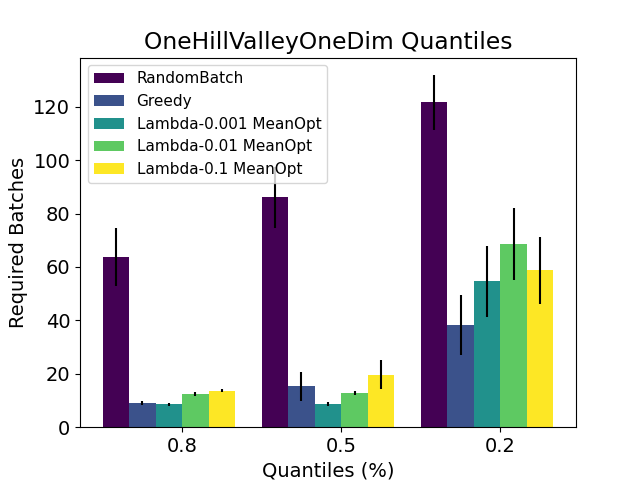}
    \includegraphics[width=0.23\textwidth]{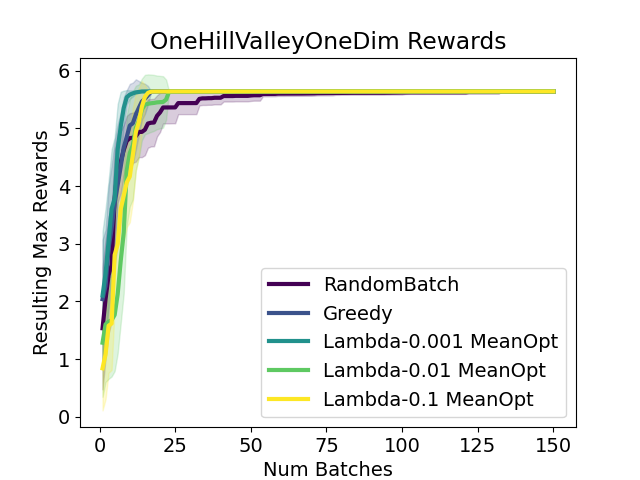}
    \includegraphics[width=0.23\textwidth]{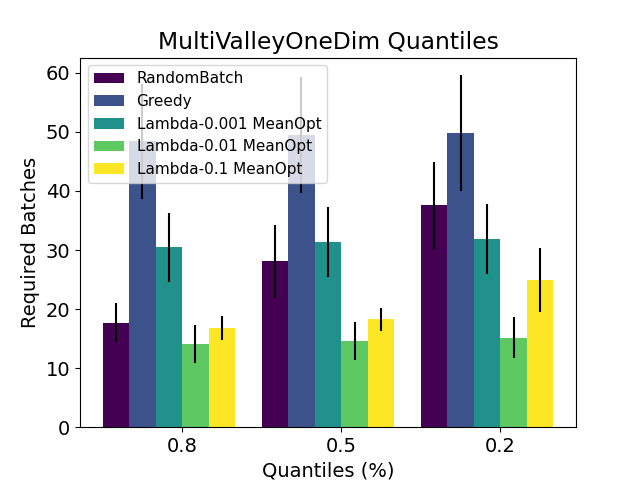}
    \includegraphics[width=0.23\textwidth]{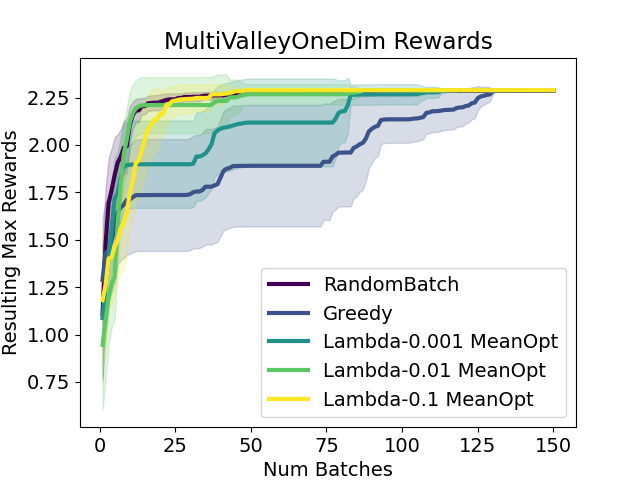} 
    \includegraphics[width=0.23\textwidth]{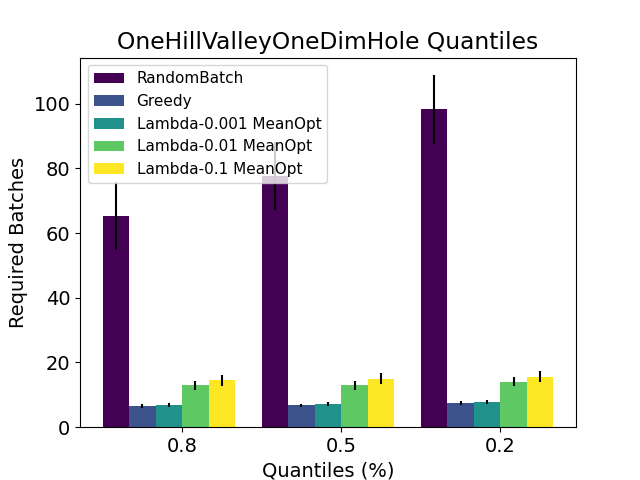}
    \includegraphics[width=0.23\textwidth]{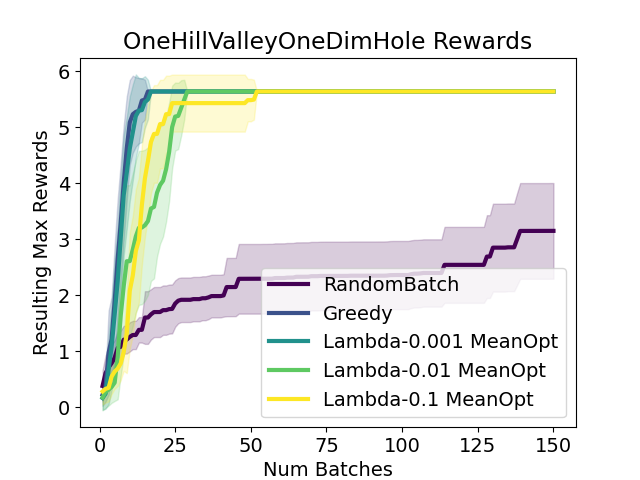}
    \includegraphics[width=0.23\textwidth]{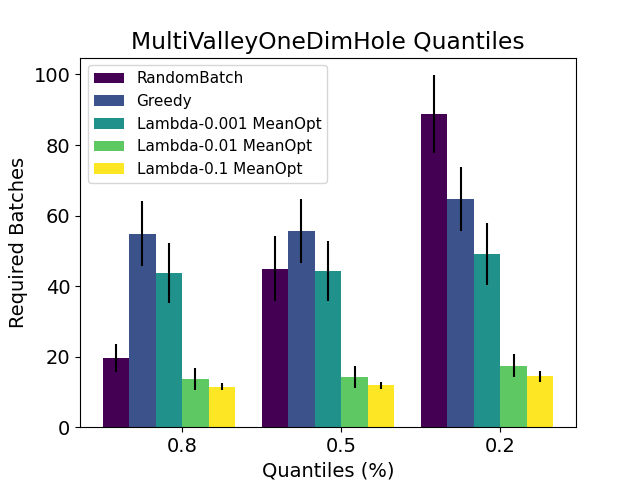}
    \includegraphics[width=0.23\textwidth]{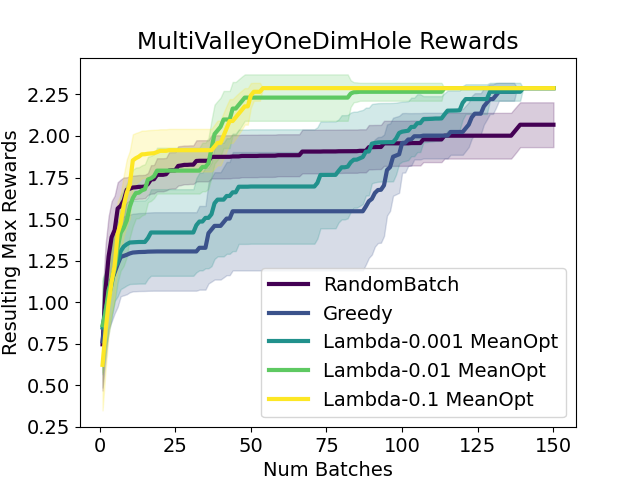} 
    \vspace{-0.1in}
    \caption{\textbf{NN 100-10}. $\tau$-quantile batch convergence (\textbf{left}) and corresponding rewards over batches (\textbf{right}) on the suite of synthetic datasets when the neural network architecture has two hidden layers of sizes $100$ and $10$.  }
    \label{fig:synthetic_datasets_results_100_10}
    \vspace{-.2cm}
\end{figure}

Figure~\ref{fig:synthetic_datasets_OAE_evolution} shows different one dimensional synthetic datasets that used to validate our methods. The leftmost, the $\mathrm{OneHillValleyOneDim}$ dataset consists of $1000$ arms uniformly sampled from the interval $[-10, 10]$. The responses $y$ are unimodal. The learner's goal is to find the arm with $x-$coordinate value equals to $3$ as it is the one achieving the largest response. We use the dataset $\mathrm{OneHillValleyOneDimHole}$ to test for $\mathrm{OAE}$'s ability to find the maximum when the surrounding points are not present in the dataset.

\begin{figure}[thb]
    \centering
    \vspace{-0.3cm}
    \includegraphics[width=0.23\textwidth]{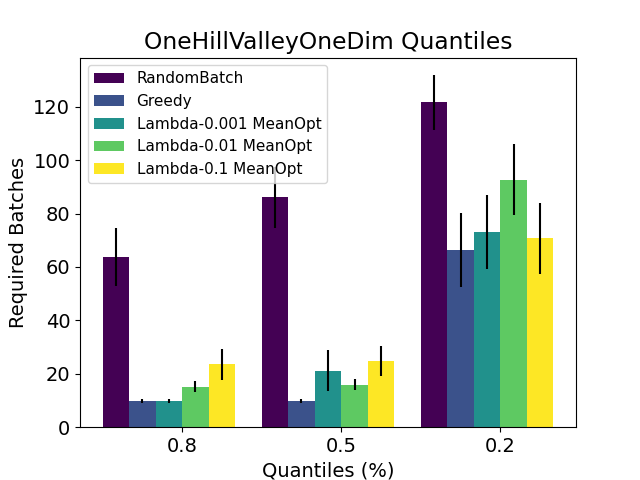}
    \includegraphics[width=0.23\textwidth]{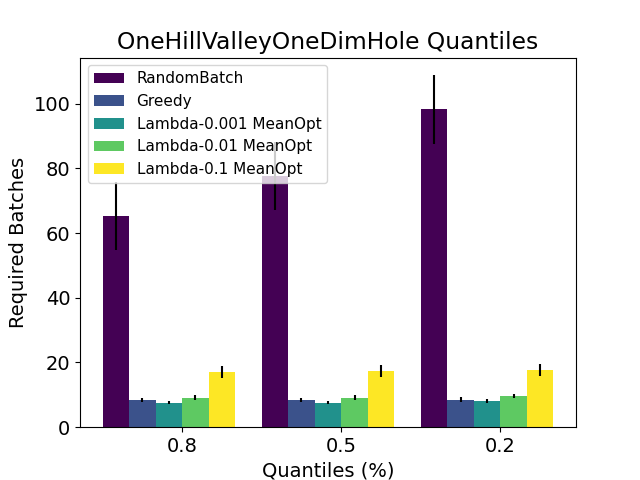}
    \includegraphics[width=0.23\textwidth]{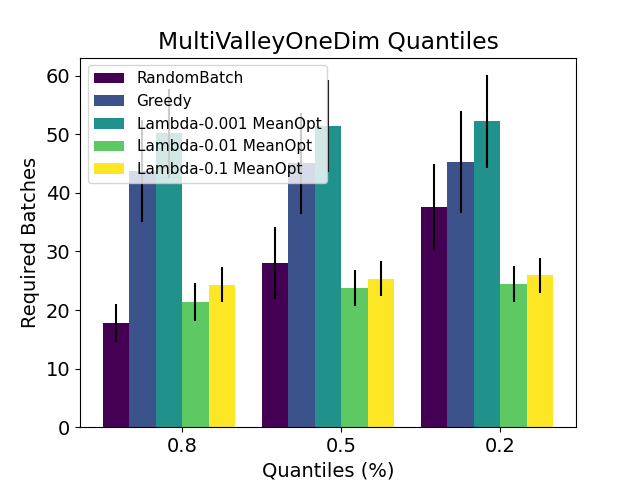}
    \includegraphics[width=0.23\textwidth]{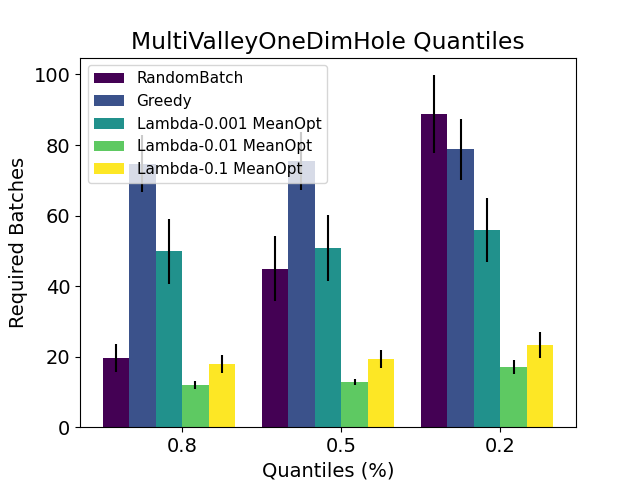}
    \vspace{-0.1in}
    \caption{\textbf{NN 300-100}. $\tau$-quantile batch convergence on the suite of synthetic datasets when the network architecture has two hidden layers of sizes $300$ and $100$. }
    \label{fig:synthetic_datasets_results_300_100}
    \vspace{-.2cm}
\end{figure}

The remaining two datasets $\mathrm{MultiValleyOneDim}$ and $\mathrm{MultiValleyOneDimHole}$ are built with the problem of multimodal optimization in mind. Each of these datasets have $4$ local maxima. We use $\mathrm{MultiValleyOneDim}$ to test the $\mathrm{OAE}$'s ability to avoid getting stuck in local optima. The second dataset mimics the construction of the $\mathrm{OneHillValleyOneDimHole}$ dataset and on top of testing the algorithm's ability to escape local optima, it also is meant to test what happens when the global optimum's neighborhood isn't present in the dataset. Since one of the algorithms we test is the greedy algorithm (corresponding to $\lambda = 0$), the 'Hole' datasets are meant to present a challenging situation for this class of algorithms. 

We test several neural network architectures across all experiments. In all these cases we use ReLU activation functions trained for $5000$ steps via the Adam optimizer using batches of size $10$. In our tests we use a batch size $B = 3$, a number of batches $N = 150$ and repeat each experiment a total of $25$ times, reporting average results with standard error bars at each time step.

First we compare the $\mathrm{MeanOpt}$ version of $\mathrm{OAE}$ with a simple $\mathrm{RandomBatch}$ strategy that selects a random batch of size $B$ from the dataset points that have not been queried yet and a $\mathrm{Greedy}$ algorithm corresponding to $\mathrm{MeanOpt}$ with $\lambda_{\mathrm{reg}} = 0$. 
Figures~\ref{fig:synthetic_datasets_results_10_5},~\ref{fig:synthetic_datasets_results_100_10} and~\ref{fig:synthetic_datasets_results_300_100} show representative results for $\mathrm{MeanOpt}$ with ($\lambda_{\mathrm{reg}} = 0.001, 0.01, 0.1$) accross three different neural network architectures, \textbf{NN 10-5}, \textbf{NN 100-10}, and \textbf{NN 300-100}. In all cases, the high optimism versions of $\mathrm{MeanOpt}$ perform substantially better than $\mathrm{RandomBatch}$. In both multimodal datasets $\mathrm{Greedy}$ underperforms with respect to the versions of $\mathrm{MeanOpt}$ with $\lambda_{\mathrm{reg}} > 0$. This points to the usefulness of optimism when facing multimodal optimization landscapes. We also note that for example \textbf{NN 10-5} is the best performing architecture for $\mathrm{MeanOpt}$ with $\lambda_{\mathrm{reg}} = 0.001$ despite the regression loss of fitting the $\mathrm{MultiValleyOneDim}$'s responses with a \textbf{NN 10-5} architecture not reaching zero (see Figure~\ref{fig:regression_fit_synthetic_datasets}). This indicates the function class $\mathrm{F}$ need not contain $y_\star$ for $\mathrm{MeanOpt}$ to achieve good performance. Moreover, it also indicates the use of higher capacity models, despite being able to achieve better regression fit may not perform better than lower capacity ones in the task of finding a good performing arm. We leave the task of designing smart strategies to select the optimal network architecture for future work. It suffices to note all of the architectures used in our experimental evaluation performed better than more naive strategies such as $\mathrm{RandomBatch}$.

Second, in Figures~\ref{fig:synthetic_datasets_results_10_5_ensembles_vs_meanopt} and~\ref{fig:synthetic_datasets_results_100_10_ensembles_vs_meanopt} we evaluate $\mathrm{MeanOpt}$ vs ensemble implementations of $\mathrm{OAE}$ across the two neural network architectures \textbf{NN 10-5} and \textbf{NN 100-10}. We observe that $\mathrm{Ensemble}$ performs competitively with all other methods in the one dimensional datasets and outperforms all in the $\mathrm{OneHillValleyOneDimHole}$. In the multi dimensional datasets, $\mathrm{MeanOpt}$ performs better than $\mathrm{Ensemble}$, $\mathrm{EnsembleNoiseY}$ and $\mathrm{Greedy}$. In this case the most optimistic version of $\mathrm{MeanOpt}$ ($\lambda_{\mathrm{reg}} = .1$) is the best performing of all. This may indicate that in multi modal environments the optimism injected by the random initialization of the ensemble models in $\mathrm{Ensemble}$ or the reward noise in $\mathrm{EnsembleNoiseY}$ do not induce an exploration strategy as effective as the explicit optimistic fit of $\mathrm{MeanOpt}$. In unimodal datasets, the opposite is true, $\mathrm{MeanOpt}$ with a large regularizer ($\lambda_{\mathrm{reg}} = 0.1$) underperforms $\mathrm{Ensemble}$, $\mathrm{EnsembleNoiseY}$ and $\mathrm{Greedy}$. In Appendix~\ref{section::synthetic_datasets_appendix} the reader may find similar results for $\mathrm{MaxOpt}$ and $\mathrm{HingeOpt}$. Similar results hold in that case.

\begin{figure}[thb]
    \centering
    \includegraphics[width=0.23\textwidth]{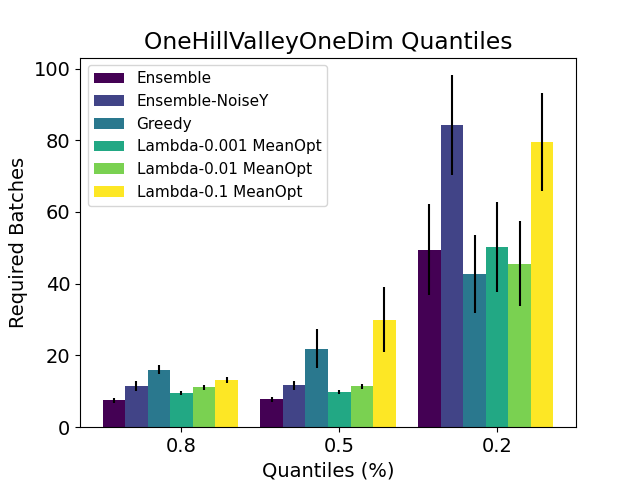}
    \includegraphics[width=0.23\textwidth]{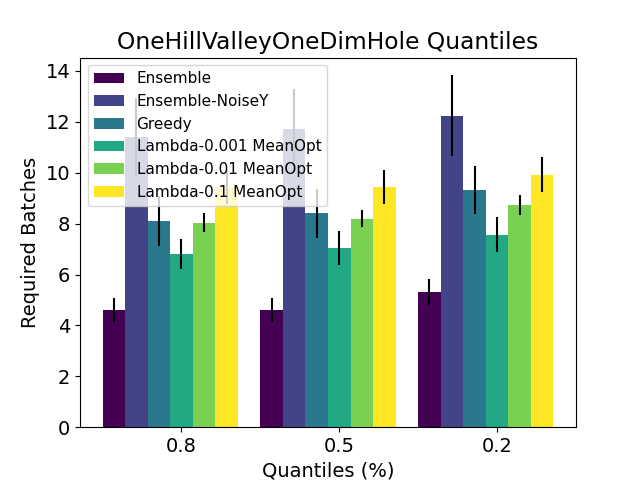}
    \includegraphics[width=0.23\textwidth]{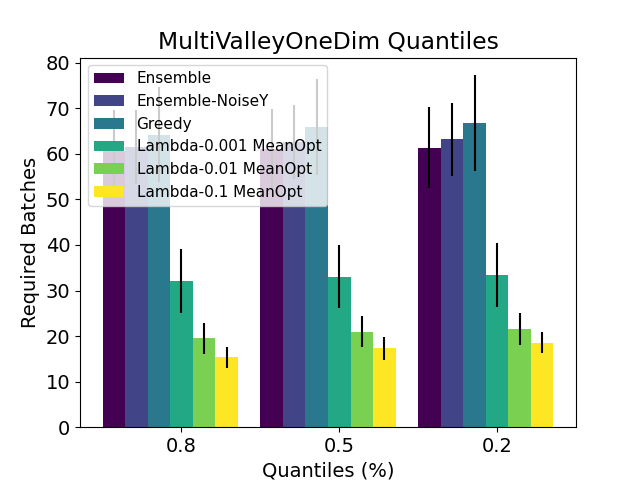}
    \includegraphics[width=0.23\textwidth]{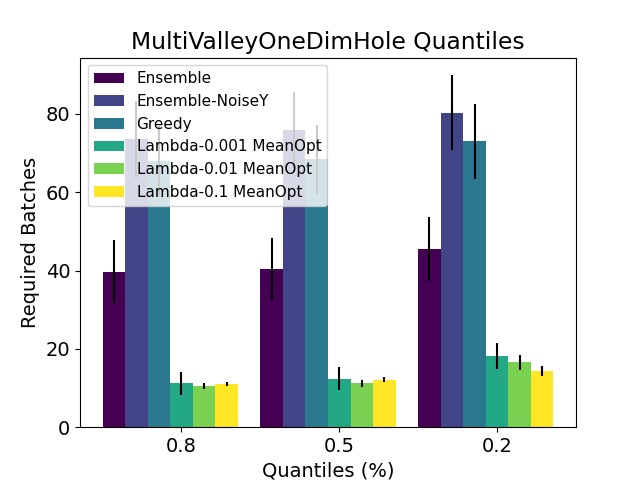} 
    \vspace{-0.1in}
    \caption{\textbf{NN 10-5}. $\tau$-quantile batch convergence of MeanOptimism vs EnsembleOptimism vs EnsembleNoiseY. }
    \label{fig:synthetic_datasets_results_10_5_ensembles_vs_meanopt}
\end{figure}

\begin{figure}[thb]
    \centering
    \includegraphics[width=0.23\textwidth]{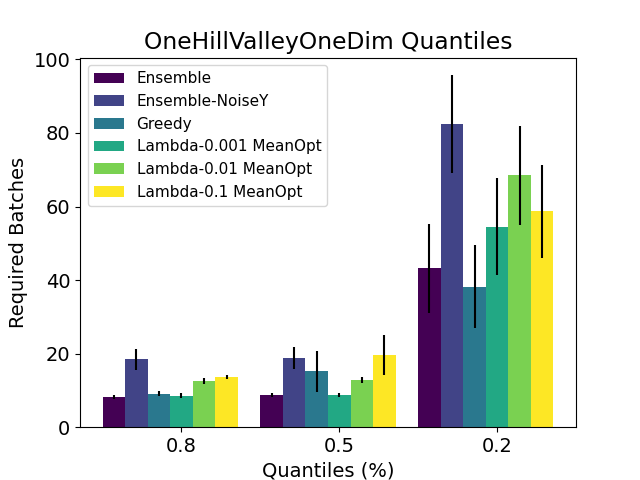}
    \includegraphics[width=0.23\textwidth]{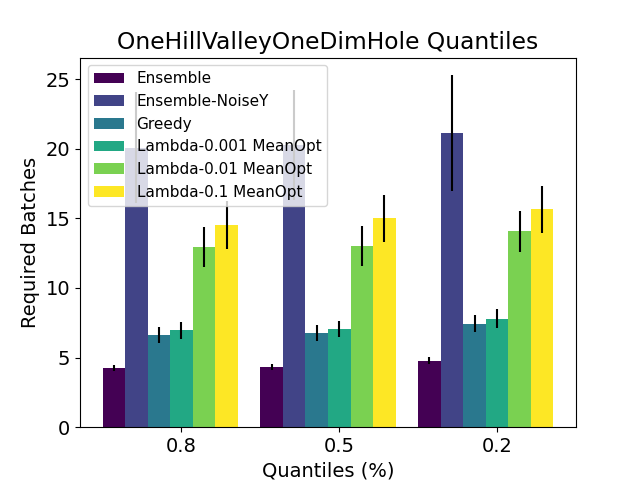}
    \includegraphics[width=0.23\textwidth]{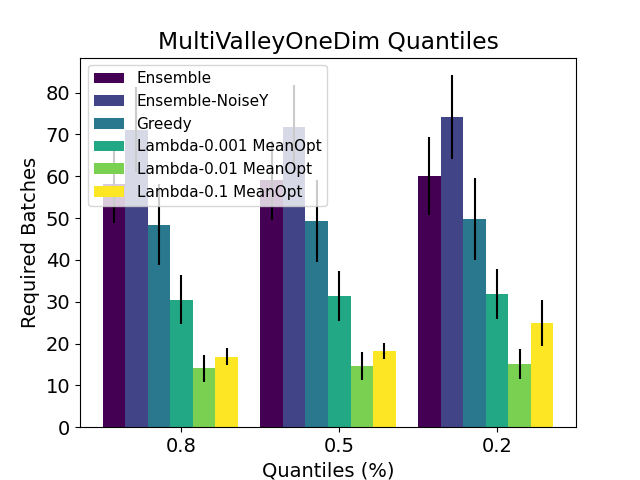}
    \includegraphics[width=0.23\textwidth]{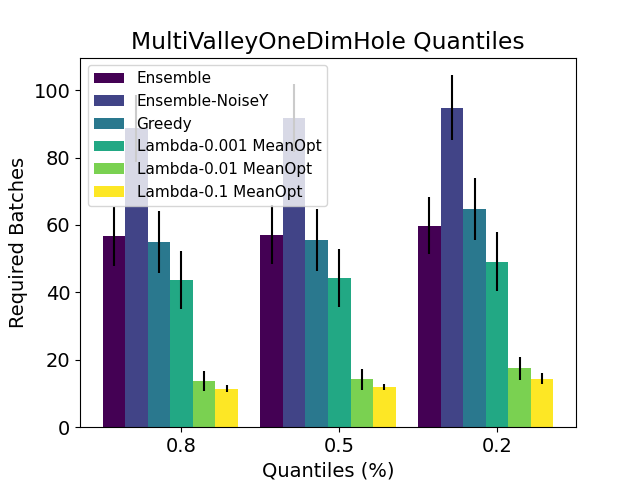} 
    \caption{\textbf{NN 100-10}. $\tau$-quantile batch convergence of MeanOptimism vs EnsembleOptimism vs EnsembleNoiseY. }
    \label{fig:synthetic_datasets_results_100_10_ensembles_vs_meanopt}
\end{figure}

\begin{figure}[thb]
\centering
    \includegraphics[width=0.23\textwidth]{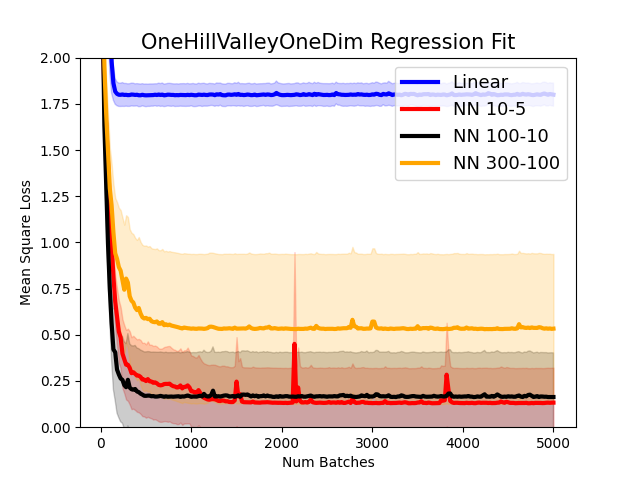}
    \includegraphics[width=0.23\textwidth]{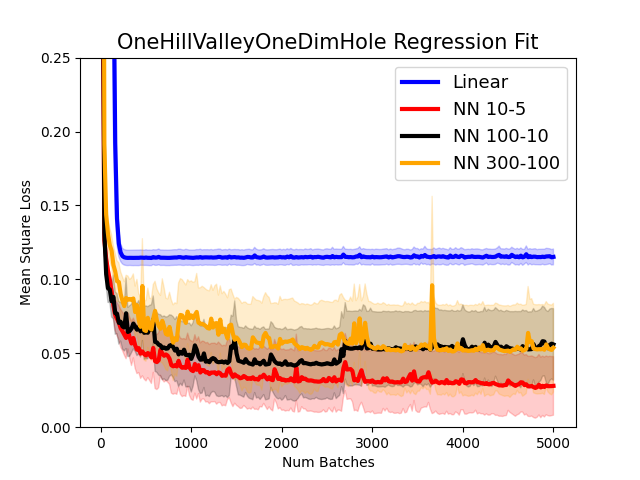}
    \includegraphics[width=0.23\textwidth]{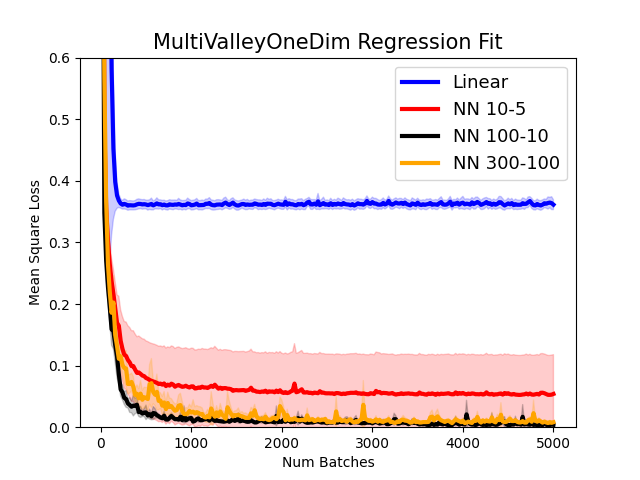}
    \includegraphics[width=0.23\textwidth]{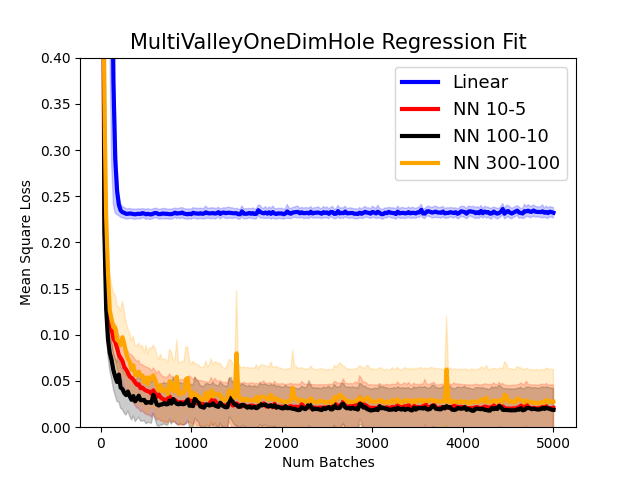} 
    \vspace{-0.1in}
    \caption{\textbf{Training set} regression fit curves evolution over training for the suite of synthetic datasets. Each training batch contains $10$ datapoints. }
    \label{fig:regression_fit_synthetic_datasets}
\end{figure}

\newpage
\subsubsection{Public Supervised Learning Datasets} \label{section::public_supervised_learning_datasets}

\begin{wrapfigure}{l}{.6\textwidth}%
    \centering
    \includegraphics[width=0.28\textwidth]{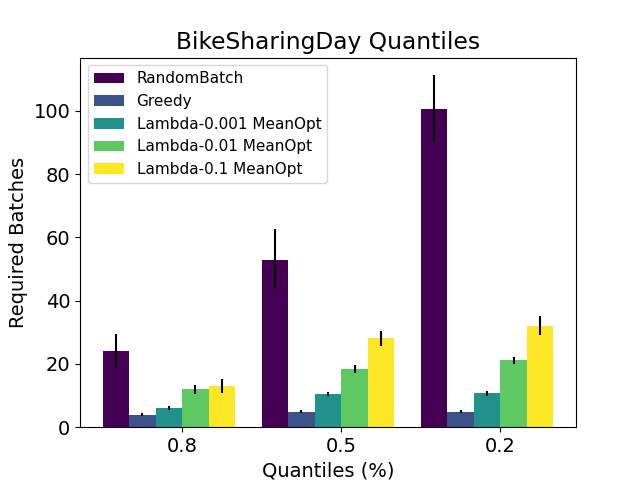}
    \includegraphics[width=0.28\textwidth]{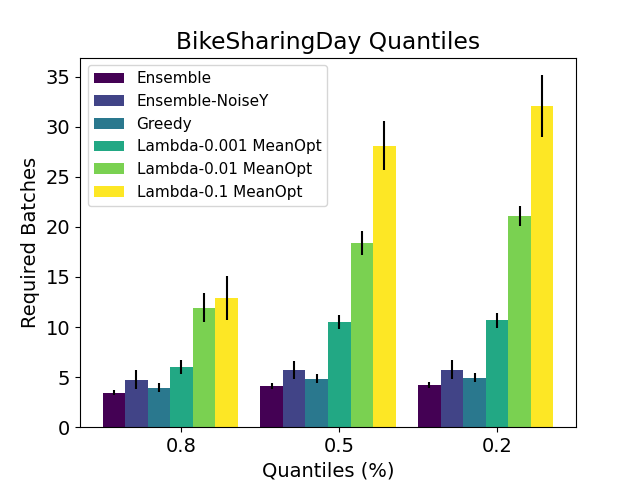} 
    \includegraphics[width=0.28\textwidth]{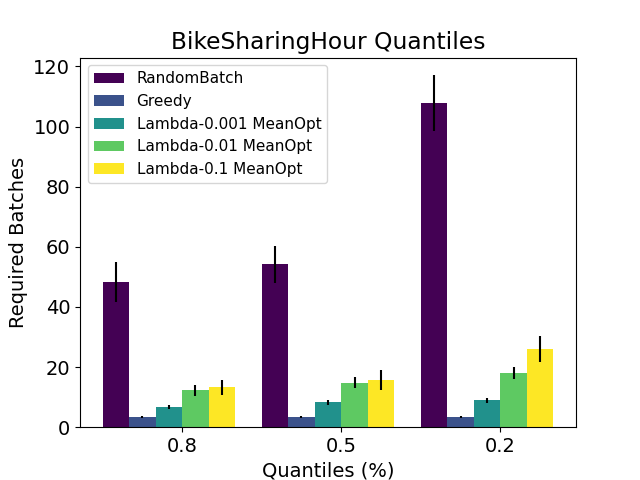}
    \includegraphics[width=0.28\textwidth]{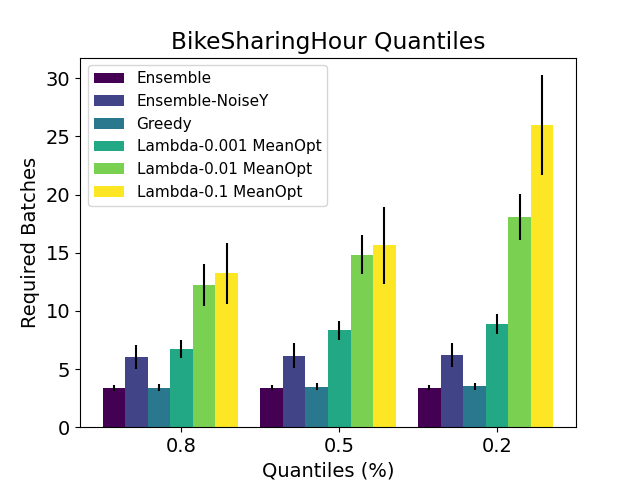} 
 \includegraphics[width=0.28\textwidth]{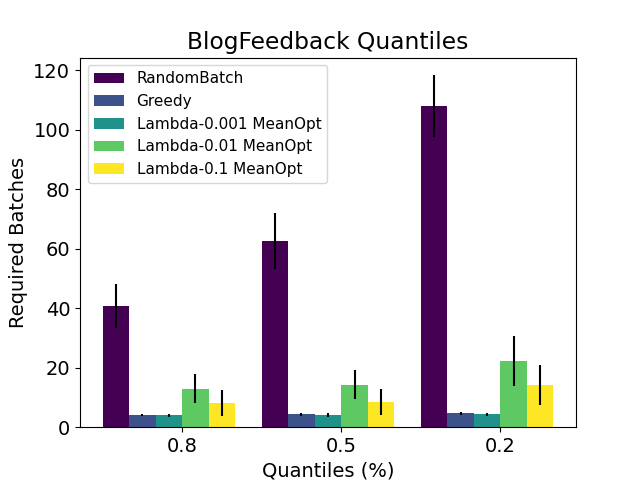}  
 \includegraphics[width=0.28\textwidth]{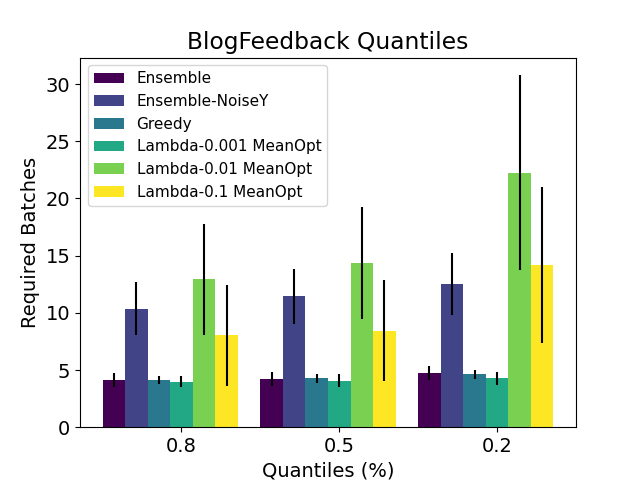}

    \caption{ \textbf{NN 100-10}. $\tau$-quantile batch convergence (\textbf{left}) $\mathrm{BikeSharingDay}$, $\mathrm{BikeSharingHour}$ and $\mathrm{BlogFeedback}$ datasets with original response values.}
    \label{fig:public_regression_original}
\end{wrapfigure}

We test our methods on public binary classification (\textrm{Adult}, \textrm{Bank}) and regression (\textrm{BikeSharingDay}, \textrm{BikeSharingHour}, \textrm{BlogFeedback}) datasets from the UCI repository~\citep{Dua:2019}. In our implementation, the versions of the UCI datasets we use have the following characteristics. Due to our internal data processing that splits the data into train, test and validation sets the number of datapoints we consider may be different from the size of their public versions. Our code converts the datasets categorical attributes into numerical ones using one hot encodings. That explains the discrepancy between the number of attributes listed in the public description of these datasets and ours (see \url{https://archive.ics.uci.edu/ml/index.php}). The $\mathrm{BikeSharingDay}$ dataset consists of $658$ datapoints each with $13$ attributes. The $\mathrm{BikesharingHour}$ dataset consists of $15642$ datapoints each with $13$ attributes. The $\mathrm{BlogFeedback}$ dataset consists of $52396$ datapoints each with $280$ attributes. The $\mathrm{Adult}$ dataset consists of $32561$ datapoints each with $119$ attributes. The $\mathrm{Bank}$ dataset~\citep{moro2014data} consists of $32950$ datapoints each with $60$ attributes. To evaluate our algorithm we assume the response (regression target or binary classification label) values are noiseless. We consider each observation $i$ in a dataset to represent a discrete action, each of which has features $a_i$ and reward $y_*(a_i)$ from the response. In all of our experiments we use a batch size of $3$ and evaluate over $25$ independent runs each with $150$ batches. 
\begin{figure}[tb]
    \centering
        \includegraphics[width=0.23\textwidth]{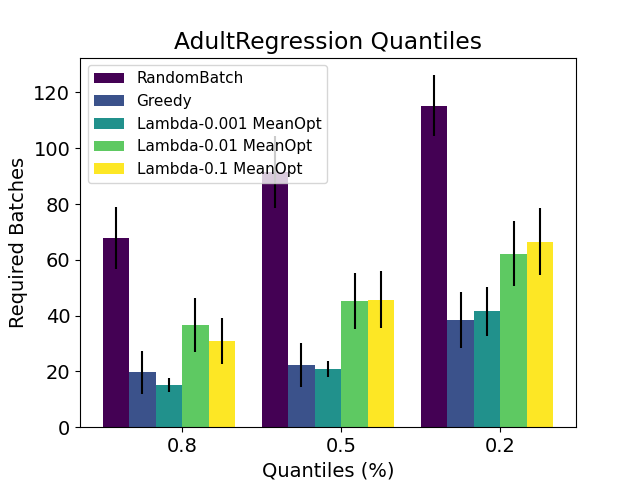}
    \includegraphics[width=0.23\textwidth]{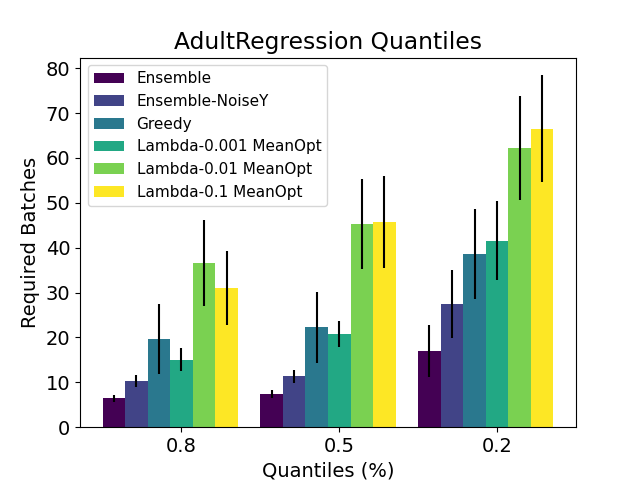}
    \includegraphics[width=0.23\textwidth]{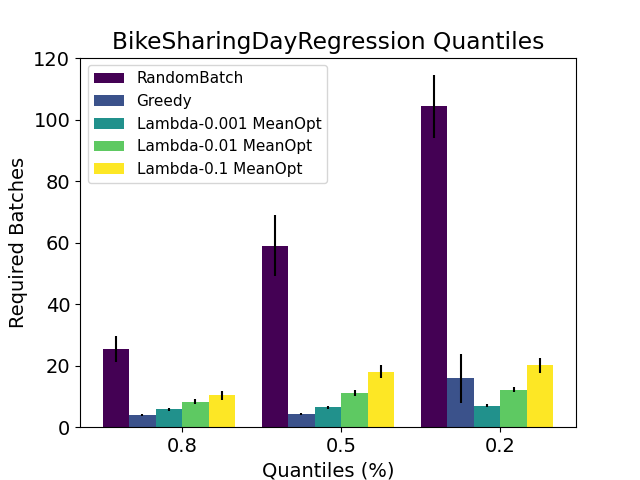}
    \includegraphics[width=0.23\textwidth]{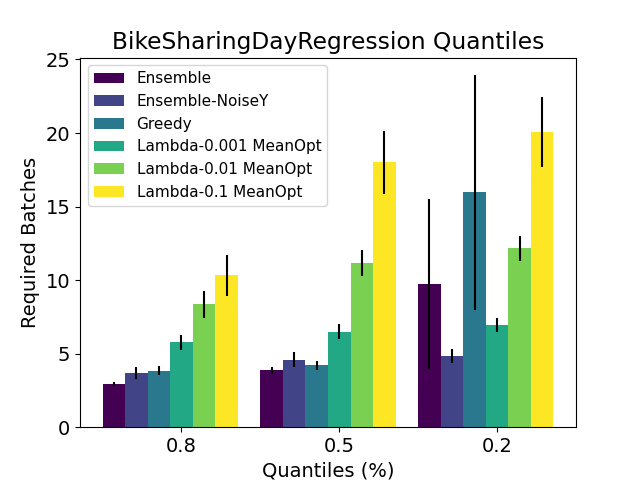}
    \caption{ \textbf{NN 10-5}. $\tau$-quantile batch convergence on the Adult and BikeSharingDay datasets with regression-fitted response values.}
    \label{fig:regression_fitted_results}
\end{figure}

 We first use all 5 public datasets to test \textrm{OAE} in the setting when the true function class $\Fcal$ is known (in this case, a neural network) is known contain the function $\mathrm{OAE}$ learns over the course of the batches. We train a neural network under a simple mean squared error regression fit to the binary responses (for the binary classification datasets) or real-valued responses (for the regression datasets). This regression neural network consists of a neural network with two hidden layers. In Figure~\ref{fig:regression_fitted_results} we present results where the neural network layers have sizes $10$ and $5$ and the responses are fit to the $\mathrm{Adult}$ classification dataset and the regression dataset $\mathrm{BikeSharingDay}$. In Appendix~\ref{section::regression_fitted_datasets_appendix} and Figure~\ref{fig:public_regression_fitted_original_appendix} we present results where we fit a two layer neural network model of dimensions $100$ and $10$ to the responses of the $\mathrm{Bank}$ classification dataset and the $\mathrm{BikeSharingHour}$ regression dataset. In each case we train the regression fitted responses on the provided datasets using $5,000$ batch gradient steps (with a batch size of $10$). During test time we use the real-valued response predicted by our regression model as the reward for the corresponding action. This ensures the dataset's true reward response model has the same architecture as the reward model used by $\mathrm{OAE}$. 

\begin{figure}[tb]
    \centering
    \includegraphics[width=0.32\textwidth]{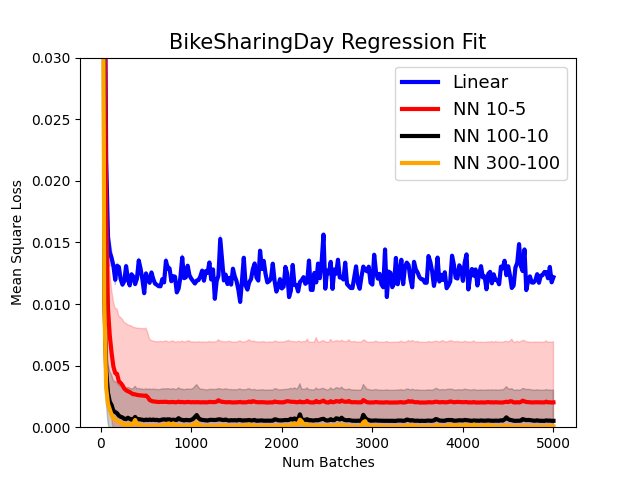}
    \includegraphics[width=0.32\textwidth]{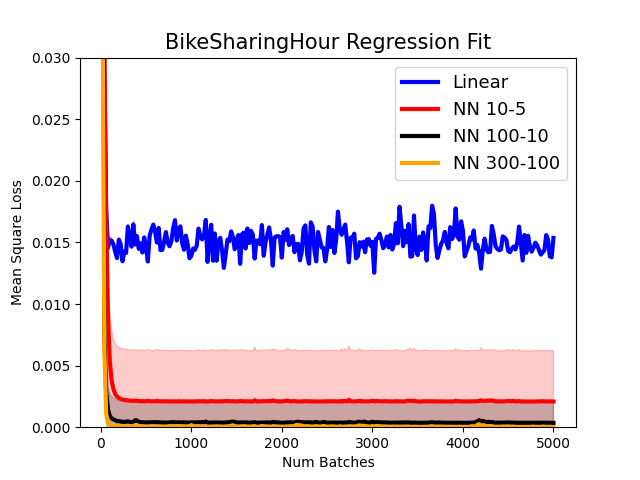}
    \includegraphics[width=0.32\textwidth]{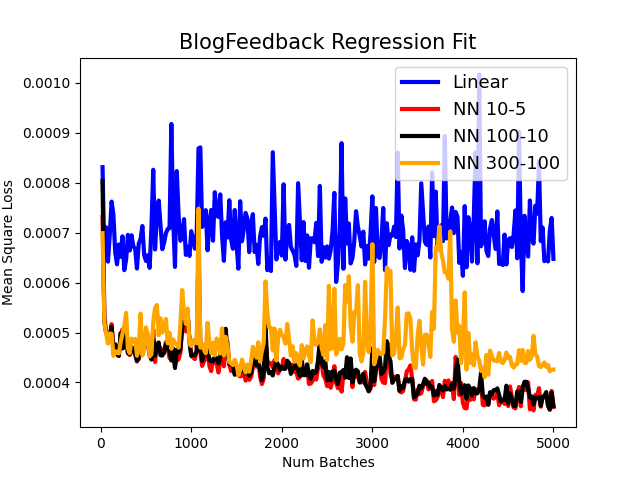}
    \caption{\textbf{Training set} regression fit curves evolution over training for the UCI datasets. Each training batch contains $10$ datapoints. }
    \label{fig:regression_fit_uci_datasets}
\end{figure}

We use the same experimental parameters and comparison algorithms as in the synthetic dataset experiments. Figure \ref{fig:regression_fitted_results} shows the results on the binary classification $\mathrm{Adult}$ and regression $\mathrm{BikeSharingDay}$ datasets using these fitted responses. We observe the $\mathrm{MeanOpt}$ algorithm handily outperforms $\mathrm{RandomBatch}$ on both datasets. Appendix \ref{section::regression_fitted_datasets_appendix} shows similar results for $\mathrm{BikeSharingHour}$ and $\mathrm{Bank}$. We also compare the performance of $\mathrm{MeanOpt}$, $\mathrm{Ensemble}$ and $\mathrm{EnsembleNoiseY}$ in the $\mathrm{Adult}$ and $\mathrm{BikeSharingDay}$ datasets. In both cases, ensemble methods achieved better performance than $\mathrm{MeanOpt}$. It remains an exciting open question to verify whether these observations translates into a general advantage for ensemble methods in the case when $y_\star \in \Fcal$.

Given \textrm{OAE}'s strong performance when the true and learned reward functions are members of the same function class $\Fcal$, we next explore the performance when they are not necessarily in the same class by revising the problem on the regression datasets to use their original, real-world responses. Figure~\ref{fig:public_regression_original} shows results for the $\mathrm{BikeSharingDay}$, $\mathrm{BikeSharingHour}$  and $\mathrm{BlogFeedback}$ datasets. In this case $\mathrm{Ensemble}$ outperforms both $\mathrm{RandomBatch}$ in $\tau$-quantile convergence time. In all of these plots we observe that high optimism approaches underperform compared with low optimism ones. $\mathrm{Ensemble}$ and $\mathrm{Greedy}$ (the degenerate version $\lambda_\mathrm{reg} = 0$ of $\mathrm{MeanOpt}$) achieve the best performance across all three datasets. We observe the same phenomenon take place even when $\mathrm{F}$ is a class of linear functions (see figure~\ref{fig:uci_linear_only_meanopt}). Just as we observed in the case of the suite of synthetic datasets, setting $\Fcal$ to be a class of linear models $\mathrm{MeanOpt}$ still achieves substantial performance gains w.r.t $\mathrm{RandomBatch}$ (see figure~\ref{fig:uci_linear}) despite its regression fit loss never reaching absolute zero (see figure~\ref{fig:regression_fit_uci_datasets}). In Appendix~\ref{section::uci_appendix}, figure~\ref{fig:uci_max_hinge_opt_100_10} we compare the performance of $\mathrm{MaxOpt}$ and $\mathrm{HingeOpt}$ with $\mathrm{RandomBatch}$ when $\mathrm{F}$ is a class of neural networks with hidden layers of sizes $100$ and $10$. We observe the performance of $\mathrm{MaxOpt}$, although beats $\mathrm{RandomBatch}$ is suboptimal in comparison with $\mathrm{MeanOpt}$. In contrast $\mathrm{HingeOpt}$ has a similar performance to $\mathrm{MeanOpt}$. 

\begin{figure}[tb]
    \centering
        \includegraphics[width=0.32\textwidth]{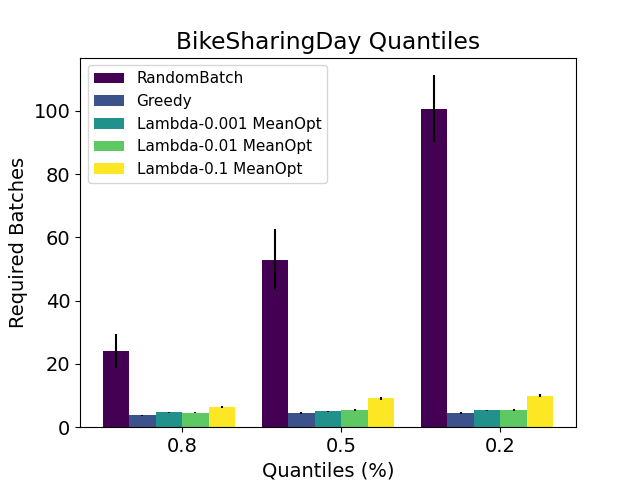}
    \includegraphics[width=0.32\textwidth]{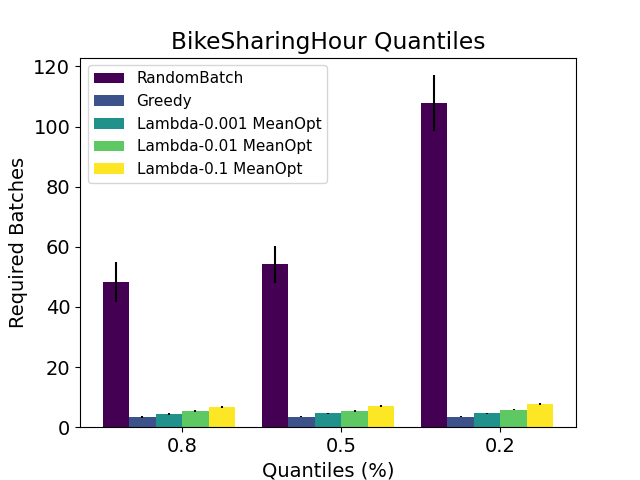}
    \includegraphics[width=0.32\textwidth]{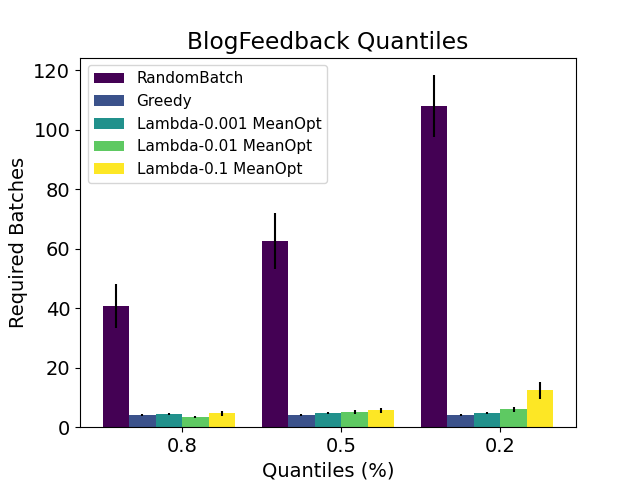}
    \caption{ \textbf{Linear}. $\tau$-quantile batch convergence on the BikeSharingDay, BikeSharingHour and BlogFeedback datasets. }
    \label{fig:uci_linear}
\end{figure}
\begin{figure}[tb]
    \centering
        \includegraphics[width=0.32\textwidth]{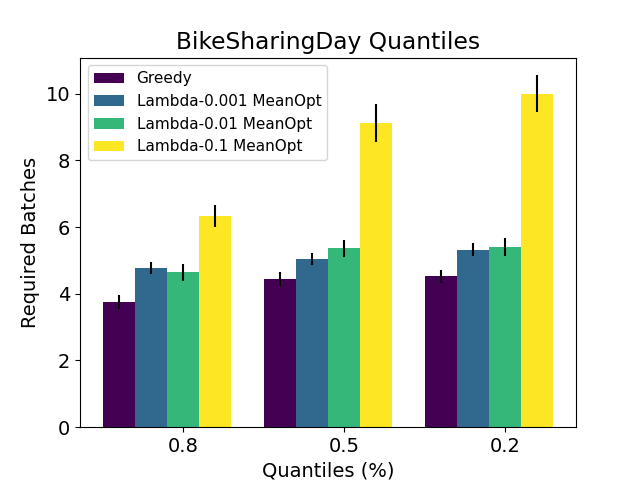}
    \includegraphics[width=0.32\textwidth]{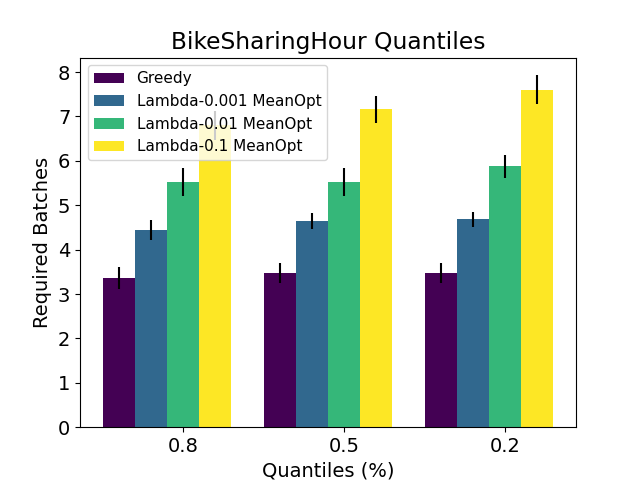}
    \includegraphics[width=0.32\textwidth]{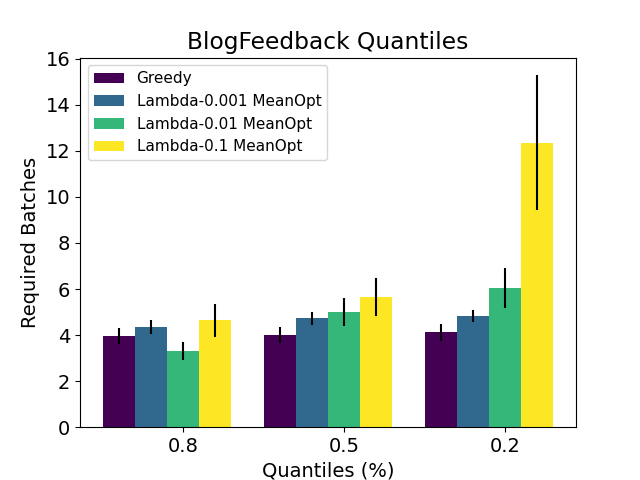}
    \caption{ \textbf{Linear}. $\tau$-quantile batch convergence on the BikeSharingDay, BikeSharingHour and BlogFeedback datasets.}
    \label{fig:uci_linear_only_meanopt}
\end{figure}


\subsection{Experiments Diversity Seeking Objectives}\label{section::experiments_diversity_seeking}

In this section we explore how diversity inducing objectives can sometimes result in better performing variants of $\mathrm{OAE}$. We implement and test $\mathrm{DetD}$, $\mathrm{SeqB}$, $\mathrm{Ensemble-SeqB}$ and $\mathrm{Ensemble-SeqB-NoiseY}$. In all the experiments we have conducted we kept the batch size $b$ and the number of batches $N$ for each dataset equal to the settings used in Section~\ref{section::vanilla_oae_experiments}. The neural network architectures are the same we have considered before, two layer networks with ReLU activations.

\subsubsection{$\mathrm{OAE-DvD}$}\label{section::oae_dvd_expeiments_main}

\begin{wrapfigure}{l}{.4\textwidth}%
    \centering
    \includegraphics[width=0.23\textwidth]{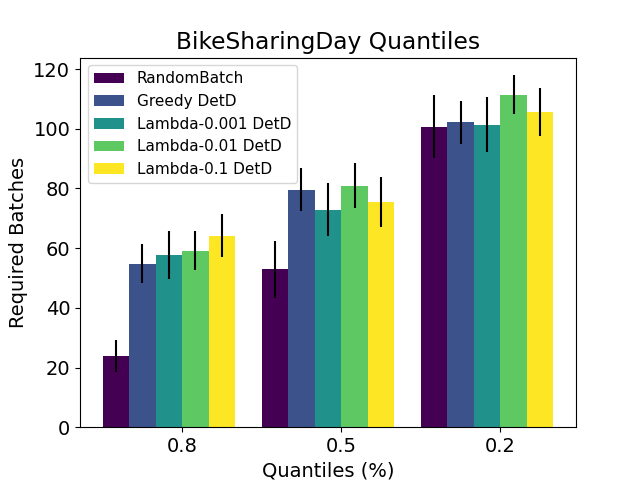}
    \caption{ \textbf{NN 100-10}. $\tau$-quantile batch convergence performance of $\mathrm{DetD}$ vs $\mathrm{RandomBatch}$ in the $\mathrm{BikeSharingDay}$ dataset.  }
    \label{fig:dvd_results_bikesharing}
\end{wrapfigure}
We implemented and tested the $\mathrm{DetD}$ algorithm described in Section~\ref{section::div_det_tractable}. In our experiments we set $\lambda_{\mathrm{div}} = 1$ and set $\wf_t$ to be the result of solving the $\mathrm{MeanOpt}$ objective (see problem~\ref{equation::approximate_objective_average}) for different values of $\lambda_{\mathrm{reg}}$. We set $\mathcal{K}$ to satisfy, 
$$\mathbf{K}(a_1, a_2) = \exp\left( - \left\| \frac{a_1}{\| a_1 \|} - \frac{a_2}{\|a_2\|} \right\|^2 \right)$$

We see that across the suite of synthetic datasets and architectures (\textbf{NN 10-5} and \textbf{NN 100-10}) the performance of $\mathrm{MeanOpt}$ degrades when diversity is enforced (see figures~\ref{fig:dvd_results_simulated_10_5} and~\ref{fig:dvd_results_simulated_100_10} and compare with figures~\ref{fig:synthetic_datasets_results_10_5} and~\ref{fig:synthetic_datasets_results_100_10}). A similar phenomenon is observed in the suite of UCI datasets (see figure~\ref{fig:dvd_results_bikesharing} for $\mathrm{DetD}$ results on the $\mathrm{BikeSharing}$ dataset and figure~\ref{fig:public_regression_original} for comparison).

In contrast, we note that $\mathrm{DetD}$ beats the performance of $\mathrm{RandomBatch}$ in the $\mathrm{A373}$, $\mathrm{MCF7}$ and $\mathrm{HA1E}$ datasets and over the two neural architectures \textbf{NN 10-5} and \textbf{NN 100-10}. Nonetheless, the $\mathrm{DetD}$ is not able to beat $\mathrm{RandomBatch}$ in the $\mathrm{VCAP}$ dataset. These results indicate that a diversity objective that relies only on the geometry of the arm space and does not take into account the response values may be beneficial when $y_\star \not\in \Fcal$ but could lead to a deterioration in performance when $y_\star \in \Fcal$. The algorithm designer should be careful when balancing diversity objectives and purely optimism driven exploration strategies, since the optimal combination depends on the nature of the dataset. It remains an interesting avenue for future research to design strategies that diagnose in advance the appropriate diversity/optimism balance for $\mathrm{OAE-DvD}$.

\begin{figure}[tb]
    \centering
    \includegraphics[width=0.24\textwidth]{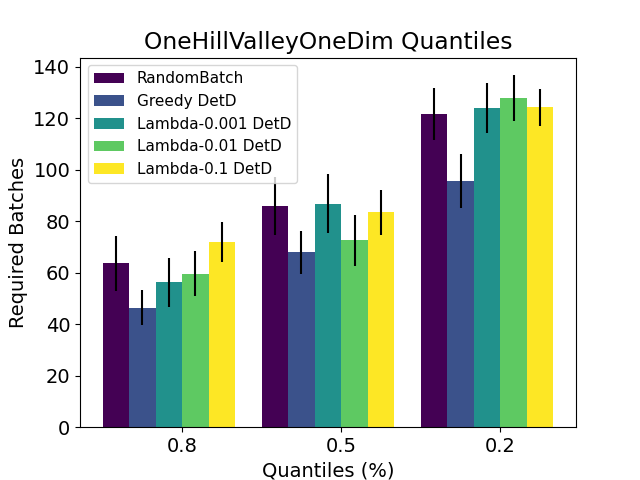}
    \includegraphics[width=0.24\textwidth]{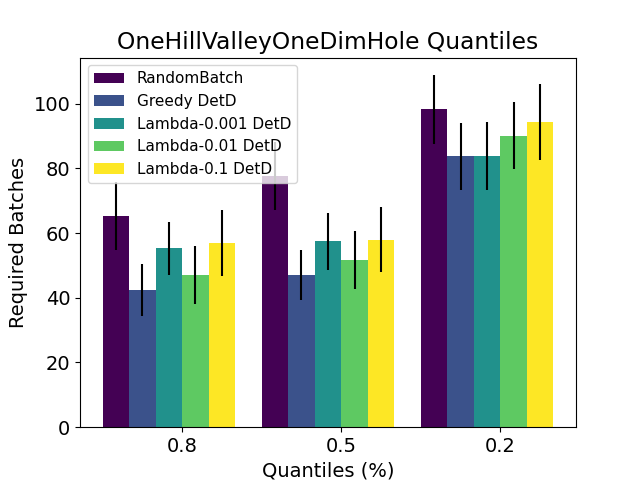}
    \includegraphics[width=0.24\textwidth]{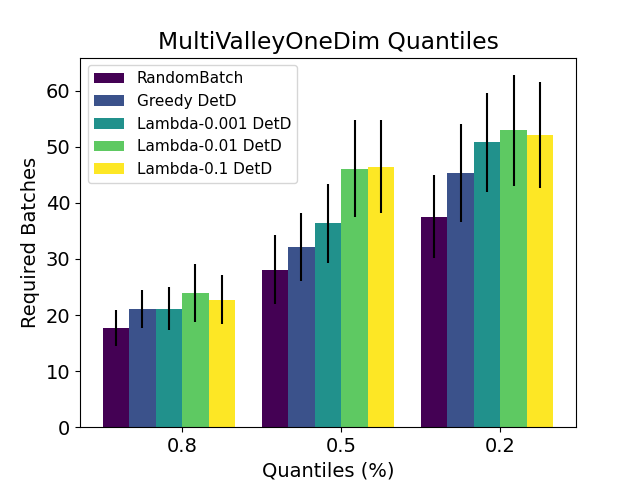}
    \includegraphics[width=0.24\textwidth]{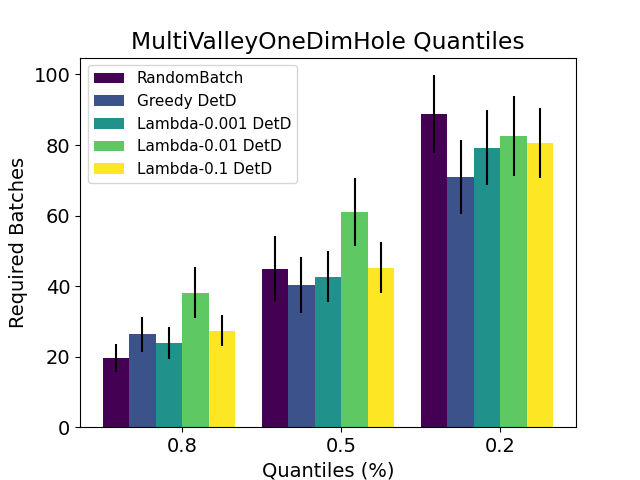}
    \caption{ \textbf{NN 10-5}.$\tau$-quantile batch convergence performance of $\mathrm{DetD}$ vs $\mathrm{RandomBatch}$ in the suite of synthetic datasets. }
    \label{fig:dvd_results_simulated_10_5}
\end{figure}
\begin{figure}[tb]
    \centering
    \includegraphics[width=0.24\textwidth]{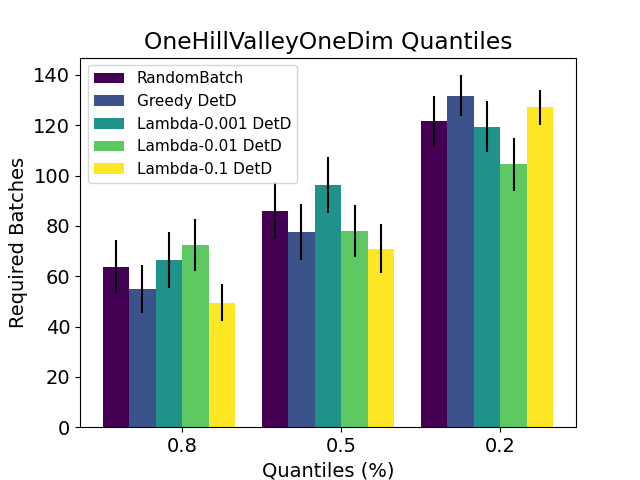}
    \includegraphics[width=0.24\textwidth]{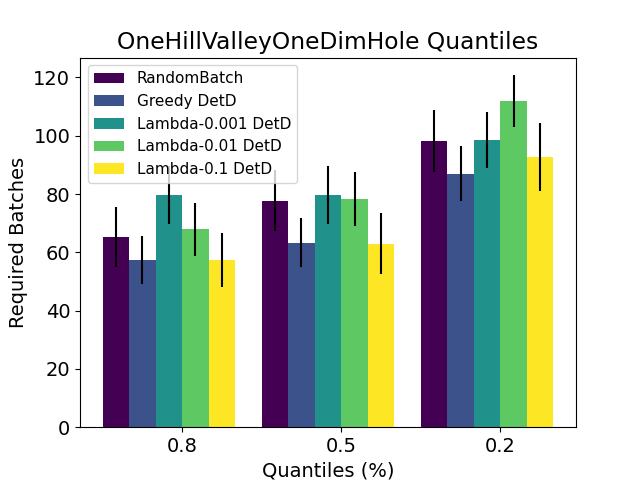}
    \includegraphics[width=0.24\textwidth]{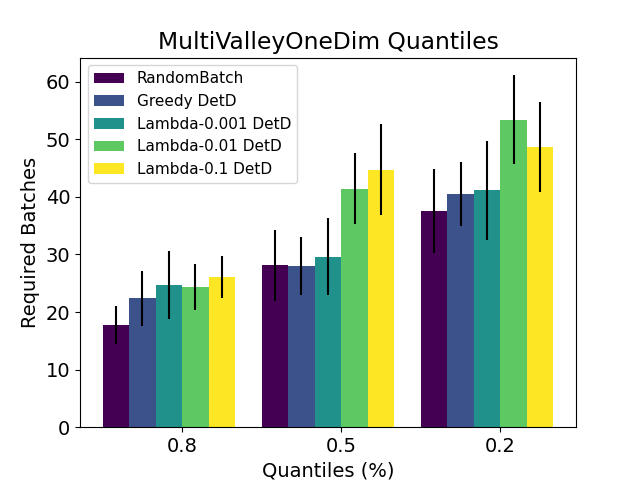}
    \includegraphics[width=0.24\textwidth]{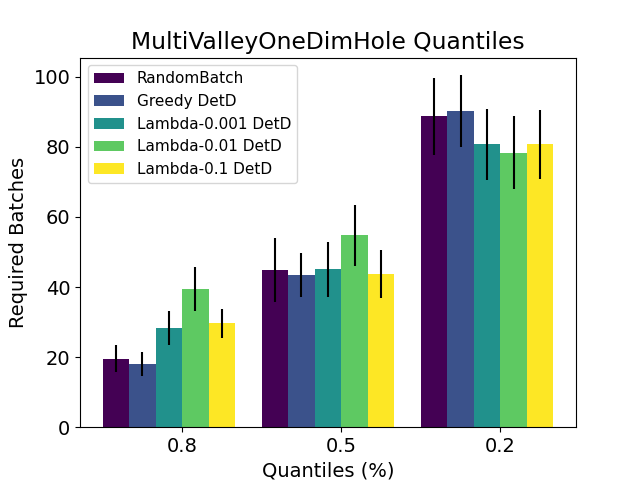}
    \caption{ \textbf{NN 100-10}. $\tau$-quantile batch convergence performance of $\mathrm{DetD}$ vs $\mathrm{RandomBatch}$ in the suite of synthetic datasets.   }
    \label{fig:dvd_results_simulated_100_10}
\end{figure}

\begin{figure}[tb]
    \centering
    \includegraphics[width=0.23\textwidth]{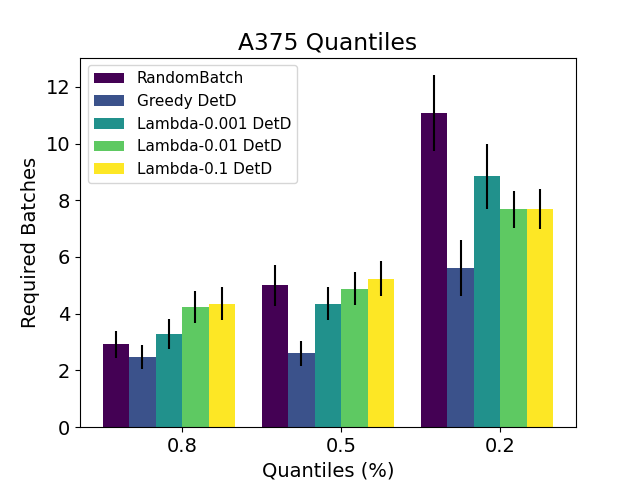}
    \includegraphics[width=0.23\textwidth]{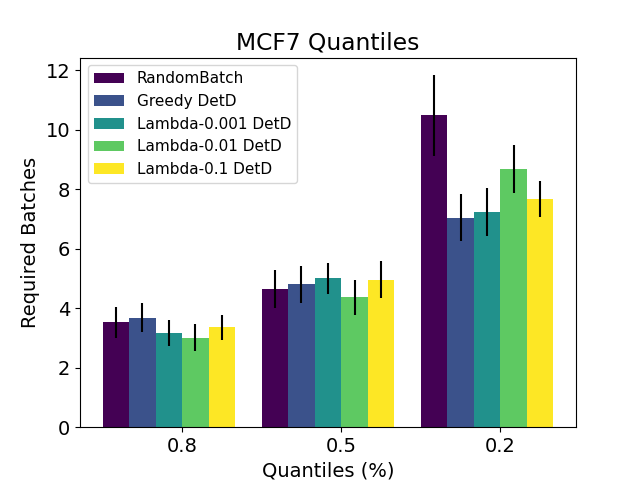}
    \includegraphics[width=0.23\textwidth]{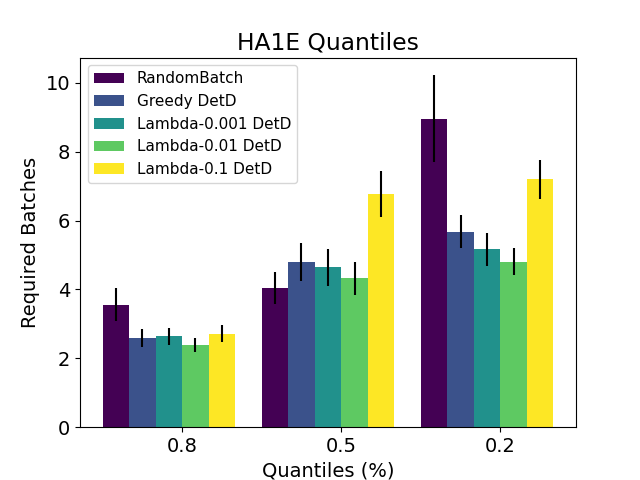}
    \includegraphics[width=0.23\textwidth]{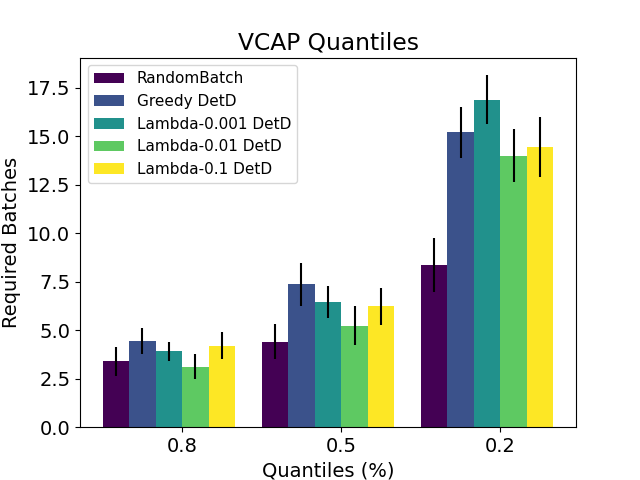}
    \caption{ \textbf{NN 10-5}. $\tau$-quantile batch convergence performance of $\mathrm{DetD}$ vs $\mathrm{RandomBatch}$ in the suite of genetic perturbation datasets.  }
    \label{fig:dvd_results_bio_10_5}
\end{figure}

\begin{figure}[tb]
    \centering
    \vspace{-0.05in}
    \includegraphics[width=0.23\textwidth]{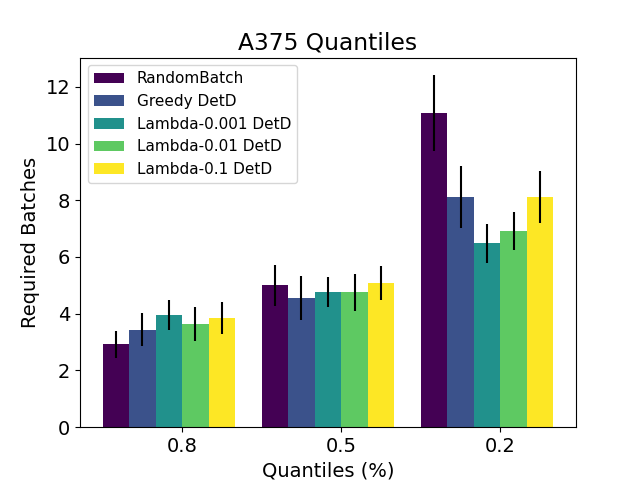}
    \includegraphics[width=0.23\textwidth]{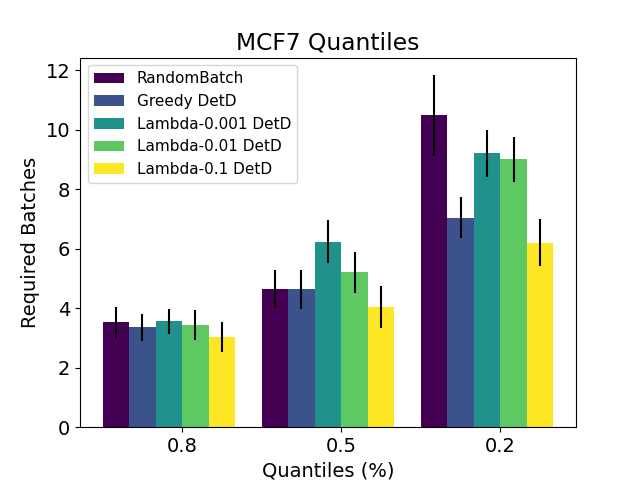}
    \includegraphics[width=0.23\textwidth]{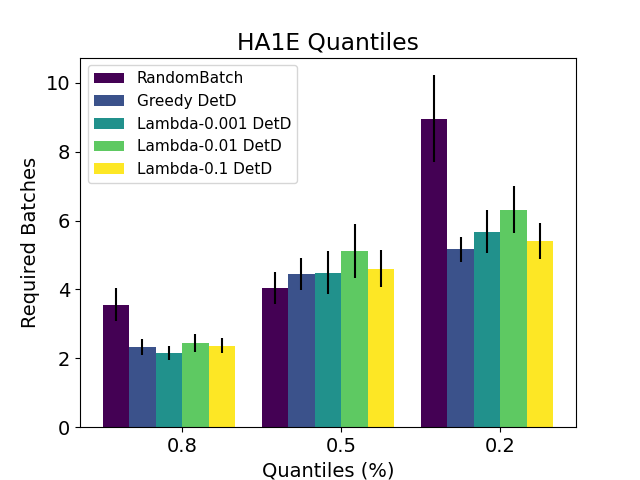}
    \includegraphics[width=0.23\textwidth]{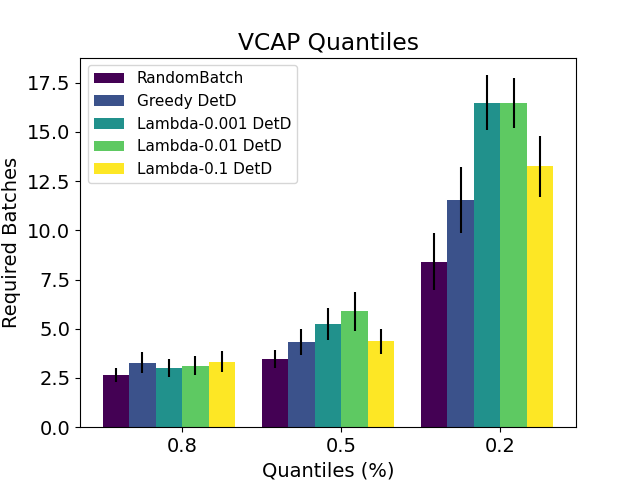}
    \caption{ \textbf{NN 100-10}. $\tau$-quantile batch convergence performance of $\mathrm{DetD}$ vs $\mathrm{RandomBatch}$ in the suite of genetic perturbation datasets.  }
    \label{fig:dvd_results_bio_100_10}
    \vspace{-.25cm}
\end{figure}

\subsubsection{$\mathrm{OAE-Seq}$}\label{section::oae_seq_experiments}

In this section we present our experimental evaluation of the three tractable implementations of $\mathrm{OAE-Seq}$ we described in Section~\ref{section::div_seq_tractable}. In our experiments we set the optimism-pessimism weighting parameter $\alpha = 1/2$ and the acquisition function to $\Abb_{\mathrm{avg}}$. We are primarily concerned with answering whether `augmenting' the $\mathrm{MeanOpt}$, $\mathrm{Ensemble}$ and $\mathrm{EnsembleNoiseY}$ methods with a sequential in batch selection mechanism leads to an improved performance for $\mathrm{OAE}$. We answer this question in the affirmative. In our experimental results we show that across datasets and neural network architectures adding in batch sequential optimism either improves or leads to no substantial degradation in the number of batches $\mathrm{OAE}$ requires to arrive at good arm.

\begin{figure}[H]
    \centering
    \vspace{-0.05in}
        \includegraphics[width=0.23\textwidth]{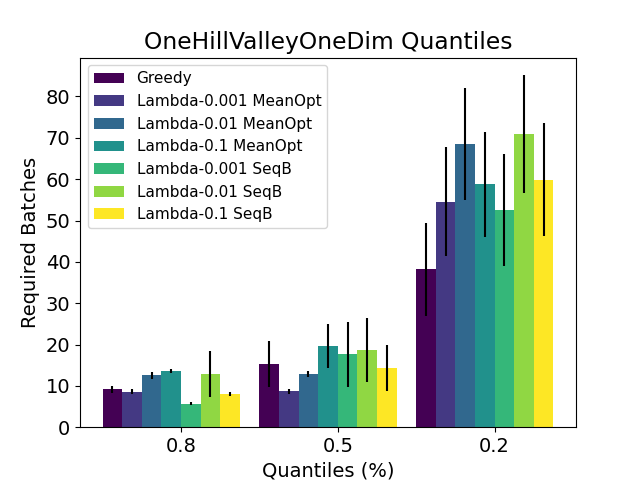}
        \includegraphics[width=0.23\textwidth]{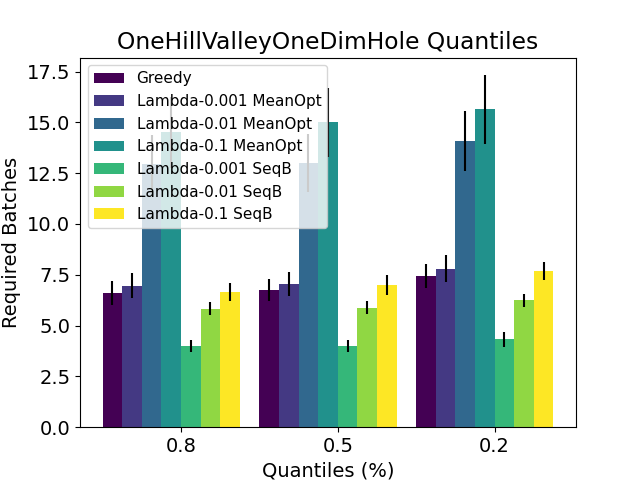} 
        \includegraphics[width=0.23\textwidth]{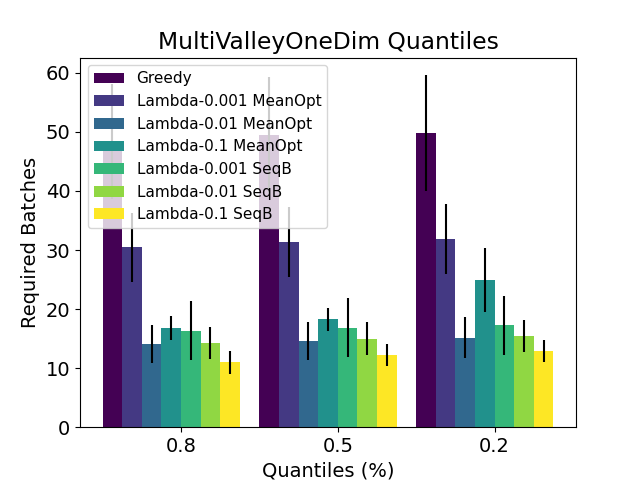}
        \includegraphics[width=0.23\textwidth]{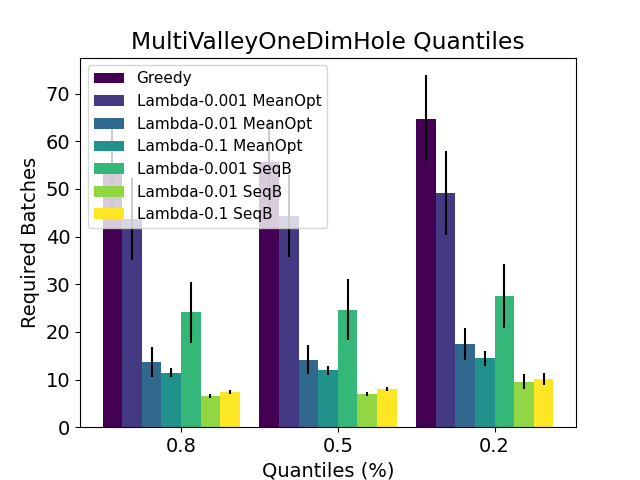}
    \caption{  \textbf{NN 100-10}. $\tau$-quantile batch convergence performance of (\textbf{left}) $\mathrm{MeanOpt}$ vs $\mathrm{SeqB}$, (\textbf{center}) $\mathrm{Ensemble}$ vs $\mathrm{Ensemble-SeqB}$ and (\textbf{right}) $\mathrm{Ensemble-NoiseY}$ vs $\mathrm{Ensemble-SeqB-NoiseY}$ on the suite of synthetic datasets. }
    \label{fig:seq_results_simulated_100_10}
    \vspace{-.25cm}
\end{figure}

\begin{figure}[H]%
    \centering
    \vspace{-0.05in}
        \includegraphics[width=0.23\textwidth]{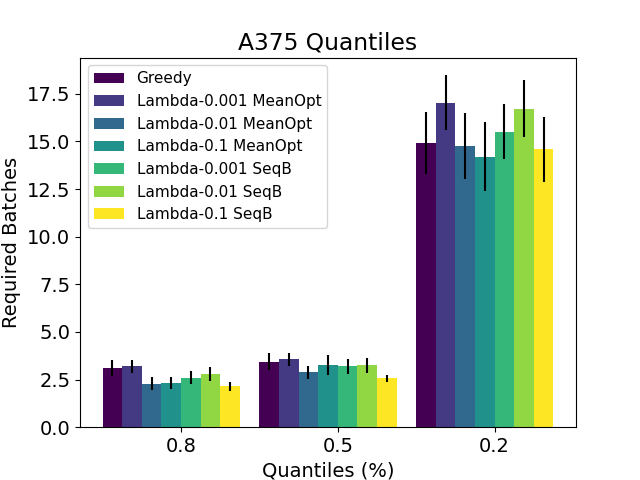}
        \includegraphics[width=0.23\textwidth]{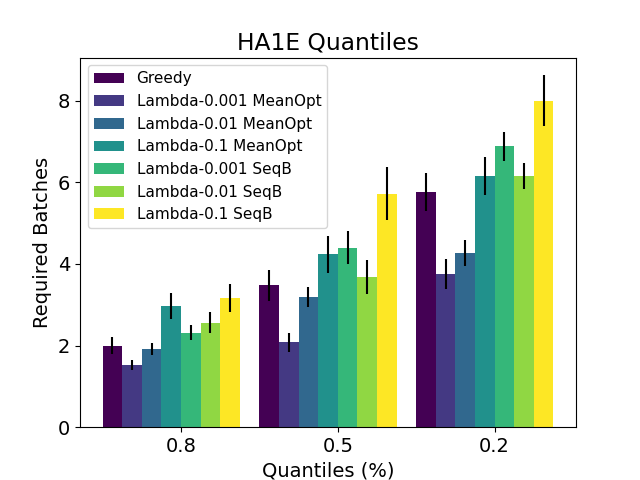}
        \includegraphics[width=0.23\textwidth]{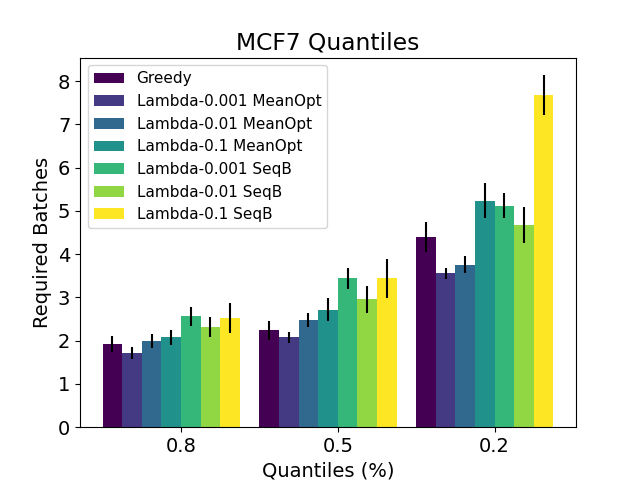}
        \includegraphics[width=0.23\textwidth]{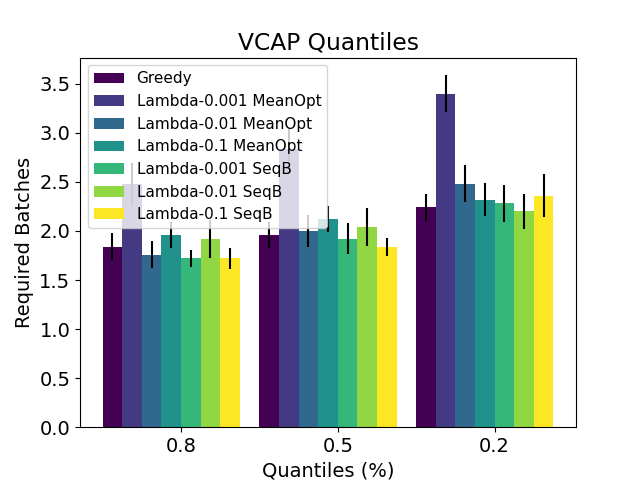}
    \caption{ \textbf{NN 100-10}. $\tau$-quantile batch convergence performance of $\mathrm{MeanOpt}$ vs $\mathrm{SeqB}$ on the gene perturbation datasets.  }
    \label{fig:seq_results_bio_100_10}
    \vspace{-.25cm}
\end{figure}%

\begin{figure}[tb]
    \centering
    \vspace{-0.05in}
    \includegraphics[width=0.32\textwidth]{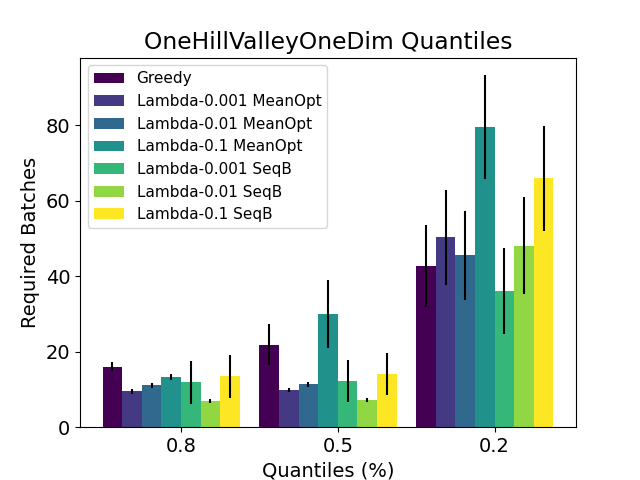}
    \includegraphics[width=0.32\textwidth]{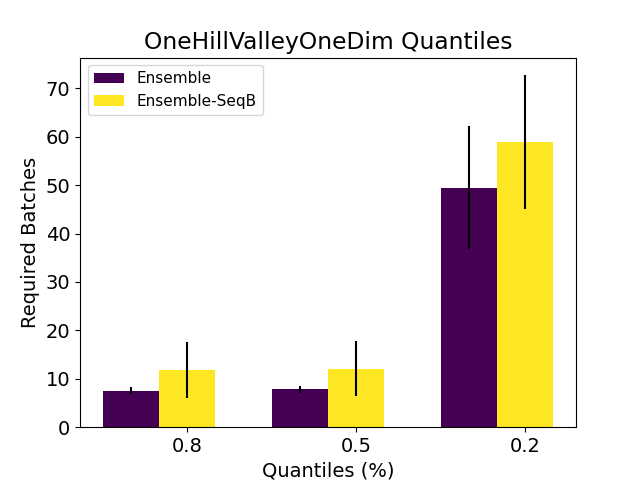}
    \includegraphics[width=0.32\textwidth]{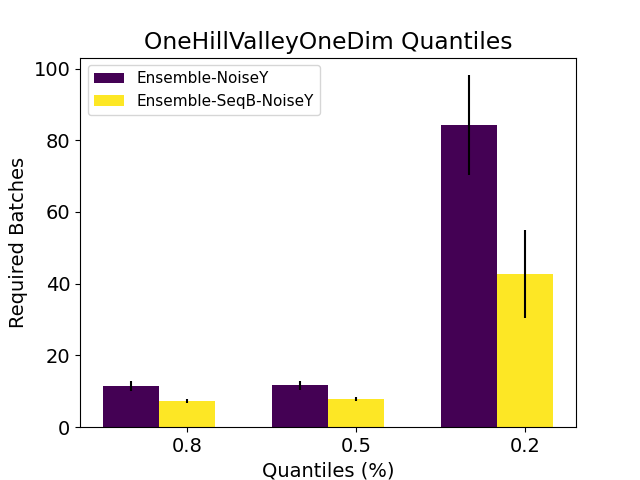}
        \includegraphics[width=0.32\textwidth]{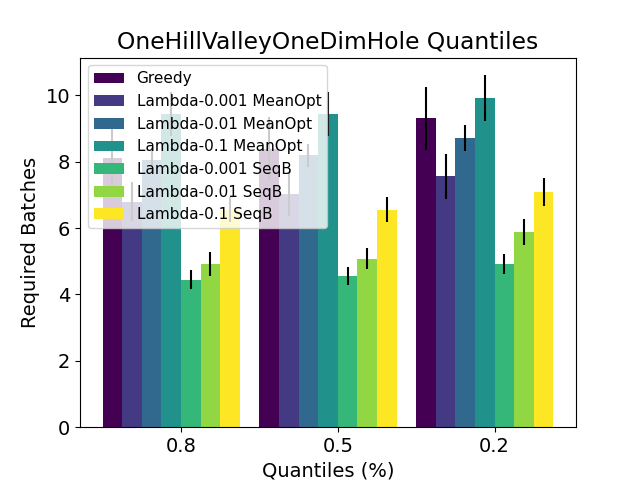}
    \includegraphics[width=0.32\textwidth]{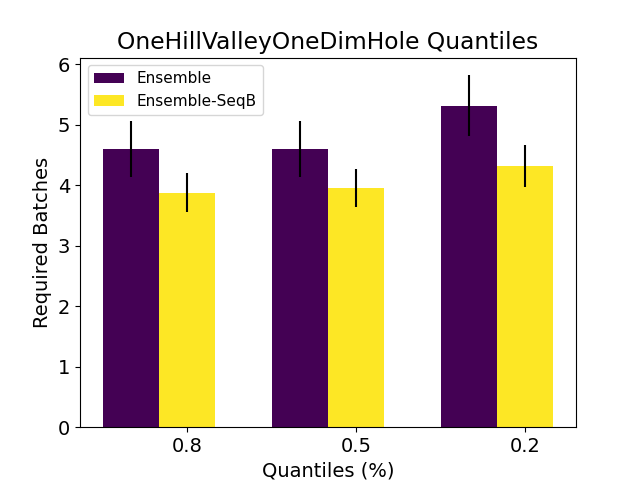}
    \includegraphics[width=0.32\textwidth]{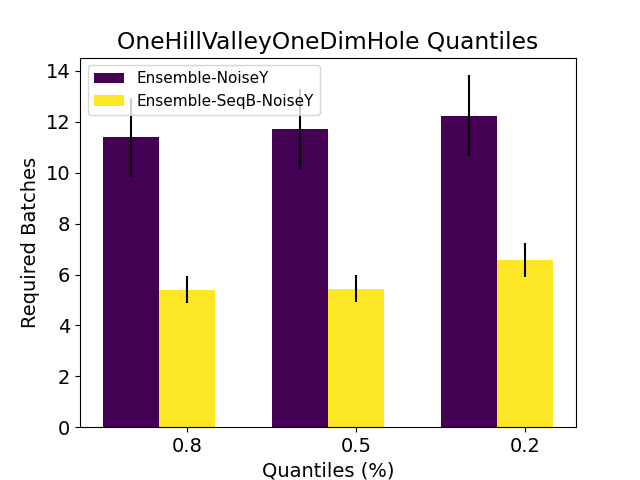}
        \includegraphics[width=0.32\textwidth]{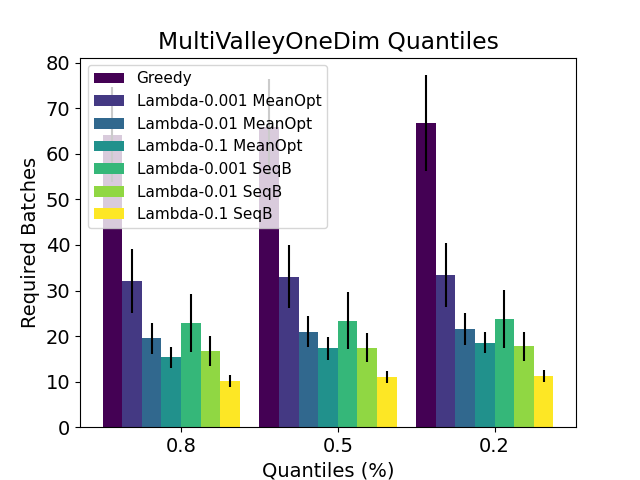}
    \includegraphics[width=0.32\textwidth]{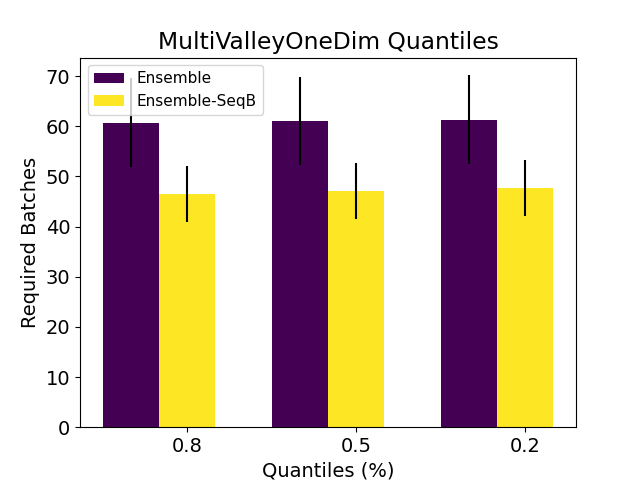}
    \includegraphics[width=0.32\textwidth]{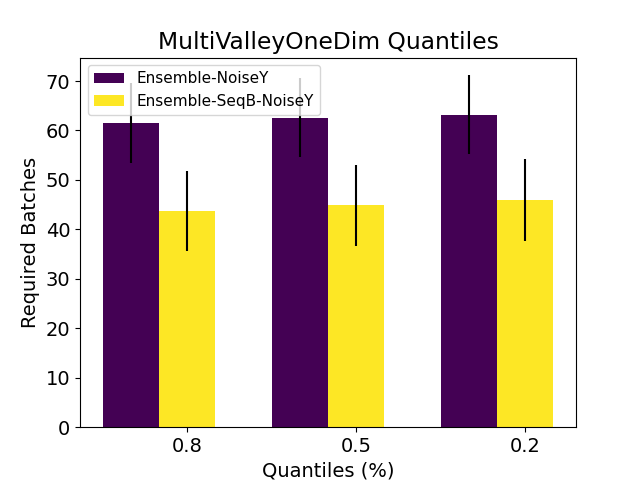}
            \includegraphics[width=0.32\textwidth]{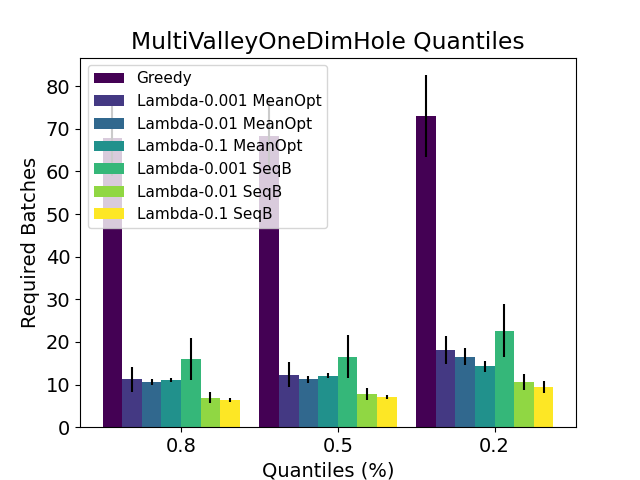}
    \includegraphics[width=0.32\textwidth]{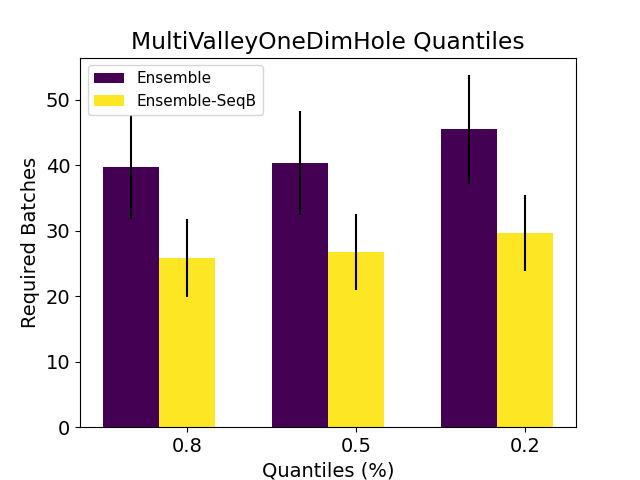}
    \includegraphics[width=0.32\textwidth]{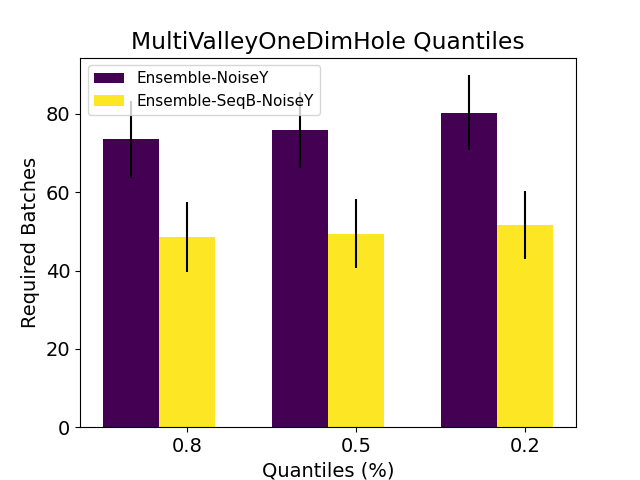}
    \caption{ \textbf{NN 10-5}. $\tau$-quantile batch convergence performance of (\textbf{left}) $\mathrm{MeanOpt}$ vs $\mathrm{SeqB}$, (\textbf{center}) $\mathrm{Ensemble}$ vs $\mathrm{Ensemble-SeqB}$ and (\textbf{right}) $\mathrm{Ensemble-NoiseY}$ vs $\mathrm{Ensemble-SeqB-NoiseY}$ on the suite of synthetic datasets. }
    \label{fig:seq_results_simulated}
    \vspace{-.25cm}
\end{figure}

\begin{figure}[tb]
    \centering
    \vspace{-0.05in}
    \includegraphics[width=0.32\textwidth]{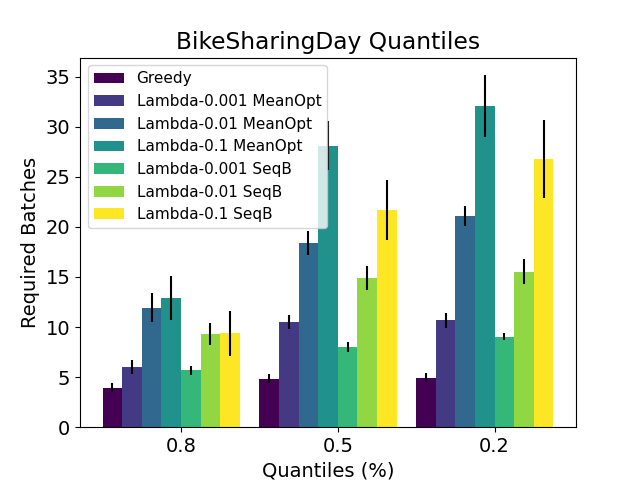}
    \includegraphics[width=0.32\textwidth]{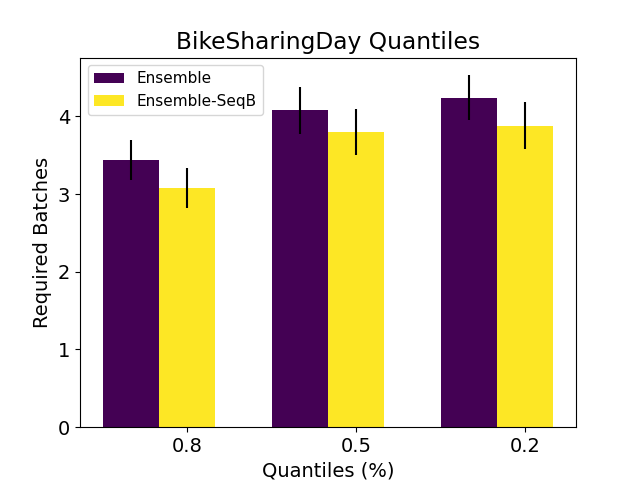}
    \includegraphics[width=0.32\textwidth]{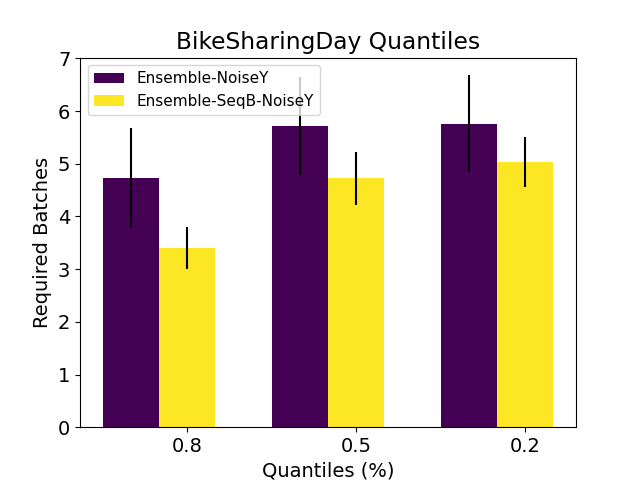}
        \includegraphics[width=0.32\textwidth]{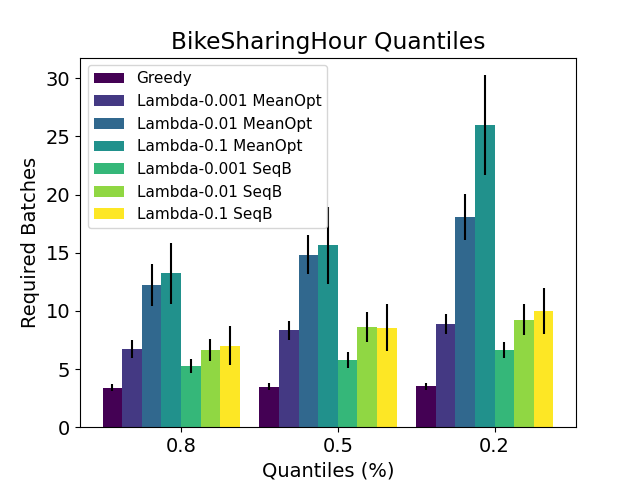}
    \includegraphics[width=0.32\textwidth]{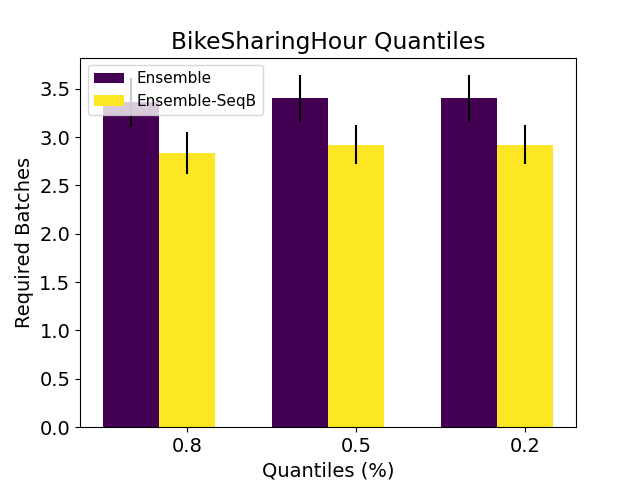}
    \includegraphics[width=0.32\textwidth]{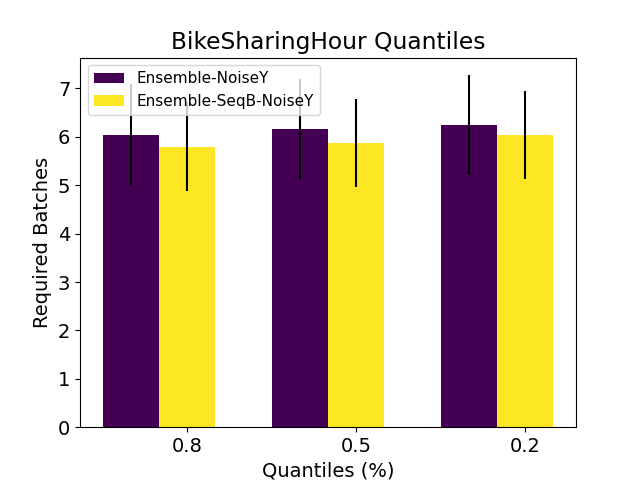}
    \caption{  \textbf{NN 100-10}. $\tau$-quantile batch convergence performance of (\textbf{left}) $\mathrm{MeanOpt}$ vs $\mathrm{SeqB}$, (\textbf{center}) $\mathrm{Ensemble}$ vs $\mathrm{Ensemble-SeqB}$ and (\textbf{right}) $\mathrm{Ensemble-NoiseY}$ vs $\mathrm{Ensemble-SeqB-NoiseY}$ on the $\mathrm{BikeSharingDay}$ and $\mathrm{BikeSharingHour}$ datasets.   }
    \label{fig:seq_results_uci}
    \vspace{-.25cm}
\end{figure}

More precisely in figures~\ref{fig:seq_results_simulated} and~\ref{fig:seq_results_simulated_100_10} we show that in the set of synthetic datasets adding a sequential in batch selection rule improves the performance of $\mathrm{OAE}$ across the board for (almost) all datasets and all methods $\mathrm{MeanOpt}$, $\mathrm{Ensemble}$ and $\mathrm{EnsembleNoiseY}$ and two neural architectures \textbf{NN 10-5} and \textbf{NN 100-10}. Similar gains are observed when incorporating a sequential batch selection rule to $\mathrm{OAE}$ in the $\mathrm{BikeSharingDay}$ and $\mathrm{BikeSharingHour}$ datasets when $\mathrm{F}$ corresponds to the class \textbf{NN 100-10} (see figure~\ref{fig:seq_results_uci}). Finally, we observe the performance of $\mathrm{OAE-Seq}$ either did not degrade or slightly improved that of $\mathrm{MeanOpt}$, $\mathrm{Ensemble}$ and $\mathrm{EnsembleNoiseY}$ in the $\mathrm{BlogFeedback}$ and the genetic perturbation datasets (see figures~\ref{fig:seq_results_uci_appendix},~\ref{fig:seq_results_bio_100_10} and~\ref{fig:seq_results_bio_10_5}).  We conclude that incorporating a sequential batch decision rule, although it may be computationally expensive, is a desirable strategy to adopt. It is interesting to note that in contrast with $\mathrm{DetD}$ the diversity induced by $\mathrm{SeqB}$ did not alleviate the subpar performance of $\mathrm{MeanOpt}$ in the set of genetic perturbation datasets. This can be explained by $\mathrm{SeqB}$ induces query diversity by fitting fake responses and it therefore may be limited by the expressiveness of $\Fcal$. The $\mathrm{DetD}$ instead injects diversity by using the geometry of the arm space and may bypass the limitations of exploration strategies induced by $\mathrm{F}$. Unfortunately $\mathrm{DetD}$ may result in suboptimal performance when $y_\star \in \Fcal$.

\section{Cellular proliferation phenotype}\label{section::proliferation_phenotype_appendix}

Let $\mathcal{G}$ be the list of genes present in CMAP also associated with proliferation phenotype according to the Seurat cell cycle signature, and let $x_{i,g}$ represent the level 5 gene expression of perturbation $a_i$ for gene $g \in \mathcal{G}$. We define the proliferation reward for perturbation $a_i$ as the average expression of the genes in $\mathcal{G}$,
\begin{equation*}
f_{*}^{\textrm{prolif}}(a_i) = \frac{1}{\left| \mathcal{G} \right|}\sum_{g \in \mathcal{G}} x_{i,g}
\end{equation*}
For convenience, $\mathcal{G} = \{ 
\mathrm{AURKB},
\mathrm{BIRC5},
\mathrm{CCNB2},
\mathrm{CCNE2},
\mathrm{CDC20},
\mathrm{CDC45},
\mathrm{CDK1},
\mathrm{CENPE}, \\
\mathrm{GMNN}, 
\mathrm{KIF2C}, 
\mathrm{LBR},
\mathrm{NCAPD2},
\mathrm{NUSAP1},
\mathrm{PCNA}, 
\mathrm{PSRC1},
\mathrm{RFC2},
\mathrm{RPA2},
\mathrm{SMC4}, \\
\mathrm{STMN1},
\mathrm{TOP2A},
\mathrm{UBE2C},
\mathrm{UBR7},
\mathrm{USP1}
\} $.

\section{Further Experiments}\label{appendix::further_experiments}

\subsection{Synthetic Datasets}\label{section::synthetic_datasets_appendix}

\begin{figure}[H]
    \centering
    \vspace{-0.05in}
        \includegraphics[width=0.33\textwidth]{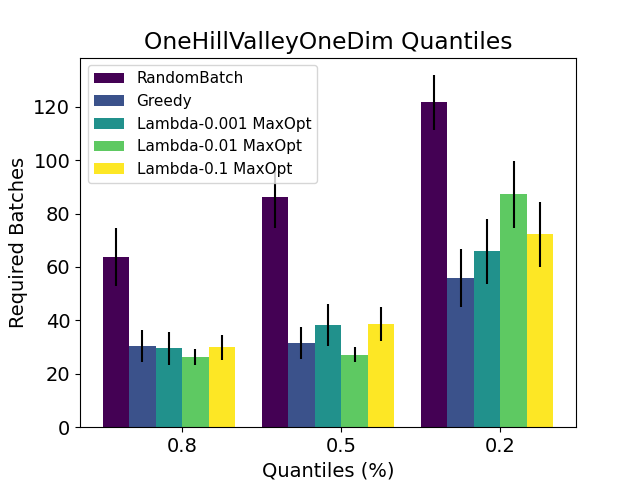}
    \includegraphics[width=0.33\textwidth]{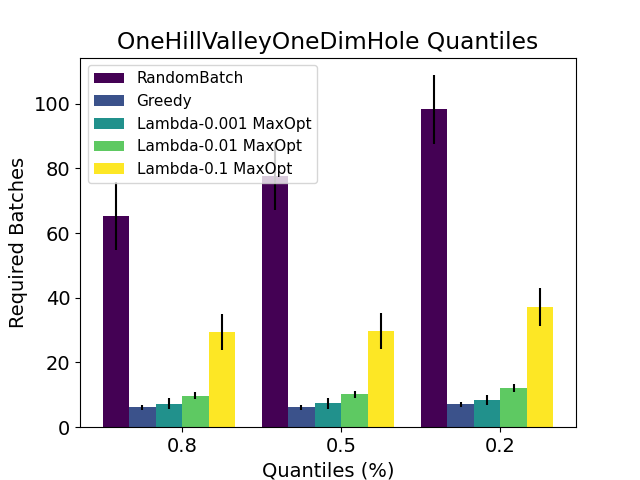}\\
        \includegraphics[width=0.33\textwidth]{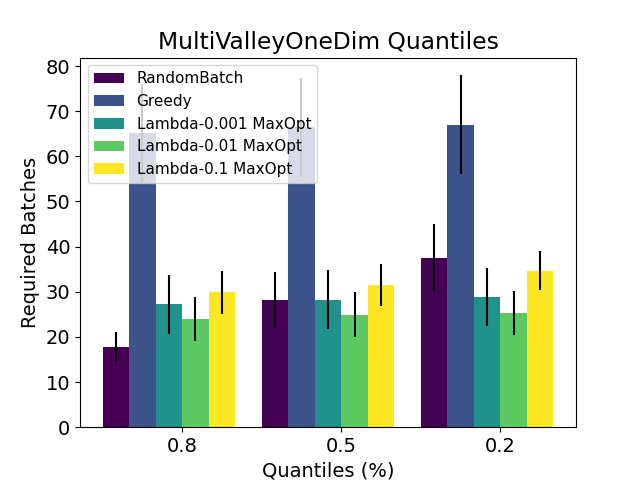}
    \includegraphics[width=0.33\textwidth]{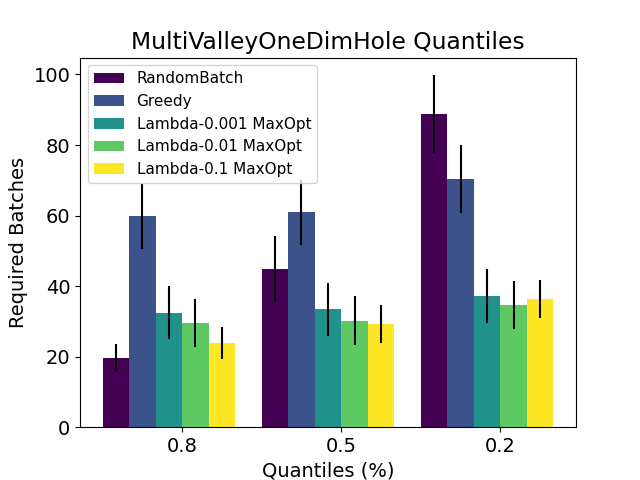}
    \vspace{-0.1in}
    \caption{ \textbf{NN 10-5}. $\mathrm{MaxOptimism}$ comparison vs $\mathrm{RandomBatch}$ over the suite of synthetic datasets. }
    \label{fig:synthetic_max_opt_10_5}
    \vspace{-0.1in}
\end{figure}

\begin{figure}[H]
    \centering
    \vspace{-0.05in}
        \includegraphics[width=0.33\textwidth]{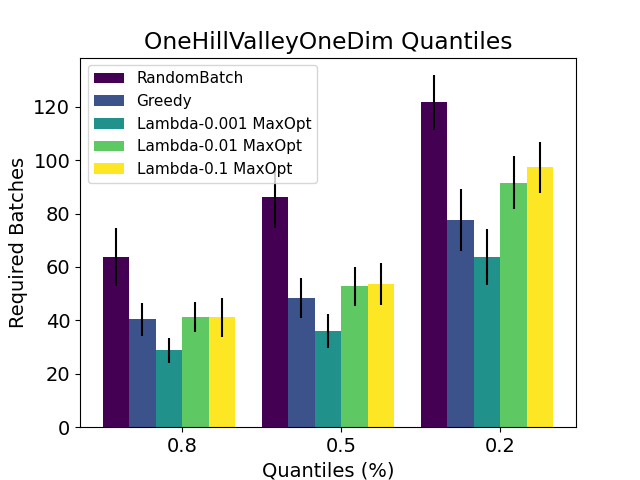}
    \includegraphics[width=0.33\textwidth]{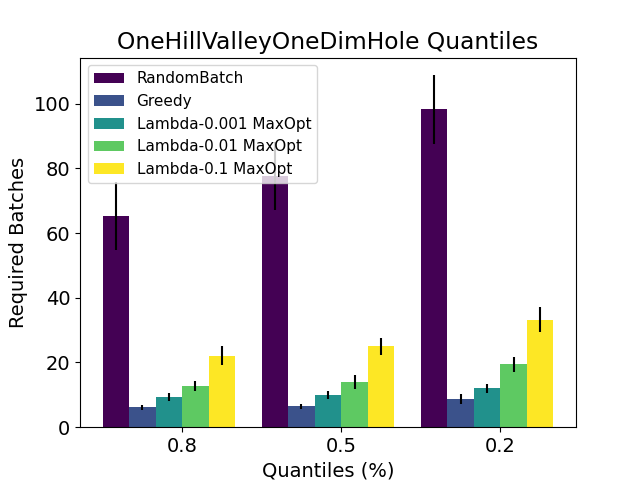}\\
        \includegraphics[width=0.33\textwidth]{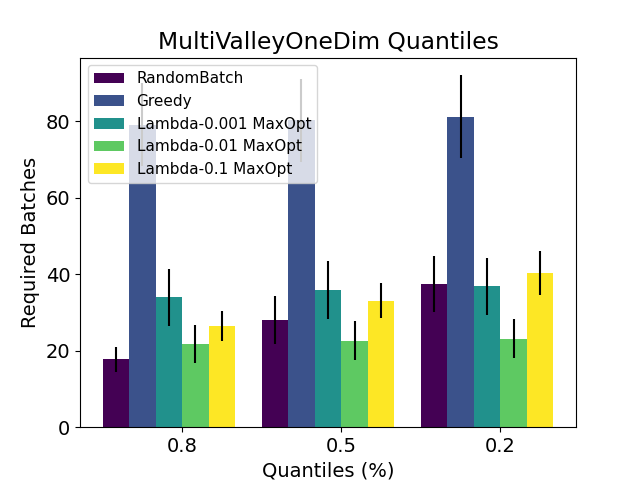}
    \includegraphics[width=0.33\textwidth]{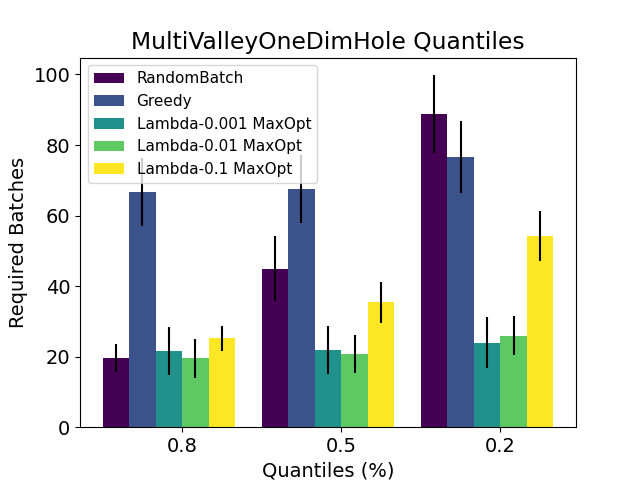}
    \vspace{-0.1in}
    \caption{ \textbf{NN 100-10}. $\mathrm{MaxOptimism}$ comparison vs $\mathrm{RandomBatch}$ over the suite of synthetic datasets.  }
    \label{fig:synthetic_max_opt_100_10}
    \vspace{-0.1in}
\end{figure}

\begin{figure}[H]
    \centering
    \vspace{-0.05in}
        \includegraphics[width=0.33\textwidth]{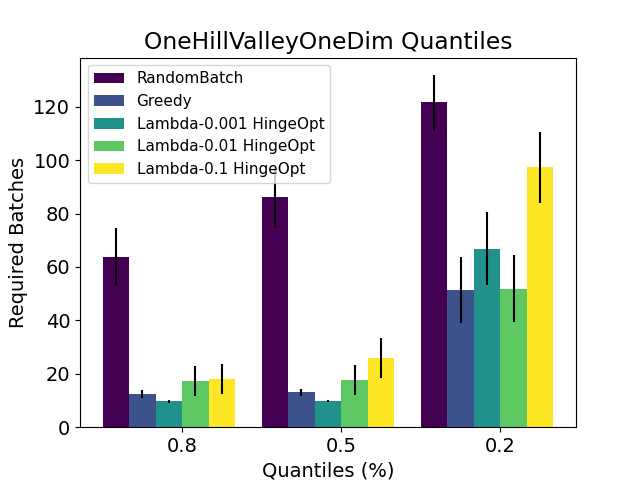}
    \includegraphics[width=0.33\textwidth]{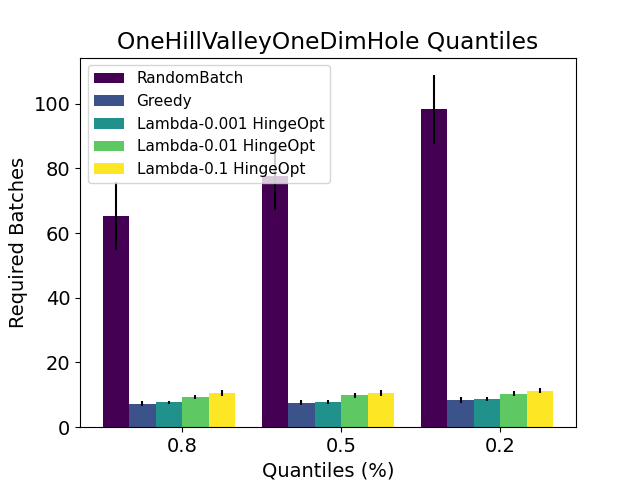}\\
        \includegraphics[width=0.33\textwidth]{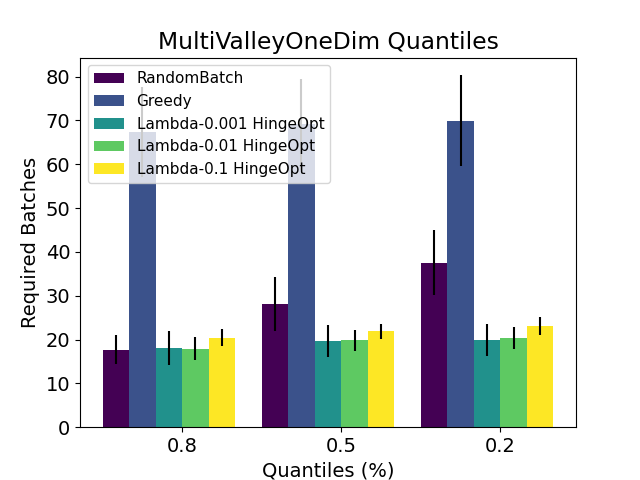}
    \includegraphics[width=0.33\textwidth]{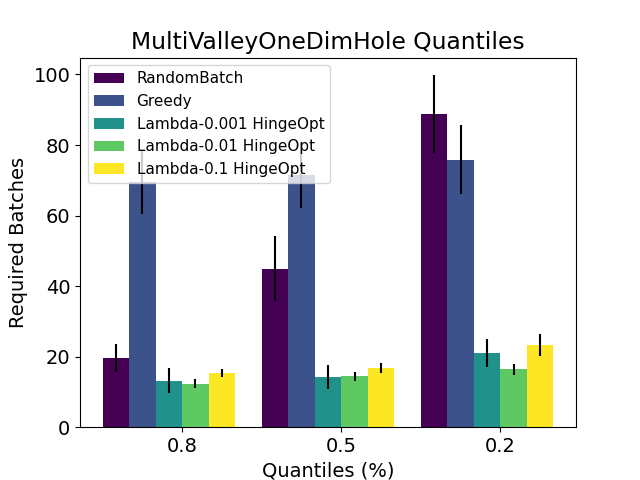}
    \vspace{-0.1in}
    \caption{ \textbf{NN 10-5}. $\mathrm{HingePNorm}$ comparison vs $\mathrm{RandomBatch}$ over the suite of synthetic datasets. }
    \label{fig:synthetic_hinge_opt_10_5}
    \vspace{-0.1in}
\end{figure}

\begin{figure}[H]
    \centering
    \vspace{-0.05in}
        \includegraphics[width=0.33\textwidth]{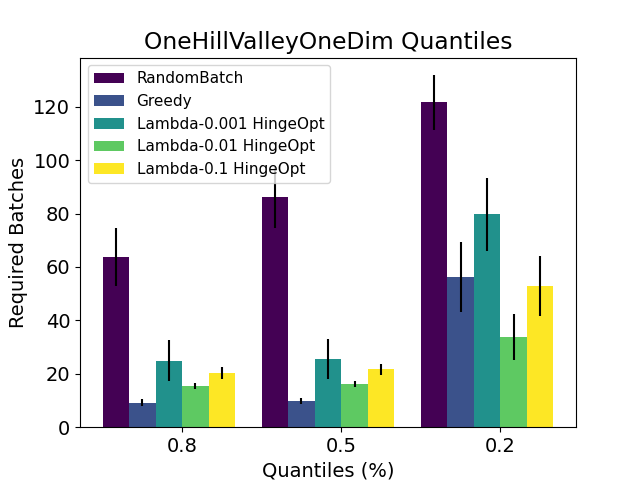}
    \includegraphics[width=0.33\textwidth]{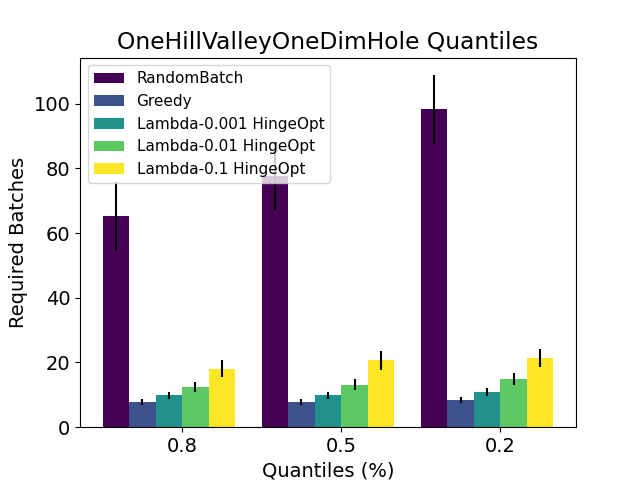}\\
        \includegraphics[width=0.33\textwidth]{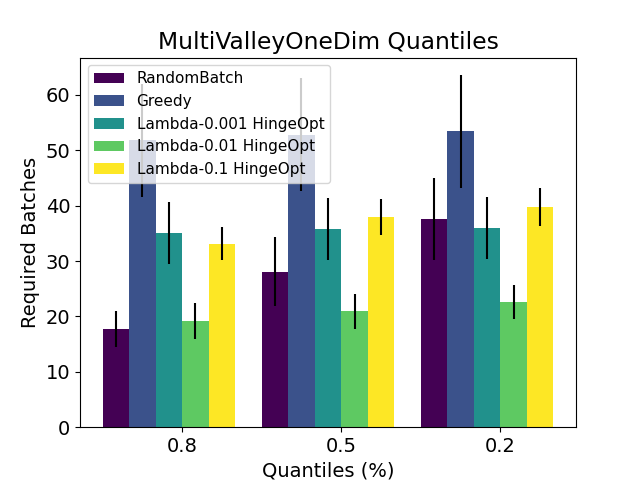}
    \includegraphics[width=0.33\textwidth]{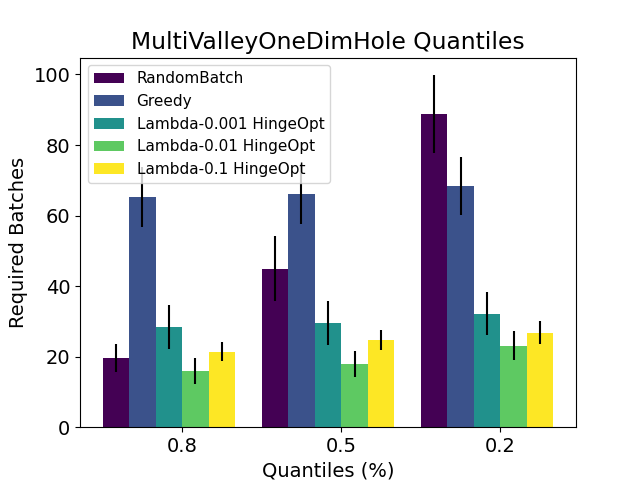}
    \vspace{-0.1in}
    \caption{ \textbf{NN 100-10}. $\mathrm{HingePNorm}$ comparison vs $\mathrm{RandomBatch}$ over the suite of synthetic datasets.}
    \label{fig:synthetic_hinge_opt_100_10}
    \vspace{-0.1in}
\end{figure}

\subsection{Regression Fitted Datasets}\label{section::regression_fitted_datasets_appendix}

\begin{figure}[H]
    \centering
    \vspace{-0.05in}
        \includegraphics[width=0.33\textwidth]{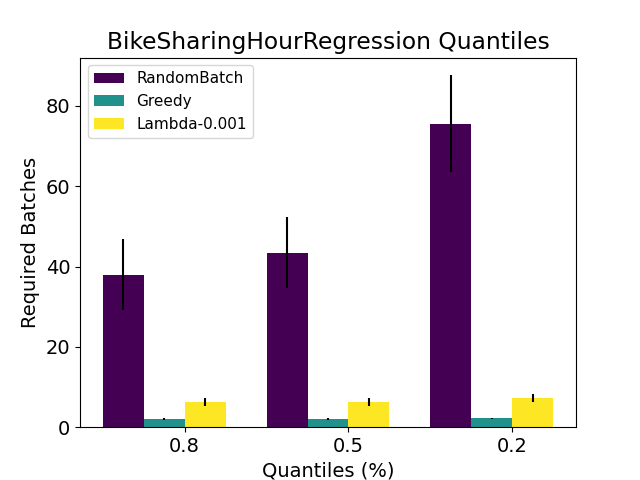}
    \includegraphics[width=0.33\textwidth]{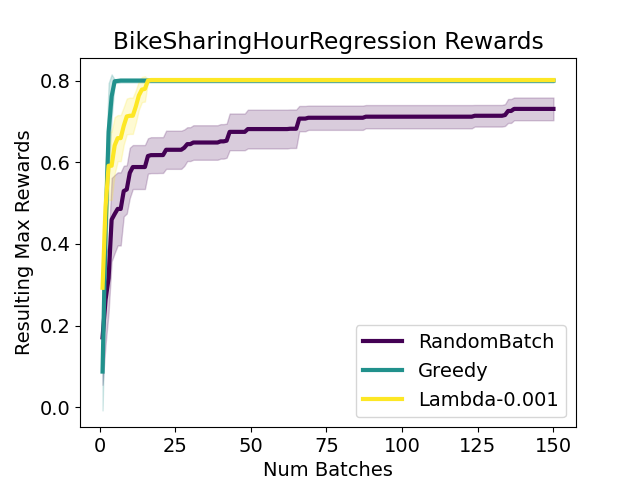}\\
        \includegraphics[width=0.33\textwidth]{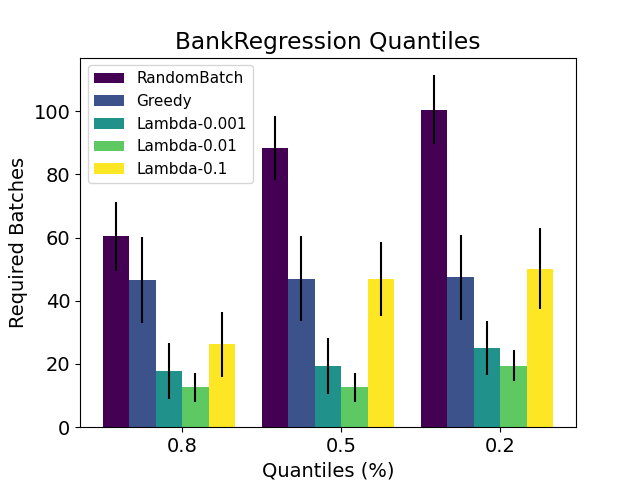}
    \includegraphics[width=0.33\textwidth]{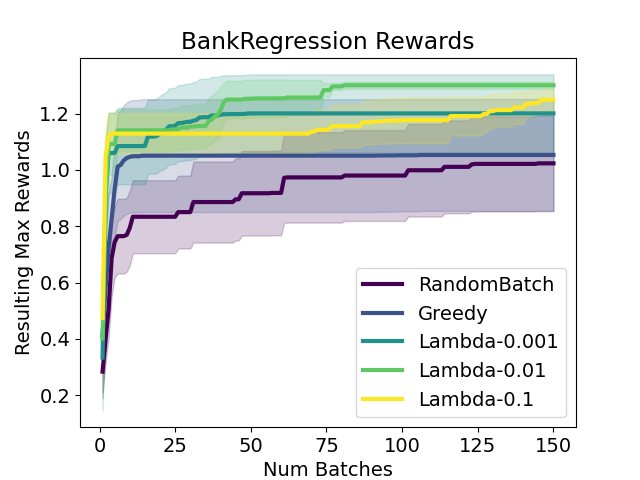}
    \vspace{-0.1in}
    \caption{ \textbf{NN 100-10}. $\mathrm{BikeSharingHour}$ and $\mathrm{Bank}$ Regression Fitted Datasets. Our algorithms perform substantially better than $\mathrm{RandomBatch}$. Optimistic approaches are better than $\mathrm{Greedy}$ in the Bank dataset. }
    \label{fig:public_regression_fitted_original_appendix}
    \vspace{-0.1in}
\end{figure}

\subsection{UCI public datasets}\label{section::uci_appendix}

\begin{figure}[H]
    \centering
    \vspace{-0.05in}
        \includegraphics[width=0.33\textwidth]{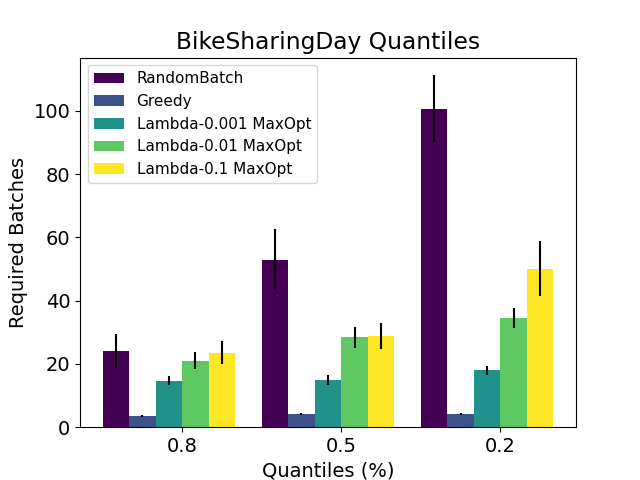}
        \includegraphics[width=0.33\textwidth]{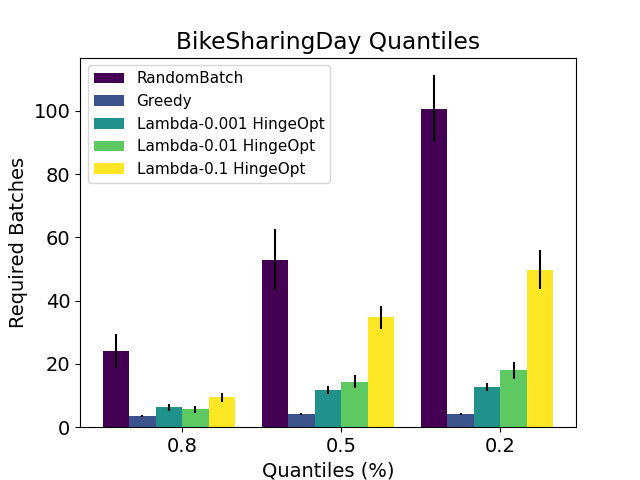}\\
    \includegraphics[width=0.33\textwidth]{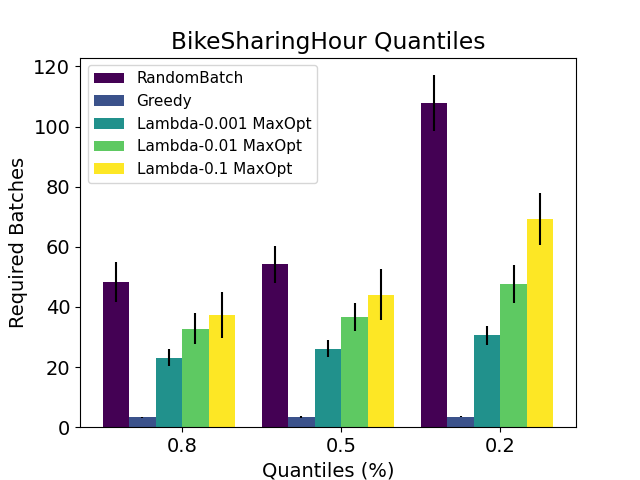}
    \includegraphics[width=0.33\textwidth]{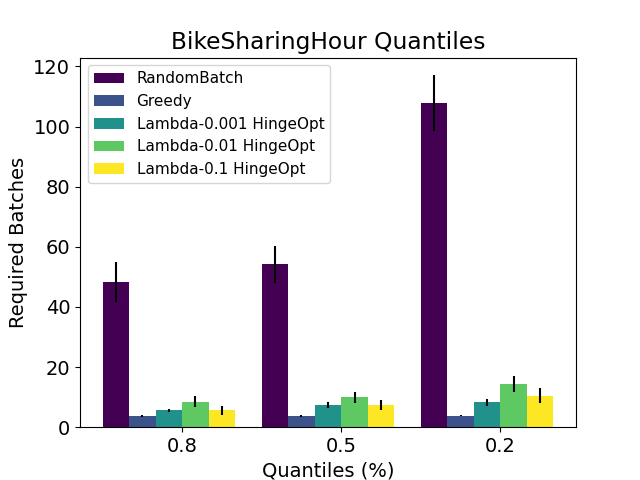}\\
        \includegraphics[width=0.33\textwidth]{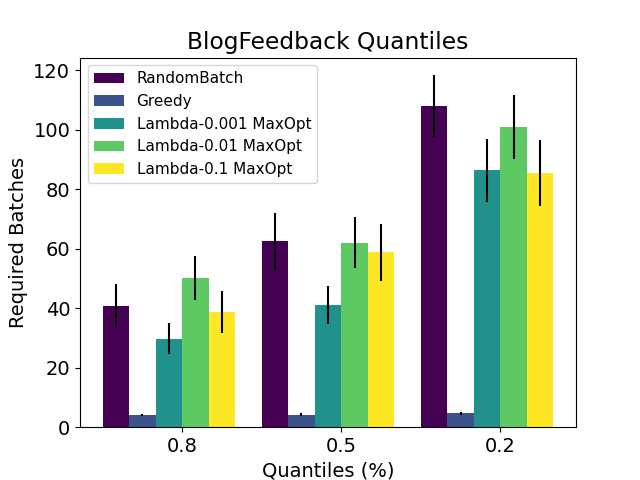}
          \includegraphics[width=0.33\textwidth]{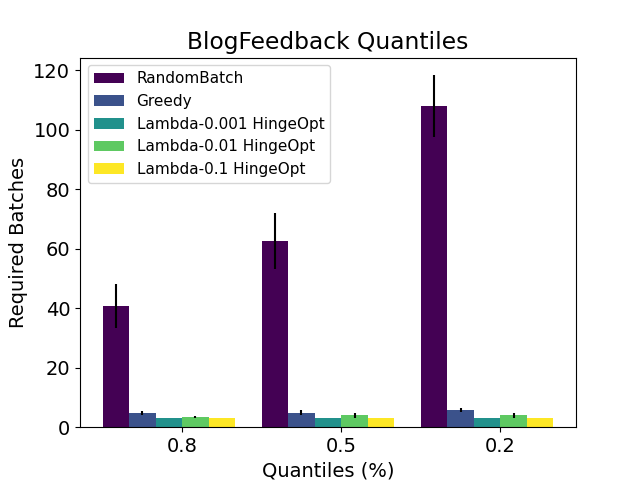}
    \vspace{-0.1in}
    \caption{ \textbf{NN 100-10}. $\mathrm{MaxOptimism}$ and $\mathrm{HingePNormOptimism}$ comparison vs $\mathrm{RandomBatch}$ in the $\mathrm{BikeSharingDay}$, $\mathrm{BikeSharingHour}$ and $\mathrm{BlogFeedback}$ datasets. }
    \label{fig:uci_max_hinge_opt_100_10}
    \vspace{-0.1in}
\end{figure}

\begin{figure}[tb]
    \centering
    \vspace{-0.05in}
        \includegraphics[width=0.32\textwidth]{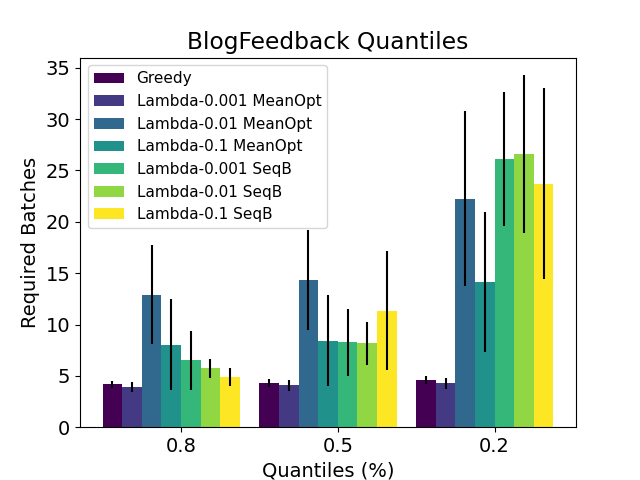}
    \includegraphics[width=0.32\textwidth]{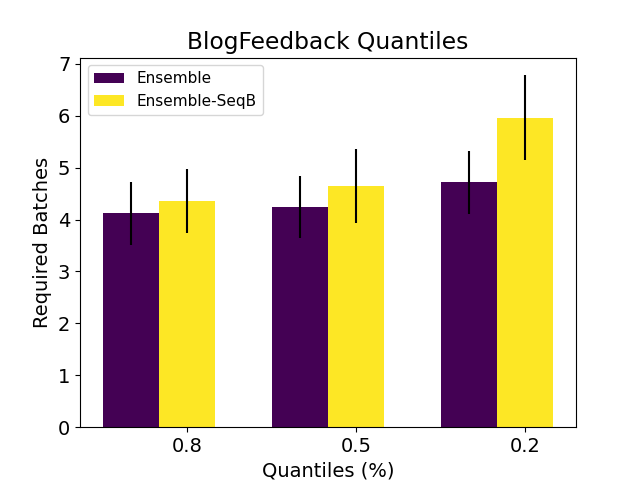}
    \includegraphics[width=0.32\textwidth]{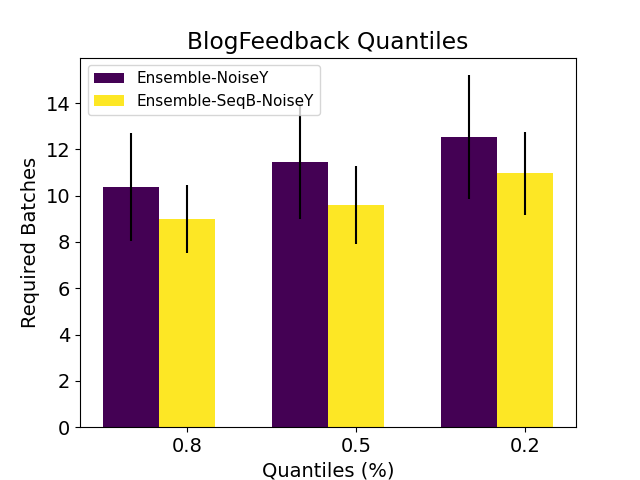}
    \caption{ \textbf{NN 100-10}. $\tau$-quantile batch convergence performance of (\textbf{left}) $\mathrm{MeanOpt}$ vs $\mathrm{SeqB}$, (\textbf{center}) $\mathrm{Ensemble}$ vs $\mathrm{Ensemble-SeqB}$ and (\textbf{right}) $\mathrm{Ensemble-NoiseY}$ vs $\mathrm{Ensemble-SeqB-NoiseY}$ on the $\mathrm{BlogFeedback}$ dataset. }
    \label{fig:seq_results_uci_appendix}
    \vspace{-.25cm}
\end{figure}

\subsection{Transfer Learning Biological Datasets}\label{section::transfer_learning_bio_appendix}
\begin{figure}[H]
    \centering
    \vspace{-0.05in}
        \includegraphics[width=0.33\textwidth]{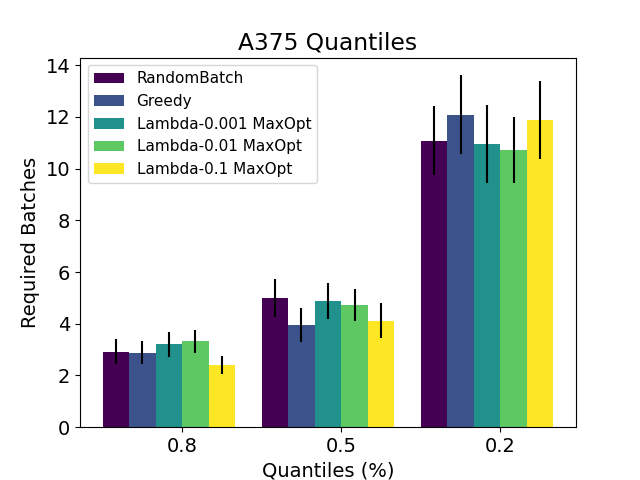}
        \includegraphics[width=0.33\textwidth]{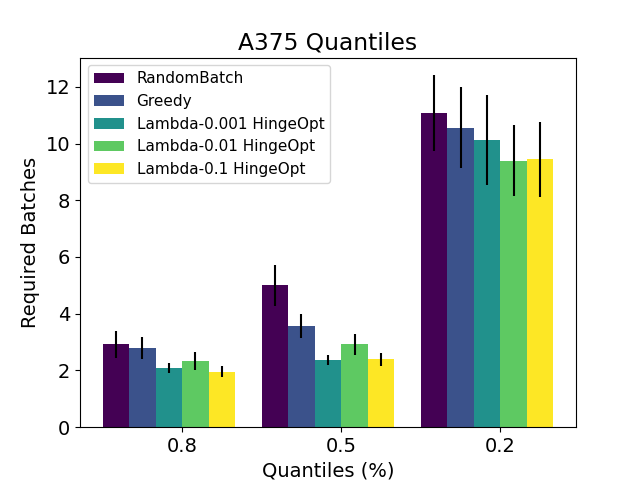}\\
    \includegraphics[width=0.33\textwidth]{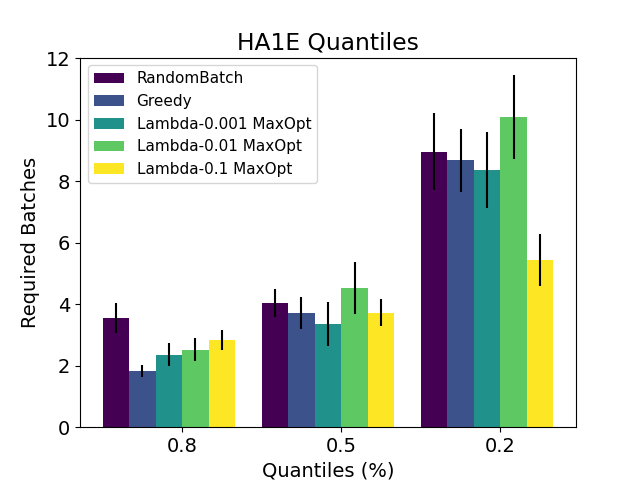}
    \includegraphics[width=0.33\textwidth]{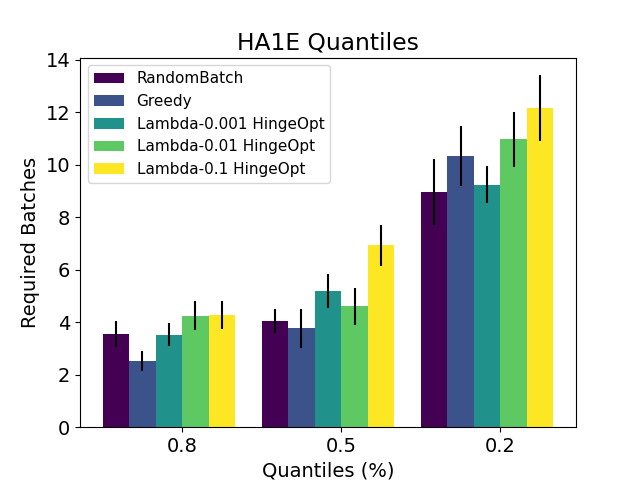}\\
        \includegraphics[width=0.33\textwidth]{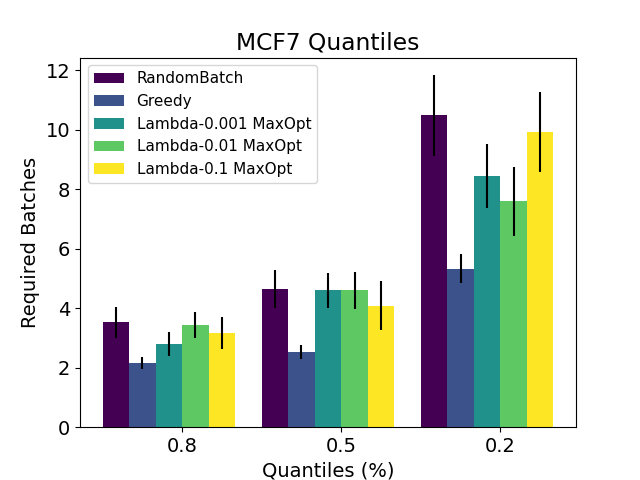}
          \includegraphics[width=0.33\textwidth]{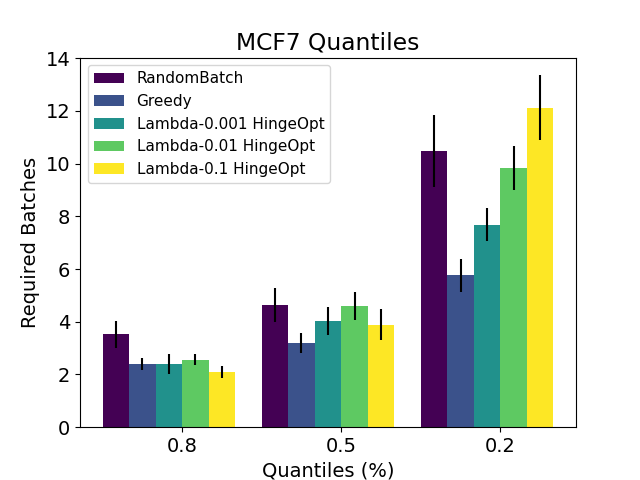} \\
          \includegraphics[width=0.33\textwidth]{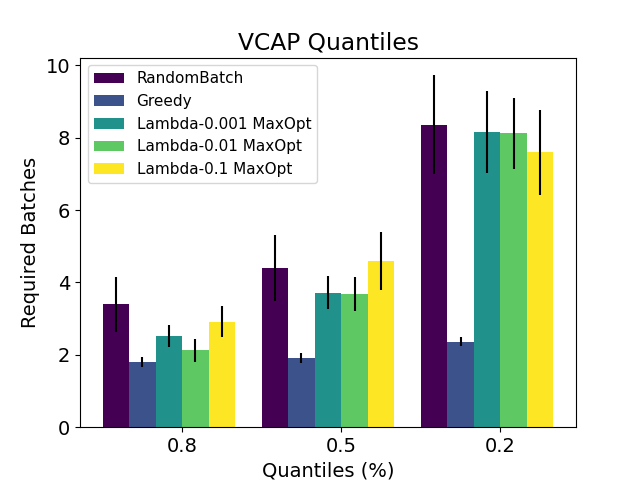}
          \includegraphics[width=0.33\textwidth]{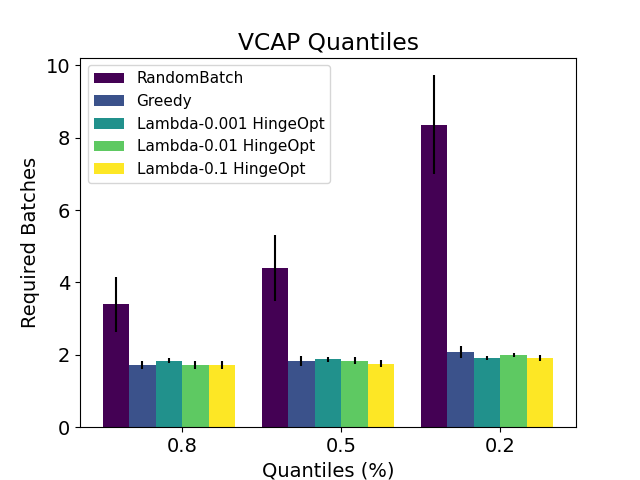}
    \vspace{-0.1in}
    \caption{ \textbf{NN 10-5}. $\mathrm{MaxOptimism}$ and $\mathrm{HingePNormOptimism}$ comparison vs $\mathrm{RandomBatch}$. }
    \label{fig:bio_max_hinge_opt_10_5}
    \vspace{-0.1in}
\end{figure}

\begin{figure}[H]
    \centering
    \vspace{-0.05in}
        \includegraphics[width=0.33\textwidth]{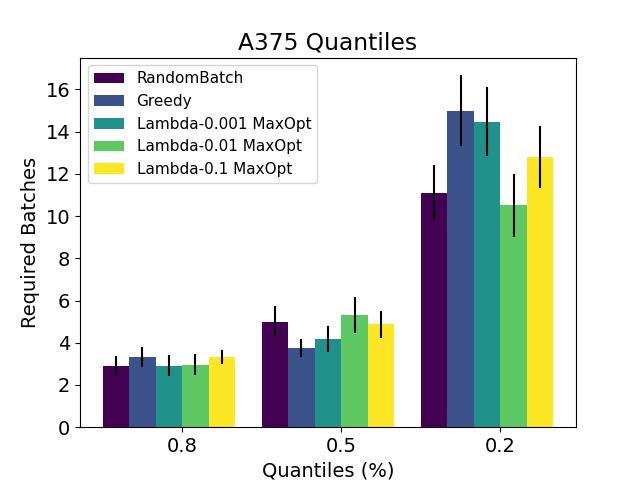}
        \includegraphics[width=0.33\textwidth]{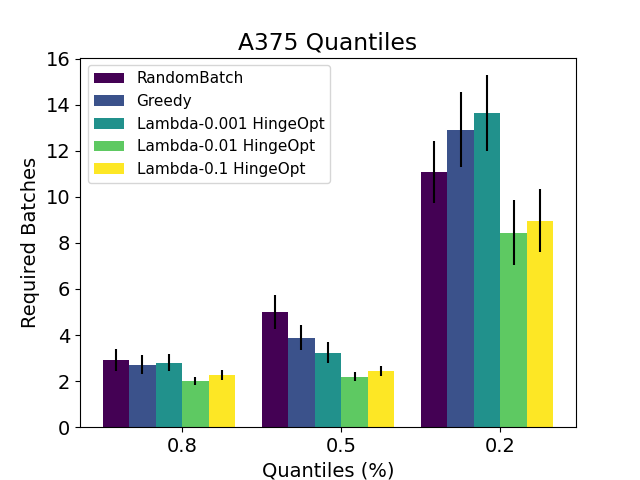}\\
    \includegraphics[width=0.33\textwidth]{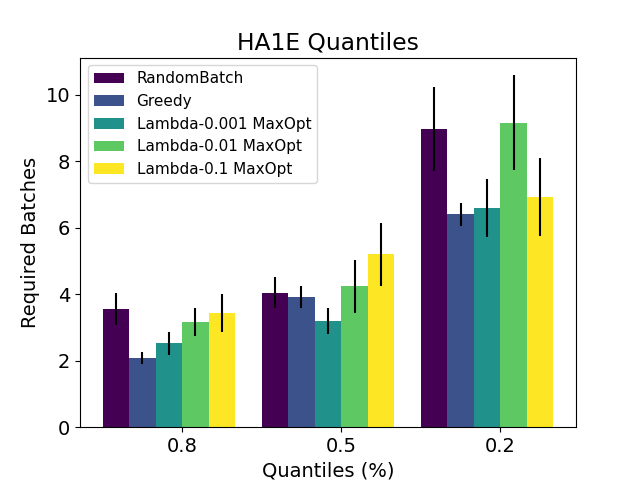}
    \includegraphics[width=0.33\textwidth]{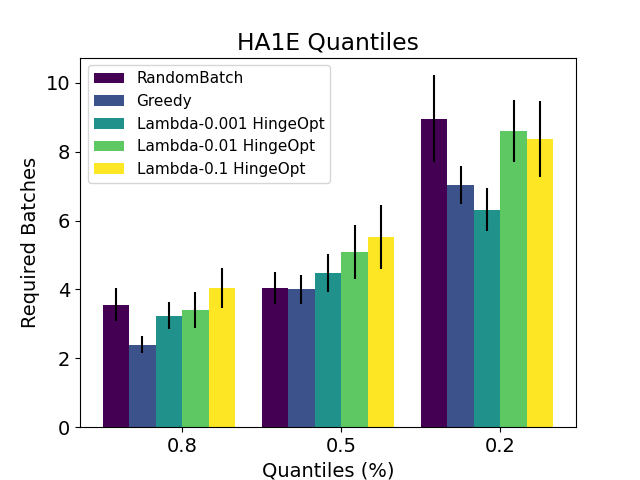}\\
        \includegraphics[width=0.33\textwidth]{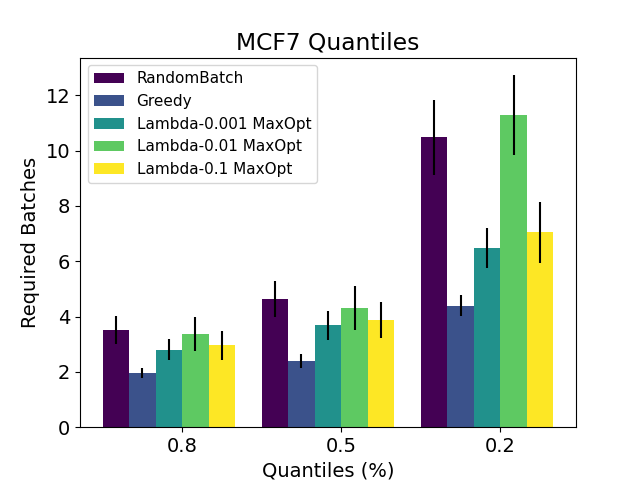}
          \includegraphics[width=0.33\textwidth]{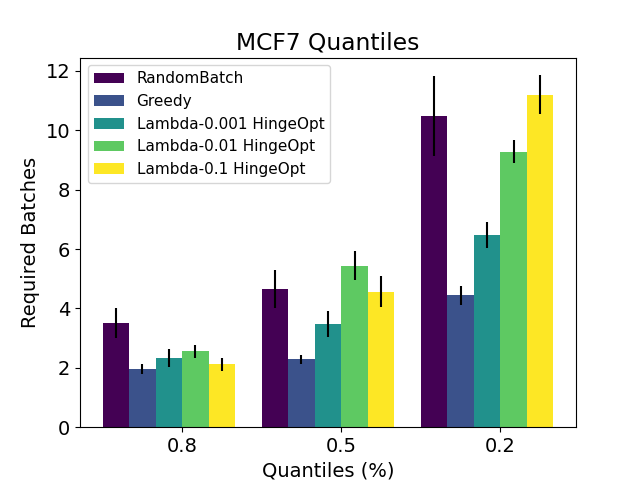}\\
          \includegraphics[width=0.33\textwidth]{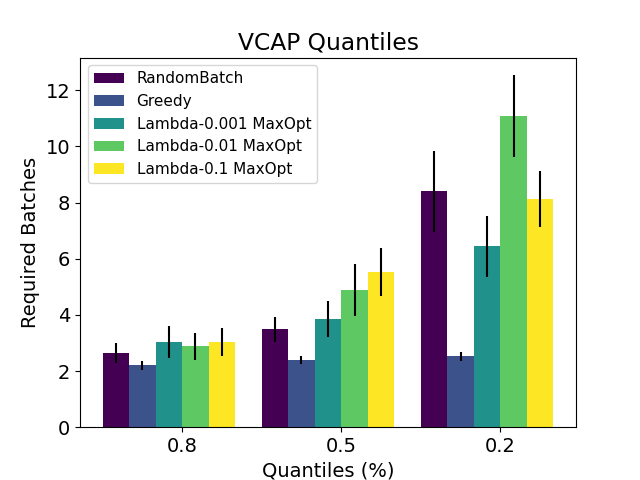}
          \includegraphics[width=0.33\textwidth]{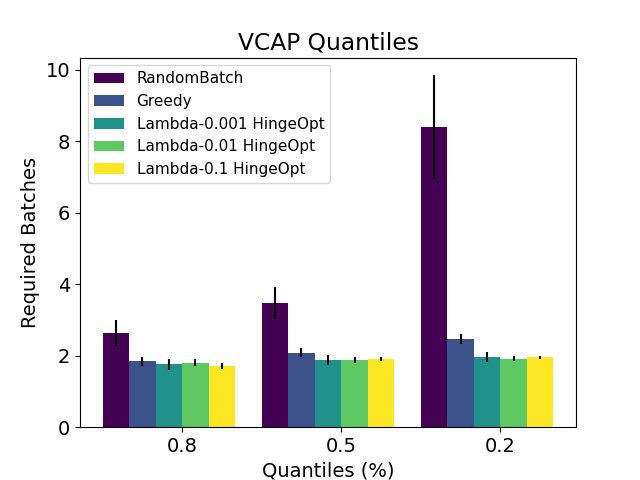}
    \vspace{-0.1in}
    \caption{\textbf{NN 100-10}. $\mathrm{MaxOptimism}$ and $\mathrm{HingePNormOptimism}$ comparison vs $\mathrm{RandomBatch}$. }
    \label{fig:bio_max_hinge_opt_100_10}
    \vspace{-0.1in}
\end{figure}

\begin{figure}%
    \centering
    \vspace{-0.05in}
        \includegraphics[width=0.28\textwidth]{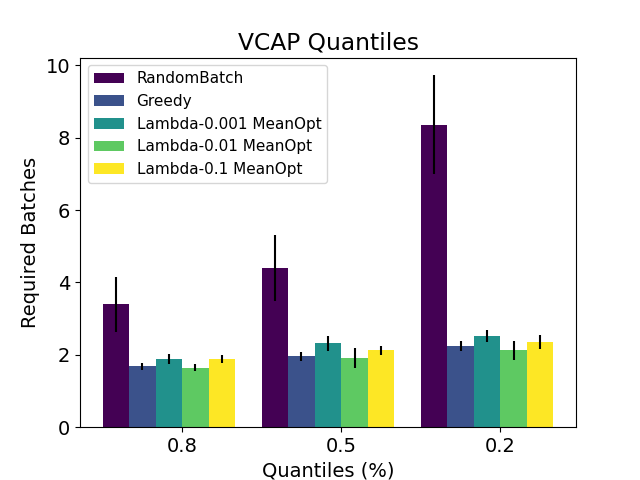}
    \includegraphics[width=0.28\textwidth]{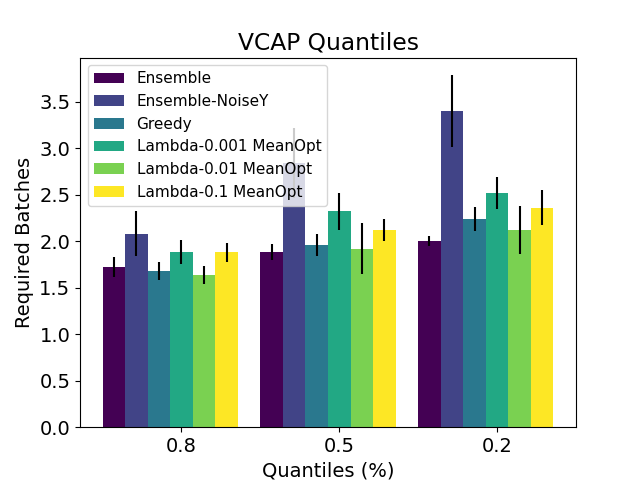} \\
     \includegraphics[width=0.28\textwidth]{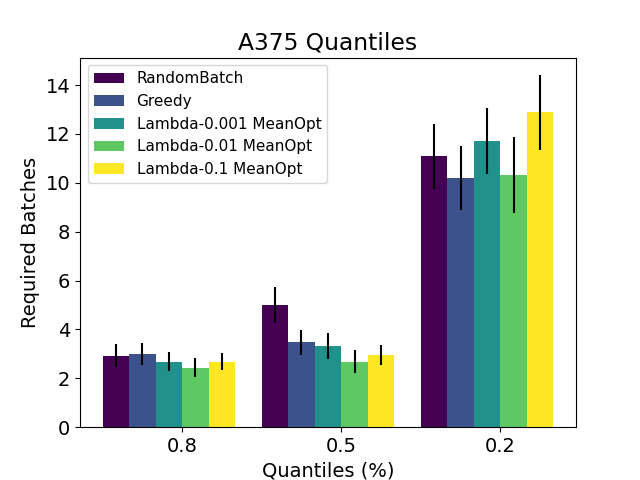}
    \includegraphics[width=0.28\textwidth]{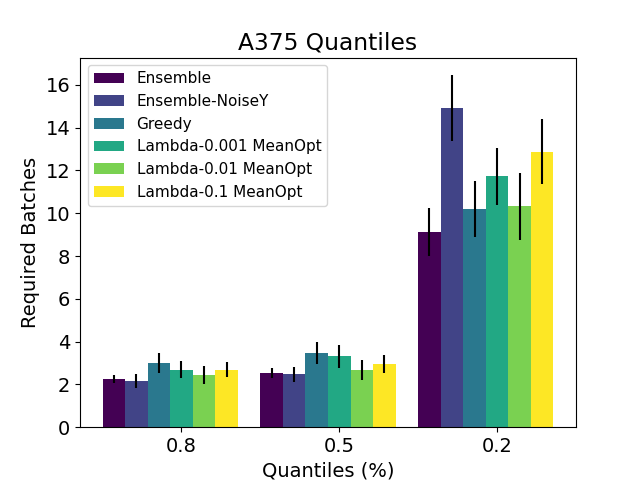} \\
     \includegraphics[width=0.28\textwidth]{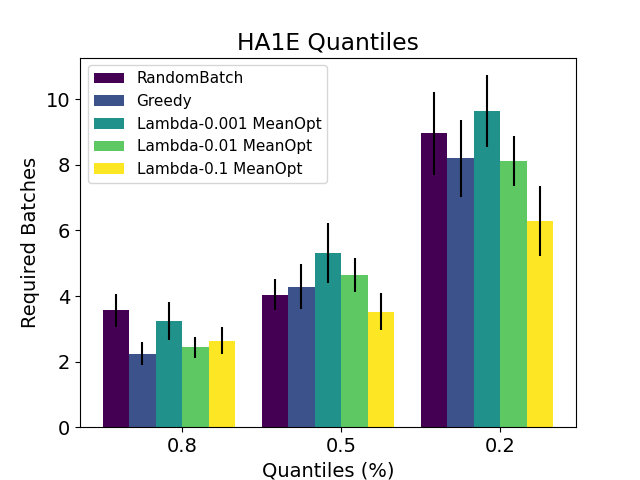}
    \includegraphics[width=0.28\textwidth]{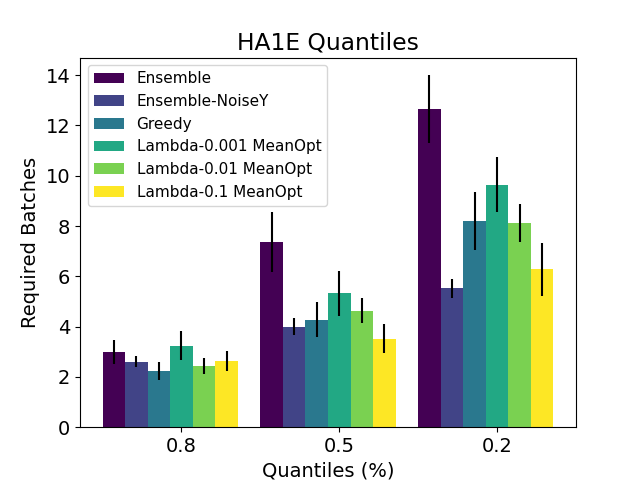} \\
     \includegraphics[width=0.28\textwidth]{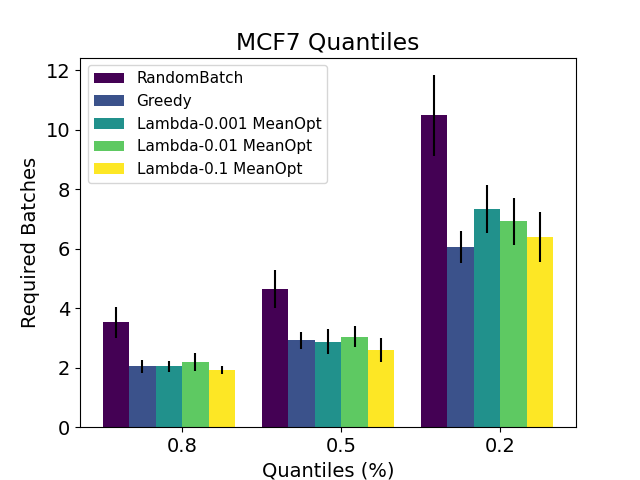}
    \includegraphics[width=0.28\textwidth]{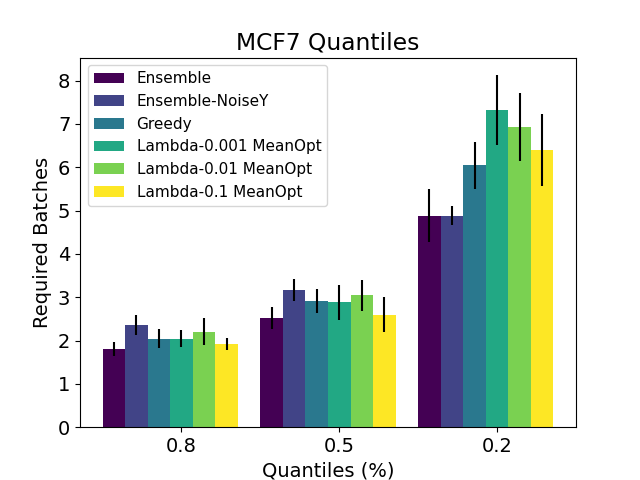}
    \vspace{-0.1in}
    \caption{ \textbf{NN 10-5}. $\tau$-quantile batch convergence of $\mathrm{MeanOpt}$ (\textbf{left}), $\mathrm{Ensemble}$ and $\mathrm{EnsembleNoiseY}$ (\textbf{right}) on genetic perturbation datasets. }
    \label{fig:bio_oae_10_5}
    \vspace{-0.2in}
\end{figure}%

\begin{figure}[tb]
    \centering
    \vspace{-0.05in}
    \includegraphics[width=0.32\textwidth]{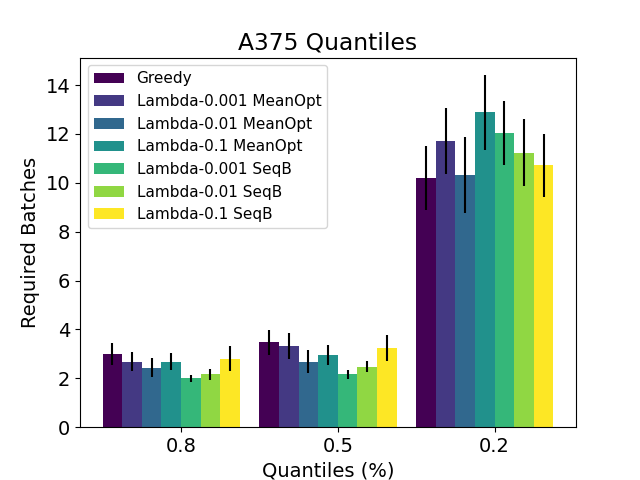}
    \includegraphics[width=0.32\textwidth]{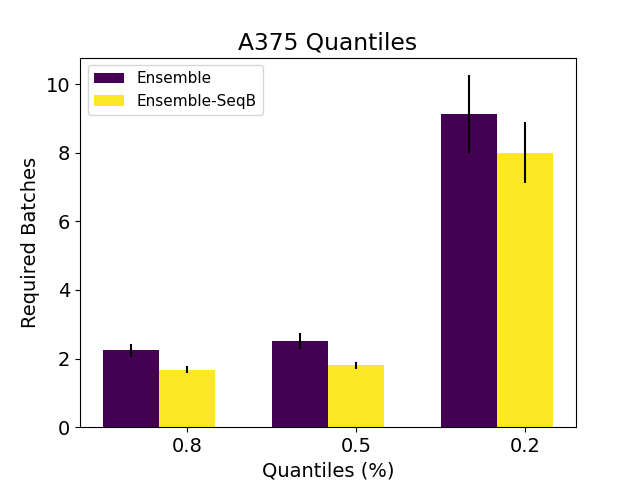}
    \includegraphics[width=0.32\textwidth]{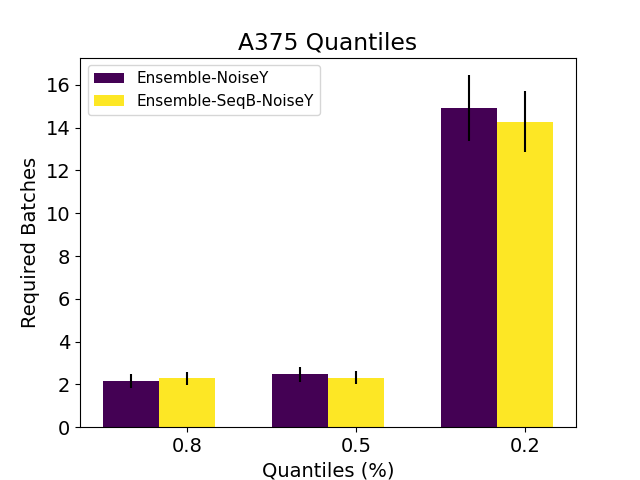}
    \includegraphics[width=0.32\textwidth]{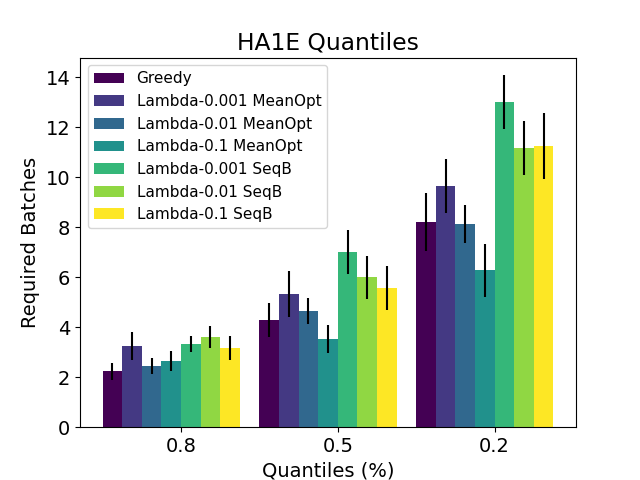}
    \includegraphics[width=0.32\textwidth]{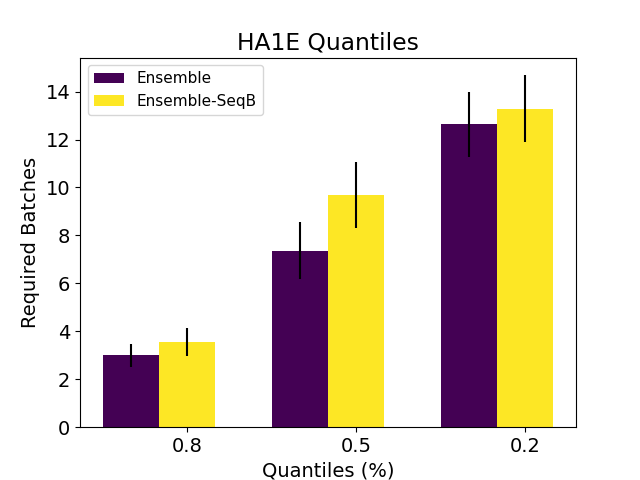}
    \includegraphics[width=0.32\textwidth]{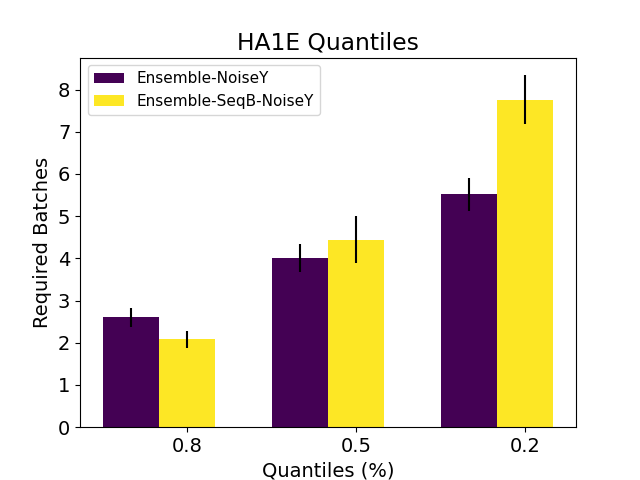}
        \includegraphics[width=0.32\textwidth]{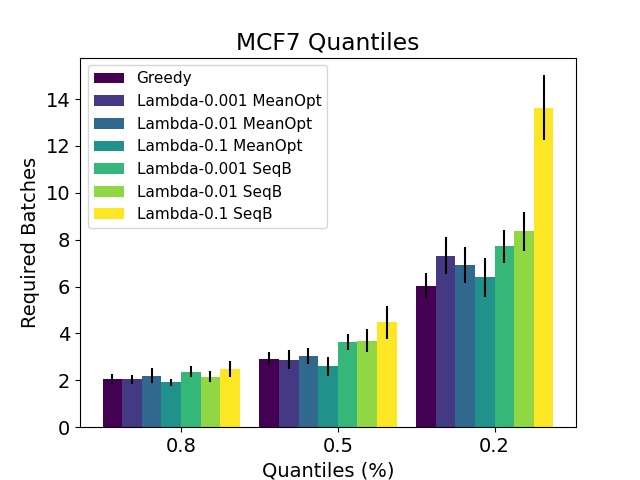}
    \includegraphics[width=0.32\textwidth]{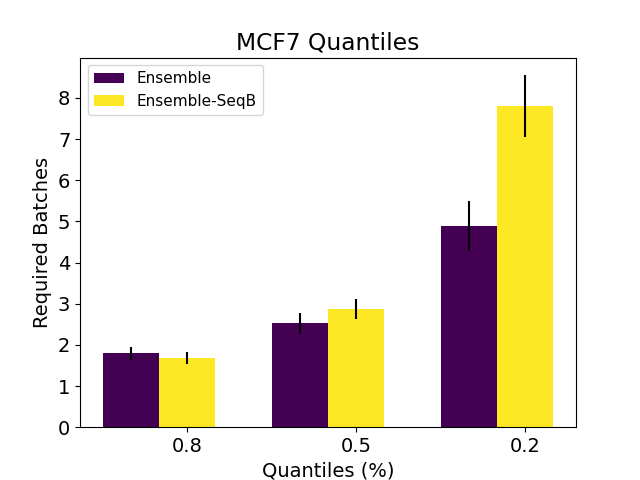}
    \includegraphics[width=0.32\textwidth]{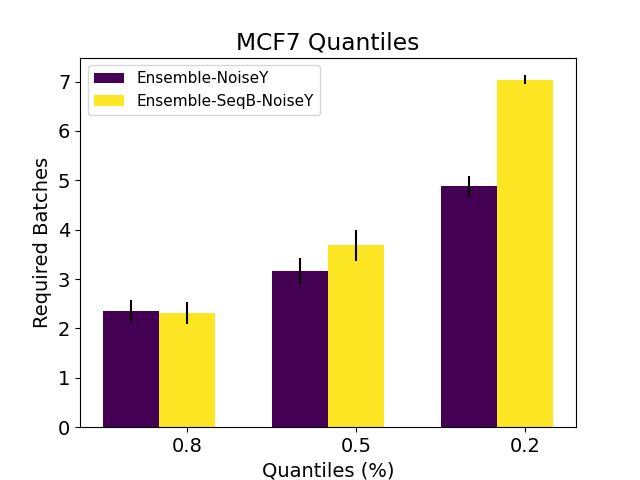}
            \includegraphics[width=0.32\textwidth]{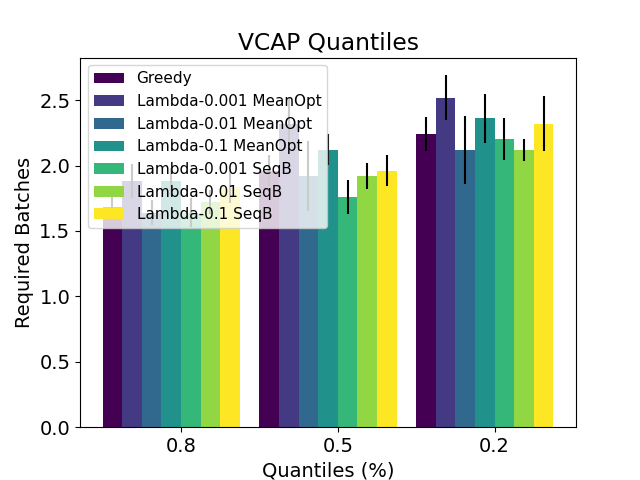}
    \includegraphics[width=0.32\textwidth]{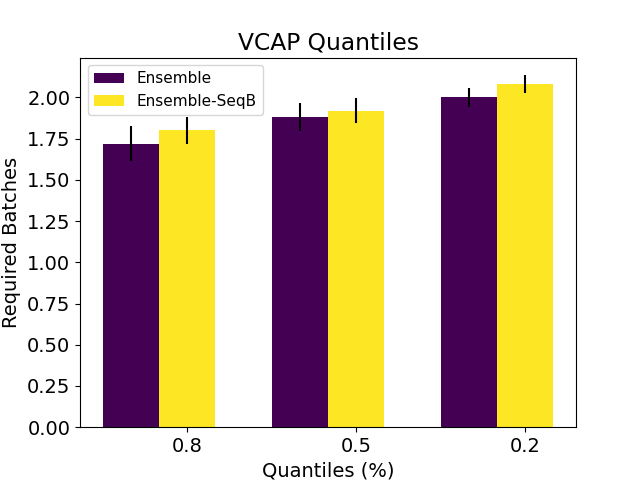}
    \includegraphics[width=0.32\textwidth]{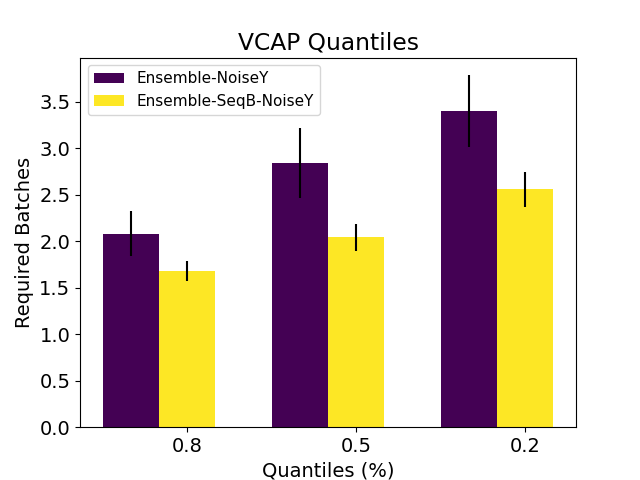}
    \caption{ \textbf{NN 10-5}. $\tau$-quantile batch convergence performance of (\textbf{left}) $\mathrm{MeanOpt}$ vs $\mathrm{SeqB}$, (\textbf{center}) $\mathrm{Ensemble}$ vs $\mathrm{Ensemble-SeqB}$ and (\textbf{right}) $\mathrm{Ensemble-NoiseY}$ vs $\mathrm{Ensemble-SeqB-NoiseY}$ on the genetic perturbation datasets.   }
    \label{fig:seq_results_bio_10_5}
    \vspace{-.25cm}
\end{figure}

\end{document}